\newtheorem{question}{Question}
\renewcommand{\deg}{\mathsf{deg}}
\def\denseformat{
\setlength{\textheight}{9.5in}
\setlength{\textwidth}{6.9in}
\setlength{\evensidemargin}{-0.3in}
\setlength{\oddsidemargin}{-0.3in}
\setlength{\headsep}{10pt}
\setlength{\topmargin}{-0.44in}
\setlength{\columnsep}{0.375in}
\setlength{\itemsep}{0pt}
}
\newtheorem{theorem}{Theorem}[section]
\newtheorem{claim}[theorem]{Claim}
\newtheorem{lemma}[theorem]{Lemma}
\newtheorem{corollary}[theorem]{Corollary}
\newtheorem{fact}[theorem]{Fact}
\newtheorem{observation}[theorem]{Observation}
\def\boldhead#1:{\par\vskip 7pt\noindent{\bf #1:}\hskip 10pt}
\def\ithead#1:{\par\vskip 7pt\noindent{\it #1:}\hskip 10pt}
\def\ceil#1{\lceil #1\rceil}
\def\inline#1:{\par\vskip 7pt\noindent{\bf #1:}\hskip 10pt}
\def\midinline#1:{\par\noindent{\bf #1:}\hskip 10pt}
\def\dnsinline#1:{\par\vskip -7pt\noindent{\bf #1:}\hskip 10pt}
\def\ddnsinline#1:{\newline{\bf #1:}\hskip 10pt}
\def\largeinline#1:{\par\vskip 7pt\noindent{\large\bf #1:}\hskip 10pt}
\long\def\comment #1\commentend{}
\long\def\commhide #1\commhideend{}
\long\def\commfull #1\commend{#1}
\long\def\commabs #1\commenda{}
\long\def\commtim #1\commendt{#1}
\long\def\commb #1\commbend{}
\long\def\commedit #1\commeditend{} 
\long\def\commB #1\commBend{}       
\long\def\commex #1\commexend{}     
\long\def\commsiena #1\commsienaend{}  
\long\def\commBI #1\commBIend{}  
\long\def\CProof #1\CQED{}
\def\blackslug{\hbox{\hskip 1pt \vrule width 4pt height 8pt
    depth 1.5pt \hskip 1pt}}
\def\QED{\quad\blackslug\lower 8.5pt\null\par}
\def\inQED{\quad\quad\blackslug}
\long\def\PPP#1{\noindent{\bf Proof:}{ #1}{\quad\blackslug\lower 8.5pt\null}}
\long\def\denspar #1\densend
\newif\ifnotesw\noteswtrue
\ifnotesw\marginpar[\hfill\(\top\)]{\(\top\)}\fi}%
\ifnotesw\marginpar[\hfill\(\bot\)]{\(\bot\)}\fi}
\newcommand{\mnote}[1]%
    {\ifnotesw\marginpar%
        [{\scriptsize\it\begin{minipage}[t]{\marginparwidth}
        \raggedleft#1%
                        \end{minipage}}]%
        {\scriptsize\it\begin{minipage}[t]{\marginparwidth}
        \raggedright#1%
                        \end{minipage}}%
    \fi}
\def\cH{{\cal H}}
\def\MathF{\hbox{\rm I\kern-2pt F}}
\def\MathP{\hbox{\rm I\kern-2pt P}}
\def\MathR{\hbox{\rm I\kern-2pt R}}
\def\MathZ{\hbox{\sf Z\kern-4pt Z}}
\def\MathN{\hbox{\rm I\kern-2pt I\kern-3.1pt N}}
\def\MathC{\hbox{\rm \kern0.7pt\raise0.8pt\hbox{\footnotesize I}
\kern-4.2pt C}}
\def\MathQ{\hbox{\rm I\kern-6pt Q}}
\newsavebox{\ttop}\newsavebox{\bbot}
\def\eps{\epsilon}
\def\nin{{~\not \in~}}
\newcommand {\ignore} [1] {}
\begin{document}

\title{From Hierarchical Partitions to Hierarchical Covers: \\Optimal Fault-Tolerant Spanners for Doubling Metrics}
\author{
Shay Solomon \thanks{Department of Computer Science and Applied Mathematics, The Weizmann Institute of Science, Rehovot 76100, Israel.
E-mail: {\tt shay.solomon@weizmann.ac.il}.
This work is supported by the Koshland Center for basic Research.}}

\date{\empty}

\begin{titlepage}
\def\thepage{}
\maketitle

\begin{abstract}
A \emph{$(1+\eps)$-spanner} for a doubling metric $(X,\delta)$ is a subgraph $H$ of the complete graph corresponding to $(X,\delta)$,
which preserves all pairwise distances to within a factor of $1+\eps$. A natural requirement from a spanner is to be robust against
node failures, so that even when some of the nodes in the network fail, the remaining part would still provide a $(1+\eps)$-spanner.
The spanner $H$ is called a \emph{$k$-fault-tolerant $(1+\eps)$-spanner}, for any $0 \le k \le n-2$, if for any subset
$F \subseteq X$ with $|F| \le k$, the graph $H \setminus F$ (obtained by removing the vertices of $F$, as well as their incident edges, from $H$) is a $(1+\eps)$-spanner for $X \setminus F$.

In this paper we devise an optimal construction of fault-tolerant spanners for doubling metrics.
Specifically, for any $n$-point doubling metric, any $\eps > 0$, and any integer   $0 \le k \le n-2$,
our construction provides a $k$-fault-tolerant $(1+\eps)$-spanner with optimal degree $O(k)$ within optimal time $O(n  \log n + k n)$.

We then strengthen this result to provide near-optimal (up to a factor of $\log k$) guarantees on the diameter and weight
of our spanners,
namely, diameter $O(\log n)$ and weight $O(k^2 + k \log n) \cdot \omega(MST)$,
while preserving the optimal guarantees on the degree $O(k)$ and the running time $O(n  \log n + k n)$.

Our result settles several fundamental open questions in this area, culminating a long line of research that started with the STOC'95 paper of Arya et al.\ and the STOC'98 paper of
Levcopoulos et al.

On the way to this result we develop a new technique for constructing spanners in doubling metrics.
In particular, our spanner construction is based on a novel \emph{hierarchical cover} of the metric,
whereas most previous constructions of spanners for doubling and Euclidean metrics (such as the net-tree spanner)
are based on \emph{hierarchical partitions} of the metric. 
We demonstrate the power of hierarchical covers in the context of  geometric spanners
by  improving    the state-of-the-art results in this area.   
\end{abstract} 
\end{titlepage}

\pagenumbering {arabic} 

\section{Introduction}

{\bf 1.1 ~Euclidean Spanners.~}
Consider a set $P$ of $n$ points in $\mathbb R^d$ and a number $\eps > 0$,
and let $H = (P,E)$ be a graph in which the weight $\omega(x,y)$ of each edge $(x,y)\in E$ 
is equal to the Euclidean distance $\|x-y\|$ between $x$ and $y$.   
The graph $H$ is called a \emph{$(1+\eps)$-spanner} (or simply \emph{spanner} if the \emph{stretch} $1+\eps$ is clear from the context) for $P$ if for every $p,q \in P$,
there is a path in $H$ between $p$ and $q$ whose weight
(the sum of all edge weights in it) 
is at most $(1+\eps) \cdot \|p-q\|$.  
Such a path 
is called a \emph{$(1+\eps)$-spanner path}. The problem of constructing
Euclidean spanners 
has been studied intensively \cite{Chew86,Clark87,Vai91,Sal91,KG92,ADDJS93,ADMSS95,AWY05,DES09tech};
see also the treatise on Euclidean spanners and their applications by Narasimhan and Smid \cite{NS07}.  

Euclidean spanners find applications in geometric approximation algorithms, network topology design, distributed systems, and other areas.  
In many applications it is required to construct a
$(1+\eps)$-spanner $H = (P,E)$ that satisfies some useful
properties. First, the spanner should contain $O(n)$ (or nearly
$O(n)$) edges. 
Second, the {\em (maximum) degree} $\deg(H)$ of $H$ should be small.
Third, its {\em weight}\footnote{For convenience, we
will henceforth refer to the normalized notion of weight $\Psi(H) = {{\omega(H)}
\over {\omega(MST(P))}}$, which we call {\em lightness}.} $\omega(H) = \sum_{e
\in E}\omega(e)$ should not be much greater than the weight $\omega(MST(P))$ of
the minimum spanning tree $MST(P)$ of $P$. 

A natural requirement from a spanner is to be robust against
node failures, so that even when some of the nodes in the network fail, the remaining part would still provide a $(1+\eps)$-spanner.
The spanner $H$ is called a \emph{$k$-fault-tolerant $(1+\eps)$-spanner}, for any $0 \le k \le n-2$, if for any subset
$F \subseteq X$ with $|F| \le k$, the graph $H \setminus F$ (obtained by removing the vertices of $F$, as well as their incident edges, from $H$) is a $(1+\eps)$-spanner for $X \setminus F$; we will henceforth write FT as a shortcut for fault-tolerant. Note that the basic (non-FT) setting when all nodes are functioning
corresponds to the case $k = 0$. 

The notion of FT spanners was introduced in the pioneering work of Levcopoulos et al.\ \cite{LNS98} from STOC'98.
A basic (non-FT) construction of $(1+\eps$)-spanners with constant degree (and thus $O(n)$ edges) and constant lightness can be built in time $O(n \log n)$ \cite{AS94,DN94,GLN02}.
Levcopoulos et al.\ \cite{LNS98} observed that the $(k+1)$-power\footnote{An $s$-power of a graph $G$ is a graph with the same 
vertex set as $G$, and there is an edge between any pair of vertices that are connected by a path in $G$ with at most $s$ edges.} 
of a spanner is a $k$-FT spanner with the same stretch. Therefore, by taking the $(k+1)$-power of the basic construction of \cite{GLN02},
one can obtain in $O(n \log n) + n 2^{O(k)}$ time, a $k$-FT $(1+\eps)$-spanner with degree and lightness both bounded by $2^{O(k)}$.
Notice that any $k$-FT spanner (with an arbitrarily large stretch) must have degree $\Omega(k)$, and thus $\Omega(k n)$ edges; also, it is easy to see that
the lightness must be $\Omega(k^2)$ \cite{CZ03}. Thus there is a gap between the exponential upper bound $2^{O(k)}$
and the polynomial lower bounds $\Omega(k)$ and $\Omega(k^2)$ on the degree and lightness, respectively.
In addition, Levcopoulos et al.\ \cite{LNS98} devised two other constructions of $k$-FT spanners, one with $O(k^2 n)$ edges and running
time $O(n \log n + k^2 n)$, and another with edges and running time both bounded by $O(k n \log n)$.
In WADS'99 Lukovszki \cite{Luk299} devised 
two constructions of $k$-FT spanners, one with the optimal number of edges $O(k n)$ and with running time $O(n \log^{d-1} n + k n  \log \log n)$,
and another with degree $O(k^2)$.

In SoCG'03, Czumaj and Zhao \cite{CZ03} showed that the optimal guarantees $O(k)$ and $O(k^2)$ on the degree and lightness
of $k$-FT spanners can be achieved using a greedy algorithm.\footnote{The ${n \choose 2}$ edges of the underlying complete Eucldiean graph 
are traversed by increasing order of weight; each edge $(x,y)$ is added to the current graph if and only if it does not contain 
$k+1$ vertex-disjoint $t$-spanner paths between $x$ and $y$.} However, it is unclear whether the greedy algorithm of \cite{CZ03} can be implemented
efficiently; a naive implementation requires time $O(n^2) \cdot Paths(n,k+1,t)$, where $Paths(n,k+1,t)$ is the time needed to check whether an $n$-vertex
graph (with $O(k n)$ edges) contains $k+1$ vertex-disjoint $t$-spanner paths between an arbitrary pair of vertices.
In addition, Czumaj and Zhao \cite{CZ03} devised a construction of $k$-FT spanners with the optimal degree $O(k)$ and with lightness $O(k^2 \log n)$,
within time $O(k n \log^{d} n + n k^2 \log k)$. (See Table \ref{tab1} for a reference. We also refer to Chapter 18 in the 
treatise \cite{NS07} for an excellent survey on Euclidean FT spanners.)

We remark that the time needed to compute a $k$-FT spanner is $\Omega(n \log n + k n)$. The first term $\Omega(n \log n)$
is a lower bound on the time needed to compute a basic (non-FT) spanner in the \emph{algebraic computation tree model} \cite{CDS01},
and the second term $\Omega(k n)$ is a lower bound on the number of edges in any $k$-FT spanner.
Up to this date no construction of $k$-FT spanners could combine the optimal running time
$O(n \log n + kn)$ with either the optimal number of edges $O(k n)$ (and thus  the optimal degree $O(k)$) or with the optimal lightness $O(k^2)$. 
In particular, the following questions are stated in the treatise of Narasimhan and Smid \cite{NS07}; all these questions remained open even for 2-dimensional point sets.
\vspace{-0.06in}
\begin{question} [Open Problem 26 in \cite{NS07}]
Is there an algorithm that constructs a $k$-FT $(1+\eps)$-spanner with $O(k n)$ edges in $O(n \log n + kn)$ time?
\end{question}
\vspace{-0.08in}

\vspace{-0.12in}
\begin{question} [Open Problem 27 in \cite{NS07}]
Is there an algorithm that constructs a $k$-FT $(1+\eps)$-spanner with degree $O(k)$ in $O(n \log n + kn)$ time?
(This question subsumes Question 1.)
\end{question}
\vspace{-0.08in}

\vspace{-0.14in}
\begin{question} [Open Problem 28 in \cite{NS07}]
Is there an algorithm that constructs a $k$-FT $(1+\eps)$-spanner with lightness $O(k^2)$ in $O(n \log n + kn)$ time?
\end{question}
\vspace{-0.08in}

Another important measure of quality is the \emph{(hop-)diameter} of spanners;
we say that a spanner $H$ has diameter  $\Lambda(H)$ if it provides a $(1+\eps)$-spanner path with at most $\Lambda(H)$ edges, for every $p,q \in P$.
Achieving a small diameter is desirable for some practical applications (e.g., in network routing protocols); see \cite{AMS94,ADMSS95,AM04,AWY05,CG06,DES09tech},
and the references therein. Euclidean Spanners that combine small diameter with some of the other parameters (among small number of edges, degree, lightness
and running time) have been well studied in the last twenty years.  
In particular, in a seminal STOC'95 paper by Arya et al.\ \cite{ADMSS95}, a single construction of spanners that combines all these parameters
was presented, 
having constant degree, diameter $O(\log n)$, lightness $O(\log^2 n)$ and running time $O(n \log n)$. Moreover, Arya et al.\  conjectured
that the lightness can be improved to $O(\log n)$, without increasing any of the other parameters; having lightness and diameter both bounded by $O(\log n)$  is optimal due to a lower bound of \cite{DES09tech}.
Arya et al.'s long-standing conjecture was resolved in the affirmative in STOC'13 by Elkin and this author \cite{ES13}.

Chan et al.\ \cite{CLN12} generalized the result of \cite{ES13} for the FT setting.
Specifically, they showed that there exist $k$-FT spanners with degree $O(k^2)$,
diameter $O(\log n)$ and lightness $O(k^3 \log n)$.
The running time of the construction of \cite{CLN12} was not analyzed; a naive implementation requires quadratic time. 
In the ``Future Direction''  Section of \cite{CLN12}, Chan et al.\ asked whether the dependence on $k$
in the degree and lightness bounds of their construction can be improved.
A partial answer to this question was given by this author \cite{Sol12}, 
who achieved degree $O(k^2)$, diameter $O(\log n)$ and lightness $O(k^2 \log n)$, within time $O(n \log n + k^2 n)$. 
A merging of the two manuscripts \cite{CLN12} and \cite{Sol12} gave rise to the ICALP'13 paper \cite{CLNS13}.
(See Table \ref{tab1} for a reference.)
We remark that the upper bounds of \cite{CLN12,Sol12,CLNS13} are pretty far from the known lower bounds:
degree $O(k^2)$ versus $\Omega(k)$, lightness $O(k^2 \log n)$ versus $\Omega(k^2)$, and running time $O(n \log n + k^2 n)$ versus $\Omega(n \log n + k n)$.
In particular, the following question was left open.
\vspace{-0.08in}
\begin{question} \cite{CLN12,Sol12,CLNS13} 
Is there an algorithm that constructs a $k$-FT spanner with degree $o(k^2)$, diameter $O(\log n)$
and lightness $O(k^2) + o(k^2 \log n)$,  within time $O(n \log n) + o(k^2 n)$?
\end{question}



\noindent{\bf Our Results.~}
We show that for any set $P$ of $n$ points in $\mathbb R^d$, any $\eps > 0$, and any integer $0 \le k \le n-2$,
one can build a $k$-FT $(1+\eps)$-spanner with optimal degree $O(k)$ within  time $O(n \log n + kn)$.
The running time of our construction holds in the algebraic computation tree model,
and is therefore optimal as well. 
\\In particular, this result settles in the affirmative Questions 1 and 2 above.

Moreover, our spanners also achieve diameter $O(\log n)$ and lightness $O(k^2 + k \log n)$, which settles Question 4.
(See Table \ref{tab1} for a summary of previous and our constructions of Euclidean FT spanners.)

Consider the lightness bound. Notice that it is equal to $O(k^2)$, for all $k = \Omega(\log n)$,
thus matching the na\"{\i}ve lower bound $\Omega(k^2)$ in this range.
In other words, it provides a positive answer to Question 3 for all $k = \Omega(\log n)$.
More generally, it exceeds the na\"{\i}ve  bound $\Omega(k^2)$ by a factor of $\frac{\log n}{k}$.

However, the na\"{\i}ve  bound $\Omega(k^2)$ on the lightness can be strengthened if we take into account the diameter $O(\log n)$
of our spanners. 
Indeed, any spanner with diameter $O(\log n)$ must have lightness $\Omega(\log n)$ \cite{AWY05,DES09tech}. 
More generally, for any parameter $\rho \ge 2$, any spanner with diameter $O(\log_\rho n)$ must have lightness $\Omega(\rho \log_\rho n)$ \cite{DES09tech}. Note also that any spanner with degree $O(\rho)$ must have diameter $\Omega(\log_\rho n)$.
By combining all these lower bounds together (and substituting $\rho$ with $k$), we get that one cannot do better than the following tradeoff: 
$k$-FT spanners with degree $\Omega(k)$, diameter $\Omega(\log_k n)$ and lightness $\Omega(k^2 + k \log_k n)$. Notice that the parameters of our spanners are pretty close to this lower bound tradeoff,
with only a slack of $\log k$ on the diameter and a slack of $\min\{\log k, \frac{\log n}{k}\} = O(\log \log n)$ on the lightness.


\begin{table*}
\begin{center}
\resizebox {\textwidth }{!}{
\footnotesize
\begin{tabular}{|c|c|c|c|c|c|c|}
\hline  Reference & \# Edges & Degree &  Diameter & Lightness & Running Time & Metric  \\
\hline \cite{LNS98}  & $n 2^{O(k)}$ & $2^{O(k)}$ & unspecified  & $2^{O(k)}$ & $O(n  \log n) + n 2^{O(k)}$ & Euclidean \\
\hline \cite{LNS98}   & $O(k^2  n)$ & unspecified & unspecified & unspecified & $O(n  \log n + k^2 n)$ & Euclidean \\
\hline \cite{LNS98}   & $O(k n \log n)$ & unspecified & unspecified & unspecified & $O(k n \log n)$ & Euclidean \\
\hline \cite{Luk299}   & $O(k n)$ & unspecified & unspecified & unspecified & $O(n \log^{d-1} n + k n  \log \log n)$ & Euclidean \\
\hline \cite{CZ03}   & $O(k n)$ & $O(k)$ & unspecified & $O(k^2)$ & $O(n^2) \cdot Paths(n,k+1,t)$ & Euclidean \\
\hline \cite{CZ03}   & $O(k n)$ & $O(k)$ & unspecified & $O(k^2 \log n)$ & $O(k n \log^{d} n + n k^2 \log k)$ & Euclidean \\
\hline \cite{CLN12a}   & $O(k n)$ & unspecified & unspecified & unspecified & unspecified & doubling \\
\hline \cite{CLN12a}   & $O(k m)$ & unspecified & $O(\alpha(m,n))$ & unspecified & unspecified & doubling \\
\hline \cite{CLN12a}   & $O(k^2 n)$ & $O(k^2)$ & unspecified & unspecified & unspecified & doubling \\
\hline \cite{CLN12,Sol12,CLNS13}   & $O(k^2 n)$ & $O(k^2)$ & $O(\log n)$ & $O(k^2 \log n)$ & $O(n \log n + k^2 n)$ & doubling \\
\hline \hline {\bf New} &  {\boldmath $O(k n)$} & {\boldmath $O(k)$} &{\boldmath $O(\log n)$} &  {\boldmath $O(k^2 + k \log n)$} &  {\boldmath $O(n \log n + k n)$} & {\bf doubling} \\
\hline
\end{tabular}
}
\end{center}
\vspace{-0.15in}
\caption[]{\label{tab1} \footnotesize A  comparison between the previous state-of-the-art and our  
constructions of FT spanners for low-dimensional Euclidean and doubling metrics. See Theorem 1.1 for the dependencies of our construction
on $\eps$ and $d$.}
\end{table*}
\vspace{-0.15in}


\vspace{0.07in}
\noindent{\bf 1.2 ~Doubling Metrics.~}
The \emph{doubling dimension} of a metric $(X,\delta)$ is the smallest value $d$
such that every ball $B$ in the metric can be covered by at most
$2^{d}$ balls of half the radius of $B$.  
This notion generalizes the Euclidean dimension, since the doubling dimension
of the Euclidean space $\mathbb R^d$ (equipped with any of the $\ell_p$ norms) is $\Theta(d)$.
A metric is called \emph{doubling} if its doubling dimension is constant.
Doubling metrics were implicit in the works of Assoud \cite{Ass83} and Clarkson \cite{Clark99}, and were explicitly defined
  by Gupta et al.\ \cite{GKL03}. Since then they have been studied intensively (see, e.g., \cite{KL04,Tal04,HPM06,CG06,Smid09,ABN11,BGK12,CLNS13}).

Spanners for doubling metrics were also subject of intensive research  \cite{GGN04,CGMZ05,CG06,HPM06,Rod07,GR081,GR082,Smid09,CLN12a,ES13,CLN12,Sol12,CLNS13}.
In many of these works the objective is to devise spanners that achieve one parameter or more among 
small number of edges, degree, diameter, lightness, and running time.
Spanners for doubling metrics were also found useful for approximation algorithms \cite{BGK12}
and for machine learning \cite{GKK12}. 

The literature on FT spanners for doubling metrics is quite sparse, and consists of the following papers \cite{CLN12a,CLN12,Sol12,CLNS13}.
In ICALP'12, Chan et al.\ \cite{CLN12a} devised three constructions of $k$-FT spanners for doubling metrics.
Their first construction achieves the optimal number of edges $O(k n)$, the second  achieves $O(k m)$ edges
and diameter $O(\alpha(m,n))$, where $\alpha$ is the two-parameter inverse-Ackermann function, and the third construction
achieves degree $O(k^2)$. The other papers are the follow-ups \cite{CLN12,Sol12,CLNS13} to \cite{ES13} that were discussed in Section 1.1, 
which provide within $O(n \log n + k^2 n)$ time, a $k$-FT spanner with degree $O(k^2)$, diameter $O(\log n)$ and lightness $O(k^2 \log n)$. The constructions 
of \cite{CLN12,Sol12,CLNS13} apply to doubling metrics.

Obviously, Questions 1-4 from Section 1.1 are relevant for  doubling metrics as well. Moreover, even much weaker variants
of these questions can be asked.  
In particular, the previous state-of-the-art bounds on the degree and lightness of $k$-FT spanners for doubling metrics
are $O(k^2)$ and $O(k^2 \log n)$, respectively; hence it is natural to ask if these bounds can be improved, regardless of the running time issue.

\ignore{
In SODA'05 Chan et al.\ \cite{CGMZ05} showed that
for any doubling metric there exists a spanner with constant
degree. In SODA'06, Chan and Gupta \cite{CG06} devised  a construction of
spanners with diameter $O(\alpha(n))$.
Smid \cite{Smid09} showed that in doubling metrics a greedy
construction produces spanners with logarithmic lightness. 
Gottlieb et al.\ \cite{GKK12} devised a construction of spanners
with constant degree and logarithmic diameter, within $O(n \cdot \log n)$ time.
To the best of our knowledge, prior to our work, there were no known constructions of
spanners for doubling metrics that
provide logarithmic diameter and lightness simultaneously (even allowing arbitrarily large degree).\footnote{On the other hand,
as was mentioned in Section , for Euclidean metrics such a construction was devised by
Arya et al.\ \cite{ADMSS95}. However, the degree in the latter construction is unbounded.}
}

We demonstrate that our construction  extends to doubling metrics without incurring any overhead (beyond constants)
in the degree, diameter, lightness, and running time. 
Notice that our construction improves all the previous state-of-the-art constructions of FT spanners for both Euclidean and doubling metrics,
disregarding the construction of \cite{CZ03} which requires a somewhat unreasonable running time of $O(n^2) \cdot Paths(n,k+1,t)$.
(See Table \ref{tab1} for a  summary of previous and our results.)

In particular, our result settles Questions 1, 2 and 4 in the affirmative 
for doubling metrics. 

Question 3 is settled for all $k = \Omega(\log n)$, but remains open in the complementary range of $k = o(\log n)$).
We remark that the problem of constructing spanners for doubling metrics with sub-logarithmic lightness is a central open question in this area
(see \cite{BGK12}),
even for basic (non-FT) spanners, and regardless of the diameter or any other parameter of the construction. 
In other words, a positive solution to Question 3 in the range $o(\log n)$ seems currently out of reach for doubling metrics, even for the basic case $k = 0$.


The main result of this paper is summarized  in the following theorem. 
\vspace{-0.07in}
\begin{theorem} \label{tm1}
Let $(X,\delta)$ be an $n$-point doubling metric, with an arbitrary doubling dimension $d$.
For any $\eps > 0$ and any integer $0 \le k \le n-2$,
a $k$-FT $(1+\eps)$-spanner with degree $\eps^{-O(d)} \cdot k$, diameter $O(\log n)$ and lightness $\eps^{-O(d)}  (k^2 + k \log n)$
can be built within $\eps^{-O(d)}  (n \log n + kn)$ time.
\end{theorem}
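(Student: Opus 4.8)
The plan is to replace the net-tree (a hierarchical \emph{partition}) underlying the constructions of \cite{ES13,CLN12,Sol12,CLNS13} by a \emph{hierarchical cover}, and to build the spanner on top of it. Concretely, I construct a sequence of cluster collections $\cC_0,\cC_1,\dots,\cC_h$, where $\cC_0=\{\{x\}:x\in X\}$, each $\cC_i$ covers $X$ by clusters of radius $O(2^i)$, each cluster of $\cC_i$ is contained in the clusters of $\cC_{i+1}$ designated as its \emph{parents}, and --- the departure from a partition --- a point of $X$ is permitted to lie in several clusters of the same level. The spanner $H$ then consists of \emph{cross edges}, joining the centers of any two clusters of a common level $\cC_i$ at distance at most $c\cdot 2^i/\eps$, and \emph{tree edges}, joining the center of each cluster of $\cC_i$ to the centers of its parents in $\cC_{i+1}$ (at the bottom, a singleton is joined to the centers of the $\cC_1$-clusters containing it). To force the cover to have height $O(\log n)$ rather than $\Theta(\log\Phi)$ ($\Phi$ the aspect ratio), I collapse maximal non-branching stretches and splice into each the shallow auxiliary structure of \cite{ES13}, now built to be $k$-fault-tolerant; this is routine and I suppress it in the overview.

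Next, the stretch and diameter. The cover is designed so that from the singleton $\{p\}$ there are (at most) $k+1$ ascending routes, pairwise vertex-disjoint above level $0$, each traversing $O(\log n)$ clusters after the height-reduction, and so that every cluster along such a route contains $p$. Fix $F\subseteq X$ with $|F|\le k$ and $p,q\in X\setminus F$, and let $i^\star$ be the level with $2^{i^\star}=\Theta(\eps\,\delta(p,q))$. Since the $\le k$ failed points meet at most $k$ of the $k+1$ disjoint routes of $p$, some route of $p$ avoids $F$ and reaches a cluster $C_p\in\cC_{i^\star}$ with center $a\notin F$; symmetrically for $q$, giving $C_q\in\cC_{i^\star}$ with center $b\notin F$. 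As $a$ lies within $O(2^{i^\star})$ of $p$ and $b$ within $O(2^{i^\star})$ of $q$, we get $\delta(a,b)\le\delta(p,q)+O(2^{i^\star})\le c\,2^{i^\star}/\eps$, so $(a,b)$ is a cross edge of $H$ and it survives $F$. Concatenating the surviving ascending route of $p$, the edge $(a,b)$, and the surviving descending route of $q$ yields a path in $H\setminus F$; a geometric sum over scales $\le i^\star$ of the tree-edge weights (each $O(2^{j+1})$) bounds its weight by $\delta(p,q)+O(\eps\,\delta(p,q))=(1+O(\eps))\,\delta(p,q)$, and rescaling $\eps$ gives stretch $1+\eps$. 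The path has $O(\log n)$ edges, so $\Lambda(H)=O(\log n)$.

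For the remaining parameters I rely on the cover having \emph{ply} $O(k)$ at every level (each point lies in $O(k)$ clusters of $\cC_i$) and $O(k)$ children per cluster. By the doubling property only $\eps^{-O(d)}$ centers of a level lie within $c\,2^i/\eps$ of a given center, so with ply $O(k)$ each center is incident to $\eps^{-O(d)}k$ cross edges; likewise each center has $\le k+1$ tree edges going up and $\eps^{-O(d)}k$ going down, so $\deg(H)=\eps^{-O(d)}k$, matching the $\Omega(k)$ lower bound. For lightness, the cross edges of scale $i$ weigh $\eps^{-O(d)}k\cdot 2^i|\cC_i|$ in total; summing over scales and charging against $\omega(MST)$ in the manner of \cite{ES13}, inflated by the $O(k)$-fold redundancy, yields a contribution of $\eps^{-O(d)}k\log n\cdot\omega(MST)$, while the scales near the leaves, where the $O(k)$-fold redundancy is fully populated, account for the $\eps^{-O(d)}k^2\cdot\omega(MST)$ term that is intrinsically unavoidable for $k$-fault tolerance \cite{CZ03}; together with the analogous estimate for tree edges this gives lightness $\eps^{-O(d)}(k^2+k\log n)$. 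Finally, the entire cover --- all nets at all scales, parent lists, and near-center lists --- is computed in $\eps^{-O(d)}n\log n$ time by the standard net-tree machinery for doubling metrics \cite{HPM06}, and since $H$ has $\eps^{-O(d)}kn$ edges, each emitted in $O(1)$ amortized time, the total running time is $\eps^{-O(d)}(n\log n+kn)$, which is optimal in the algebraic computation tree model.

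The main obstacle is the construction of the hierarchical cover itself, where two demands pull against each other: (i) every point must have $k+1$ vertex-disjoint ascending routes (as many as the local population allows), so that any $k$ deletions leave a surviving route --- which already forces the ply at each level to be at least $k+1$; and (ii) the ply, and the number of children of each cluster, must nonetheless be $O(k)$, since any slack here reappears as a multiplicative $k$ in the degree and the lightness --- exactly the $O(k^2)$ bottleneck of \cite{CLN12,Sol12,CLNS13}. Meeting both requires a carefully load-balanced assignment of points to clusters and of clusters to parents, so that the parent relation, restricted to the $\Theta(k)$ clusters of level $i$ through a fixed point, always routes $k+1$ units of flow disjointly into level $i+1$ (a Hall-type condition); the overlapping structure of a cover --- unavailable in a partition --- is what makes such an economical yet $(k+1)$-connected hierarchy possible. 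Verifying this assignment, and checking that it interacts correctly with the fault-tolerant version of the height-reduction of \cite{ES13}, is the technical heart of the proof.
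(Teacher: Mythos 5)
Your high-level picture --- pass from a hierarchical partition to a hierarchical cover with $O(k)$-fold redundancy per level, so that each point has $k+1$ disjoint escape routes upward --- is exactly the paper's guiding idea, and your stretch/fault-tolerance argument (some route of $p$ and some route of $q$ survive, and the two surviving level-$i^\star$ representatives are joined by a cross edge) matches the paper's Section~3.3. But the proposal has two genuine gaps, and they are precisely where the paper's work lies. First, the degree bound: showing that a center of a level-$i$ cluster is incident on $\eps^{-O(d)}k$ cross edges \emph{at level $i$} does not bound its degree in $H$, because the same point of $X$ can serve as a center (in the paper's language, a surrogate) at many levels --- in any net-based hierarchy a net point persists through up to $\Theta(\log\Delta)$ levels, so the naive sum over levels is unbounded. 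The entire technical core of the paper is a mechanism guaranteeing that each point is ``newly appointed'' at most once and then re-used for only $O(\log\frac{1}{\eps})$ levels (the degree threshold $D$, terms of length $\Omega(\tau)$, the clean/dirty dichotomy, and the leech/host symbiosis that lets a node inherit a neighbor's representatives instead of minting new ones). Your ``Hall-type load-balanced assignment'' is named as the technical heart but never constructed, and a per-level flow condition does not obviously address the cross-level accumulation that is the actual obstruction.

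Second, the lightness arithmetic as written does not give $\eps^{-O(d)}(k^2+k\log n)$. With ply $O(k)$ you have $|\cC_i|=O(k)$ times the size of a $2^i$-net, and each center carries $\eps^{-O(d)}k$ cross edges of weight $O(2^i/\eps)$, so the straightforward sum over scales yields $\eps^{-O(d)}k^2\log n\cdot\omega(MST)$ --- exactly the prior bound of \cite{CLN12,Sol12,CLNS13} that the theorem is meant to beat; your split into a $k\log n$ term plus a $k^2$ term ``near the leaves'' is asserted, not derived. The paper obtains the improved bound by a structural dichotomy with no counterpart in your sketch: edges whose both endpoints are complete (dirty) are realized as perfect bipartite \emph{matchings} of size $k+1$ rather than cliques (contributing $\eps^{-O(d)}k\log n$), while the remaining cliques are charged to clean nodes, and the ``almost-dirty'' roots of the clean subtrees are shown to be pairwise $\Omega(rad)$-separated (Lemma~\ref{wtlem}) so that their radii sum to $O(\omega(MST))$, giving the $\eps^{-O(d)}k^2$ term. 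Replacing cliques by matchings also breaks the naive fault-tolerance argument (the surviving representatives $\tilde p,\tilde q$ need no longer be adjacent), which the paper repairs with a separate $(k+1)$-disjoint-paths argument along the whole route from $p$ to $q$; this, together with the pruning needed before shortcutting for the diameter bound, is also not ``routine'' in the fault-tolerant setting.
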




\vspace{0.07in}
\noindent
{\bf 1.3 ~Our and Previous Techniques.~}
Most previous works on geometric FT spanners \cite{LNS98,Luk299,CZ03,NS07} apply to Euclidean metrics,
and rely heavily on profound geometric properties of low-dimensional Euclidean metrics, such as the \emph{gap property} \cite{CDNS92,AS94} and the \emph{leapfrog property} \cite{DHN93}.
In particular, these constructions do not extend to arbitrary doubling metrics.
In contrast, in our work we only use standard packing arguments, and do not rely
on any other geometric property of low-dimensional Euclidean metrics.
Consequently, our construction applies to arbitrary doubling metrics. 

On the other hand, there are some similarities between our techniques and the techniques used in the previous works on FT spanners for doubling metrics \cite{CLN12a,CLN12,Sol12,CLNS13}.
The starting point in both our work and the works of \cite{CLN12a,CLN12,Sol12,CLNS13} is 
the standard \emph{net-tree spanner} of \cite{GGN04,CGMZ05},
which is induced from a \emph{net-tree} $T = T(x)$ that corresponds to a \emph{hierarchical partition} of the metric $(X,\delta)$. 
Any tree node $x$ is associated with a single point $p(x)$ that belongs to the point set $D(x)$ of its descendant leaves. 
For any pair of tree nodes at the same level that are close together (with respect to the distance scale at that level),
a \emph{cross edge} is added. The net-tree spanner is the union of the tree edges (i.e., the edges of $T$) and the cross edges.
(See Section 2 for more details.)
To obtain a $k$-FT spanner, \cite{CLN12a,CLN12} use $k+1$ net-trees rather than one, and in each of these trees, each tree node $x$
is associated with a single point from $D(x)$. Another idea that is used in \cite{Sol12,CLNS13} is to use a single net-tree, but to associate each tree node $x$ with
$\Theta(k)$ points from $D(x)$ (if any) rather than a single point. To achieve degree $O(k^2)$, \cite{CLN12a,CLN12} used 
single-sink spanners similar to those in \cite{ADMSS95},
whereas \cite{Sol12,CLNS13} used a rerouting technique from the bounded degree spanner construction of \cite{GR082}.
However, achieving degree $o(k^2)$ using these ideas is doomed to failure. First, in \cite{CLN12a,CLN12} there are $k+1$ net-trees,
each of which contributes $\Omega(k)$ units to the degree load, thereby giving a total degree of $\Omega(k^2)$.
Incurring a degree of $\Omega(k^2)$ from the approach of \cite{Sol12,CLNS13} is also inevitable. 
The main problem is that a node $x$ is associated with $\Theta(k)$ points from $D(x)$ (if any). 
Consequently, the same leaf point $p$ may belong to $\Theta(k)$ different sets $D(x)$ of internal nodes $x$.
For each cross edge that is incident on any of these $\Omega(k)$ nodes, 
we must connect $p$ with $\Omega(k)$ edges, and so the degree of $p$ becomes $\Omega(k^2)$.

The basic idea of associating nodes of net-trees and other hierarchical tree structures (such as split trees and dumbbell trees)
with points from their descendant leaves has been widely used in the geometric spanner literature; see \cite{CK92,ADMSS95,GGN04,CGMZ05,CG06,NS07,GR082,CLN12a},
and the references therein. 
In particular, any point in $D(x)$ is   close to the original net-point $p(x)$ with respect to the distance scale of $x$, denoted $rad(x)$.

Instead of restricting ourselves to $D(x)$, we suggest to look at a larger set $B(x)$ of all points
that belong to the ball of radius $O(rad(x))$ centered at $p(x)$.   
By associating nodes $x$ with points from $B(x)$,  
we get a \emph{hierarchical cover} of the metric. Since the doubling dimension is constant,
this cover will have a constant \emph{degree} (every point will belong to a constant number of sets $B(\cdot)$ at each level).
We use this constant degree hierarchical cover to obtain degree $O(k)$ as follows.  
A node $x$ will appoint to itself $\Theta(k)$ points from $B(x)$ whose degree due to cross edges at lower levels is small enough; let $S(x)$ be the set of these points, called \emph{surrogates}.
A cross edge $(x,y)$ increases the degree of surrogates of $S(x)$. However, every unit of increase to a surrogate $p \in S(x)$ is due to some surrogate $q \in S(y)$ that is close to $p$--the distance between $p$ and $q$ will be $O(\frac{1}{\eps} \cdot rad(x))$. 
This surrogate $q$, as well as other points that are within distance $O(rad(x))$ from $q$, will not necessarily be in $B(x)$.
However, since the distance scales increase exponentially with the level of a tree node, all these points must belong to a \emph{close ancestor} $x'$ of $x$,
where the level of $x'$ is higher than that of $x$ by at most $O(\log \frac{1}{\eps})$ units. 
The key idea here is to ``compensate'' one of the close ancestors $x'$ of $x$ at a value that matches the ``cost'' of $x$ due to cross edges.
By re-appointing the same surrogates of $S(x)$ for $O(\log \frac{1}{\eps})$ levels continuously (at which stage we must reach a close ancestor $x'$ of $x$ that is 
compensated), and only then appointing new surrogates for $S(x')$, we will be able to achieve the optimal degree $O(k)$. 
While in the approach of \cite{Sol12,CLNS13} each point of $X$ belongs to lists $D(x)$ of $\Omega(k)$ nodes $x$,
our approach can guarantee (though this does not follow immediately from the above discussion) that each point of $X$ will serve as a surrogate of only $O(\log \frac{1}{\eps})$ nodes. Having assigned surrogates for each node, we replace each edge $(x,y)$ of the basic spanner by a bipartite clique between 
the corresponding surrogate sets $S(x)$
and $S(y)$. Observe that $S(x)$ and $S(y)$ may contain $\Theta(k)$ points each, and so the size of this bipartite clique may be as large as $\Theta(k^2)$.
In other words, we may replace a single edge of the basic spanner by $\Theta(k^2)$ edges. Since the basic spanner has $\Theta(n)$ edges, one might get
the \emph{wrong} impression 
that our FT spanner contains $\Theta(k^2 n)$ edges, which is far too much (by a factor of $k$)!
There are two reasons why the number of edges in our FT spanner is in check.
First, it turns out that many edges $(x,y)$ of the basic spanner satisfy $|S(x)| + |S(y)| \ll k$, which implies that there are many sparse bipartite cliques.
More specifically, we show that the overall number of edges in all these sparse bipartite cliques is $O(k n)$.
Second, even though there are still many edges in the basic spanner which satisfy $|S(x)| + |S(y)| = \Theta(k)$,
we demonstrate that most of them are in fact \emph{redundant}, and such redundant edges need not be replaced by bipartite cliques. 
(See Section 3.1 for more details.)




The main challenge of this work was to achieve the optimal bound $O(k)$ on the degree. 
However, small degree (which also implies small number of edges) is just one out of four parameters that are of interest to us.
Our construction is tailored in such a way
that once degree $O(k)$ is achieved, the rest of the parameters come almost for free.
\\\emph{Lightness.} The standard net-tree analysis shows that the basic spanner construction has lightness $O(\log n)$.
Replacing each edge of the basic spanner with a bipartite clique of size $O(k^2)$ between the corresponding surrogate sets
gives rise to lightness $O(k^2 \log n)$.
Improving the lightness bound to $O(k^2 + k \log n)$ is more involved.   
The idea is to try replacing each edge $(x,y)$ of the basic spanner with a bipartite matching instead of a bipartite clique;
this is possible only when both $S(x)$ and $S(y)$ contain $\Theta(k)$ surrogates, which is not always the case (see Section 4). 
The total lightness of the bipartite matchings is clearly $O(k \log n)$. The more interesting part is to show that 
the total lightness of the bipartite cliques is $O(k^2)$.
\\\emph{Diameter.} We can achieve diameter $O(\log n)$ by shortcutting the net-tree using the 1-spanners of \cite{SE10}. To control the lightness,
we follow an idea from \cite{CLN12,Sol12,CLNS13} and shortcut the \emph{light subtrees} (those at small enough distance scales). 
If this is done carelessly, the degree and lightness will increase to at least $O(k^2)$ and $O(k^2 \log n)$, respectively.
However, by pruning a certain subset of nodes from the net-tree, and using bipartite matchings for the shortcut edges
whenever possible, the desired bounds can be achieved.
\\\emph{Running time.} Given the standard net-tree spanner construction and the 1-spanners of \cite{SE10}, our construction
can be implemented within $O(k n)$   time in a rather straightforward manner.
Since the standard net-tree spanner can be built within time $O(n \log n)$ \cite{Rod07,GR082},
and since the same amount of time suffices to build the 1-spanners of \cite{SE10}, the overall running time of our construction is $O(n \log n +kn)$.
We remark that for low-dimensional Euclidean metrics, some variants of the net-tree (such as the fair split tree of \cite{CK92}) can
be built within time $O(n \log n)$ in the algebraic computation tree model. Consequently, the basic spanner construction 
can also be built within time $O(n \log n)$ in this model.
The   time bound $O(n \log n)$ for the 1-spanners of \cite{SE10} also applies to this model.
Therefore, for low-dimensional Euclidean metrics, the time bound of our construction applies to this model as well.
For arbitrary doubling metrics, one should also need a rounding operation (to allow one to find the $i$ for which
$2^i < x \le 2^{i+1}$, for any $x$) \cite{Gottlieb13}.


\vspace{0.07in}
\noindent
{\bf 1.4~ Similar Results by Kapoor and Li.~}
Independently of our work, Kapoor and Li \cite{KL13} obtained similar results for Euclidean FT spanners.
In particular, they showed that for any low-dimensional Euclidean metric, one can build within time $O(n \log n + k n)$,
a $k$-FT $(1+\eps)$-spanner with degree $O(k)$ and lightness $O(k^2)$.
Similarly to our result, their running time and degree bounds are optimal. In addition, they obtain the optimal lightness bound $O(k^2)$,
while our lightness bound $O(k^2 + k \log n)$ has a slack of $\frac{\log n}{k}$.  
However, while the diameter of their spanners is unbounded, our spanners have diameter $O(\log n)$.
Moreover, their result relies heavily on geometric properties of low-dimensional
Euclidean metrics and thus cannot be extended to doubling metrics, while our result applies to all doubling metrics. 
In particular, the techniques and ideas used in the two papers are inherently different.

It should be noted that the results of Kapoor and Li \cite{KL13} were obtained much prior to our results.\footnote{Private communication with Sanjiv Kapoor, March 2013.}
However, we stress that our results  were achieved before this author became aware of the results of \cite{KL13}.

\vspace{0.07in}
\noindent
{\bf 1.5~ Organization.~} 
In Section 2 we define the basic notions and present the notation that is used throughout the paper.
Section 3 is devoted to our construction of FT spanners with optimal degree. More specifically, the description of the construction is given in Section 3.1,
whereas its analysis is given in Section 3.2 (degree analysis) and Section 3.3 (fault-tolerance and stretch analysis).
In Sections 4 and 5 we show that small lightness and diameter, respectively, 
can be obtained using a few simple modifications to the construction of Section 3. The running time issue is addressed in Section 6.

\vspace{-0.04in}
\section{Preliminaries}
\label{sec:prelim}
\vspace{-0.04in}
Let $(X,\delta)$ be an arbitrary $n$-point doubling metric.
Without loss of generality, we  assume that the minimum inter-point distance of $X$ is equal to $1$.
We denote by $\Delta = \max_{u, v \in X}\delta(u, v)$  the \emph{diameter} of $X$.




A set $Y \subseteq X$ is called an \emph{$r$-cover} of $X$ if for any point $x \in X$ there is a point $y \in Y$, with $\delta(x, y) \le r$.
A  set $Y$ is an \emph{$r$-packing} if for any pair of distinct points $y, y' \in Y$, it holds that $\delta(y, y') > r$.
For $r_1 \geq r_2 > 0$, we say that a set $Y \subseteq X$ is an \emph{$(r_1,r_2)$-net} for $X$ if $Y$ is both an $r_1$-cover of $X$ and an $r_2$-packing;
such a net can be constructed by a greedy algorithm. We will use the following standard packing argument.

\vspace{-0.04in}
\begin{fact} \label{prop:small_net}
Let $R \geq 2r > 0$ and let $Y \subseteq X$ be an $r$-packing  in a ball of radius $R$. Then, $|Y| \le (\frac{R}{r})^{2\dim}$.
\end{fact}

\noindent
{\bf Hierarchical Nets.}  We consider
the hierarchical nets that are used by Gottlieb and Roditty \cite{GR082}.
Write $\ell = \ceil{\log_5 \Delta}$,
and let $\{N_i\}_{i \geq 0}^{\ell}$ be a sequence
of hierarchical nets, where $N_0 = X$ and for each $i \in [\ell]$, $N_i$ is
a $(3 \cdot 5^{i}, 5^{i})$-net for $N_{i-1}$.\footnote{As mentioned in \cite{GR082},
this choice of parameters\ignore{, explained by Cole and Gottlieb \cite{CG06},}
is needed in order to achieve running time $O(n \log n)$.} 
For each $i \in [\ell]$, we refer to $N_i$ as the \emph{$i$-level net}
and the points of $N_i$ are called the \emph{$i$-level net points}.
Note that $N_0 = X \supseteq N_1 \supseteq \ldots \supseteq N_\ell$,
and $N_\ell$ contains exactly one point.
The same point of $X$ may have instances in many nets; specifically, an $i$-level net point is necessarily a $j$-level net point,
for every $j \in [0,i]$. When we wish to refer to a specific instance of a point $p \in X$, 
which is determined uniquely by some level $i \in [0,\ell]$ (such that $p \in N_i$),
we may denote it by a pair $(p,i)$.  

\vspace{4pt} 
\noindent
{\bf Net-tree.}
The hierarchical nets induce a hierarchical tree $T = T(X)$, called \emph{net-tree} \cite{GGN04,CGMZ05}.
\\Each node in the net-tree $T$ corresponds to a unique net-point; we will use $(p,i)$ to denote both the
net point and the corresponding tree node.
We refer to the nodes corresponding to the $i$-level net points as the \emph{$i$-level nodes}.
The only $\ell$-level node is the root of $T$, and for each $i \in [\ell]$, each $(i-1)$-level node has a parent
at level $i$ within distance $3 \cdot 5^{i}$; such a node exists since $N_{i}$ is a $3 \cdot 5^{i}$-cover for $N_{i-1}$.
(We may assume that for a pair $(p,i+1),(p,i)$ of net-points, $(p,i+1)$ will be the parent of $(p,i)$ in $T$.)
Thus, any descendant of every $i$-level node can reach it by climbing a path of weight at most $\sum_{j \in [i]} 3 \cdot 5^{i} \le 4 \cdot 5^{i}$.
By Fact \ref{prop:small_net}, since the doubling dimension is constant, each node has only a constant number of children.
For a tree node $v = (p,i)$, let $Ch(v)$ denote the set of children of $v$ in the net-tree $T$.

\vspace{4pt} 
\noindent
{\bf Net-tree Spanner via Cross Edges.}
We recap the basic  spanner construction $H = H(X)$ of \cite{GGN04,CGMZ05}.
\\In order to achieve stretch $(1 + \eps)$,
for each   $i \in [0,\ell-1]$,
\emph{cross edges} are added between $i$-level nodes that are
close together with respect to the distance scale $5^i$ at that level.
Specifically, for any pair $(p,i),(q,i)$ of distinct $i$-level nodes such that $\delta(p,q) \leq \gamma  5^i$,
for some parameter $\gamma = \Theta(\frac{1}{\eps})$,
we add a cross edge between $(p,i)$ and $(q,i)$ of weight $\delta(p,q)$.
The basic spanner construction $H$ is obtained as the union of the \emph{tree edges} (those connecting
nodes with their children in $T$) and the cross edges,
where an edge between a pair of nodes is translated to an edge between the corresponding points.

By Fact \ref{prop:small_net}, the degree of any tree node in the spanner $H$ (due to tree and cross edges) is $\eps^{-O(d)}$.
Moreover, it can be shown that $H$ has   $n \cdot \eps^{-O(d)}$ edges \cite{GGN04,CGMZ05}.
On the other hand, as the same point may have instances in many nets (and thus in many tree nodes),
the degree of a point may be unbounded.

The following lemma from  \cite{GGN04,CGMZ05} gives the essence of the basic cross edges framework.

\begin{lemma}[Basic Cross Edges Framework Guarantees Low Stretch]
\label{lemma:cross_edge}
Consider the basic spanner construction $H$,
whose edge set consists of all the tree edges and the cross edges (defined with respect to some parameter $\gamma = O(\frac{1}{\eps}$)).
For each $p,q \in X$, the spanner ${H}$ contains a $(1+\eps)$-spanner path $\Pi_{p,q}$, obtained
by climbing up from the leaf nodes $(p,0)$ and $(q,0)$ to some $j$-level ancestors $(p',j)$ and $(q',j)$, respectively,
where $(p',j)$ and $(q',j)$ are connected by a cross edge and $\delta(p,q) = \Theta(\frac{1}{\eps}) \cdot 5^j$.
\end{lemma}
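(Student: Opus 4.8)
The plan is to prove Lemma~\ref{lemma:cross_edge} by induction on the distance scale, exhibiting the spanner path explicitly. Fix $p, q \in X$ with $p \neq q$, and let $j$ be the unique integer with $\gamma' \cdot 5^{j-1} < \delta(p,q) \le \gamma' \cdot 5^j$ for an appropriate constant $\gamma' = \Theta(1/\eps)$ chosen slightly below $\gamma$ (so that there is room for the triangle-inequality slack below). First I would argue that at level $j$ the leaves $(p,0)$ and $(q,0)$ have ancestors $(p',j)$ and $(q',j)$ respectively, since every node has an ancestor at every higher level up to $\ell$ (note $j \le \ell$ because $\delta(p,q) \le \Delta$). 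The cost of climbing from $(p,0)$ to $(p',j)$ is at most $\sum_{i=1}^{j} 3 \cdot 5^i \le 4 \cdot 5^j$ by the bound established in the net-tree paragraph, and likewise for $q$. Next, $\delta(p',p) \le 4 \cdot 5^j$ and $\delta(q',q) \le 4 \cdot 5^j$, so by the triangle inequality $\delta(p',q') \le \delta(p,q) + 8 \cdot 5^j \le (\gamma' + 8)\cdot 5^j \le \gamma \cdot 5^j$, provided $\gamma \ge \gamma' + 8$. Hence $(p',j)$ and $(q',j)$ are joined by a cross edge of weight $\delta(p',q') $, and concatenating the two climbs with this cross edge yields a path $\Pi_{p,q}$ in $H$.

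**Next I would bound the total weight** of $\Pi_{p,q}$. The two vertical climbs contribute at most $2 \cdot 4 \cdot 5^j = 8 \cdot 5^j$, and the cross edge contributes $\delta(p',q') \le \delta(p,q) + 8 \cdot 5^j$. So the total weight is at most $\delta(p,q) + 16 \cdot 5^j$. Since $\delta(p,q) > \gamma' \cdot 5^{j-1} = (\gamma'/5) \cdot 5^j$, we have $5^j < (5/\gamma') \cdot \delta(p,q)$, whence the total weight is at most $\delta(p,q)\,(1 + 80/\gamma')$. Choosing $\gamma' \ge 80/\eps$ (and correspondingly $\gamma = \gamma' + 8 = \Theta(1/\eps)$) makes this at most $(1+\eps)\,\delta(p,q)$, which is the desired stretch. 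The bookkeeping also confirms the claimed relations: $(p',j)$ and $(q',j)$ are connected by a cross edge, and $\delta(p,q) = \Theta(\tfrac1\eps) \cdot 5^j$ since $\gamma'/5 \cdot 5^j < \delta(p,q) \le \gamma' \cdot 5^j$ with $\gamma' = \Theta(1/\eps)$.

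**A subtlety I would need to address** is that the statement asserts the $j$-level ancestors are \emph{distinct} nodes connected by a cross edge — if $p$ and $q$ happen to share the same $j$-level ancestor, there is no cross edge. I would handle this by taking $j$ to be the \emph{smallest} level at which the ancestors coincide, or simply observing that if they already coincide at level $j$ as defined above, one can instead route through a common-ancestor argument at a slightly lower level; alternatively, redefine $j$ as the largest level for which the ancestors are still distinct and check that $\delta(p,q) = \Theta(5^j/\eps)$ still holds there because at level $j+1$ they merge, forcing $\delta(p,q) \le O(5^{j+1})$. In fact the cleanest route is: since the ancestors of $(p,0)$ and $(q,0)$ eventually merge at the root, and they are distinct at level $0$ (as $p \neq q$), let $j$ be the largest level at which they are distinct; then climbing $j$ levels costs $O(5^j)$ on each side, the level-$j$ ancestors are within $O(5^j)$ of each other (they have a common parent at level $j+1$), hence within $\gamma \cdot 5^j$, hence cross-edge-connected, and $\delta(p,q) \le O(5^j/\eps)$ follows; a matching lower bound $\delta(p,q) = \Omega(5^j)$ comes from the packing structure of $N_j$. **The main obstacle** is reconciling this "largest distinct level" choice with the clean scale relation $\delta(p,q) = \Theta(5^j/\eps)$ — one must verify both inequalities simultaneously and make sure the same constant $\gamma$ works uniformly. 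I would resolve this by a careful case analysis on where $p,q$ first merge in $T$, using Fact~\ref{prop:small_net} and the $(3\cdot 5^i, 5^i)$-net parameters to pin down the scale, and I expect this is precisely the place where the specific net parameters (rather than generic ones) are used.
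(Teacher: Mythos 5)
The paper does not actually prove Lemma \ref{lemma:cross_edge}; it is quoted as a known result from \cite{GGN04,CGMZ05}, so there is no in-paper proof to compare against. Your first two paragraphs are the standard argument and they are correct and complete: defining $j$ by $\gamma'\cdot 5^{j-1} < \delta(p,q) \le \gamma'\cdot 5^{j}$, climbing at cost at most $4\cdot 5^{j}$ per side, verifying the cross edge exists via the triangle inequality, and bounding the stretch by $1+80/\gamma'$ is exactly how this is done. (One degenerate case you should dispatch explicitly: if $\delta(p,q)\le\gamma'$ then your $j$ is $\le 0$, and you should just take $j=0$, where the single level-$0$ cross edge $(p,q)$ gives stretch $1$; in that case only the upper bound $\delta(p,q)=O(5^{j}/\eps)$ of the scale relation survives, which is all the paper ever uses.)

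The ``subtlety'' in your third paragraph is a non-issue under your own choice of $j$, and your proposed fix is the one part of the write-up that is actually wrong. If $(p',j)=(q',j)$ then $\delta(p,q)\le \delta(p,p')+\delta(q',q)\le 8\cdot 5^{j}$, which contradicts $\delta(p,q) > (\gamma'/5)\cdot 5^{j} \ge 16\cdot 5^{j}$ for $\gamma'\ge 80$; so with your $j$ the two ancestors are automatically distinct and no case analysis is needed. By contrast, the alternative route you call ``cleanest'' --- redefining $j$ as the largest level at which the ancestors are distinct --- does not work: the packing property of $N_{j}$ gives $\delta(p',q')>5^{j}$ for the two \emph{net points}, but it does not give $\delta(p,q)=\Omega(5^{j})$ for the original points, since each of $p,q$ may lie up to $4\cdot 5^{j}$ from its ancestor. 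Two points at distance $1$ can retain distinct ancestors up to a high level, in which case the climbing cost $8\cdot 5^{j}$ would dwarf $\delta(p,q)$ and the stretch bound would fail. Keep the argument of your first two paragraphs, add the one-line distinctness observation, and delete the third paragraph.
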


As mentioned above, 
the degree of the spanner $H$ is unbounded. To reduce the degree of the spanner, we will appoint to each internal node of the net-tree $T$ a \emph{surrogate},
which is a point in $X$ that is nearby (with respect to the distance scale at that level).
More specifically, a node $(p,i)$ of $T$, $i \in [\ell]$, will be appointed a surrogate $q \in X$ such that $\delta(p,q) = O(5^i)$;
the surrogate of a node $(p,i)$ will be denoted by $s(p,i)$. (For technical convenience, we   define the surrogate of a leaf node $(p,0)$ as $p$ itself, i.e., $s(p,0) = p$.)
Another important requirement from a surrogate of an $i$-level node is that its degree due to cross edges at lower levels $j \in [0,i-1]$
would be sufficiently small; we will discuss this issue in detail later on.

Given a surrogate for each  tree node, the basic spanner construction $H$ is translated
to its \emph{surrogate spanner} $s(H)$ by replacing each tree edge $e = ((p,i+1),(q,i))$ (respectively, cross edge $e = ((p,i),(q,i))$) with its \emph{surrogate edge} $s(e) = (s(p,i+1),s(q,i))$ (resp., $s(e) = (s(p,i),s(q,i))$).
We will demonstrate (Section 3) that surrogates can be appointed to nodes, so that the resulting surrogate spanner has constant degree.
Moreover, by increasing the number of surrogates with which each node is appointed from 1 to (at most) $k+1$, and thus replacing each edge of the basic spanner $H$
by a bipartite clique between the corresponding surrogates, we will obtain a $k$-FT spanner with optimal degree $O(k)$.

\begin{lemma}[Extended Cross Edges Framework Guarantees Low Stretch]
\label{lemma:cross_edge2}
For each $p,q \in X$, consider the $(1+\eps)$-spanner path $\Pi_{p,q}$
in $H$ that is guaranteed by Lemma \ref{lemma:cross_edge}. 
The corresponding \emph{surrogate path} $s(\Pi_{p,q})$ in $s(H)$, obtained by replacing each edge $e$ of $\Pi_{p,q}$ by its surrogate edge $s(e)$,
%
is a $(1 + O(\eps))$-spanner path (between the same endpoints $p$ and $q$). We can reduce the stretch of the path from $1+O(\eps)$ back to $1+\eps$ by scaling $\gamma$ up by an appropriate constant.
\end{lemma}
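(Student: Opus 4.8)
The plan is to take the $(1+\eps)$-spanner path $\Pi_{p,q}$ guaranteed by Lemma \ref{lemma:cross_edge} and bound the length blow-up incurred when each edge $e$ is replaced by its surrogate edge $s(e)$. Recall the structure of $\Pi_{p,q}$: it climbs from the leaf $(p,0)$ up a path of tree edges to some $j$-level ancestor $(p',j)$, crosses to $(q',j)$ via a cross edge, and descends a path of tree edges to $(q,0)$; moreover $\delta(p,q) = \Theta(\frac1\eps)\cdot 5^j$. First I would record the single quantitative fact we need about surrogates: a node $(a,i)$ on level $i$ has $\delta(a, s(a,i)) = O(5^i)$, and $s(a,0)=a$. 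From this, each surrogate tree edge $s((a,i+1),(b,i)) = (s(a,i+1),s(b,i))$ has weight at most $\delta(a,b) + \delta(a,s(a,i+1)) + \delta(b,s(b,i)) = \delta(a,b) + O(5^{i+1})$ by the triangle inequality, and similarly each surrogate cross edge on level $i$ has weight at most $\gamma 5^i + O(5^i) = O(\frac1\eps)5^i$.

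Next I would sum these bounds along $s(\Pi_{p,q})$. The two tree paths together contribute, in the original spanner, total weight $O(5^j)$ (by the net-tree climbing bound $\sum_{i\le j} 3\cdot 5^i \le 4\cdot 5^j$ quoted in the Preliminaries, applied on each side). Summing the surrogate-induced extra terms $O(5^{i+1})$ over the levels $i = 0,\dots,j-1$ on each side is again a geometric series dominated by its top term, hence $O(5^j)$ in total; and the single surrogate cross edge contributes $O(\frac1\eps)5^j$. Adding the original path length $(1+\eps)\delta(p,q)$ and all the extra terms, the total is at most $(1+\eps)\delta(p,q) + O(5^j)$. Since $5^j = \Theta(\eps\,\delta(p,q))$, the extra additive term is $O(\eps)\cdot\delta(p,q)$, so $s(\Pi_{p,q})$ has length at most $(1+O(\eps))\delta(p,q)$.

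Finally I must check that $s(\Pi_{p,q})$ is genuinely a path \emph{between $p$ and $q$}: its first edge is a surrogate tree edge whose endpoint at level $0$ is $s(p,0) = p$, and likewise its last endpoint is $s(q,0) = q$, so the endpoints are correct, and consecutive surrogate edges share the surrogate of their common net-tree node, so the edges do chain up into a walk (which contains a path of no greater length). The last sentence of the lemma is immediate: replacing $\gamma$ by $c\gamma$ for a suitable constant $c$ rescales the hidden constant in the $O(\eps)$ blow-up, so $1+O(\eps)$ becomes $1+\eps$; one only needs that all the $O(\cdot)$ constants above are independent of $\eps$, which they are (they depend only on the doubling dimension). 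The main obstacle is purely bookkeeping: making sure the geometric sums of the surrogate-displacement error terms on each side are correctly bounded by $O(5^j)$ rather than by something like $j\cdot O(5^j)$ — this is where using $\delta(a,s(a,i)) = O(5^i)$ at each level, so that the errors decay geometrically down the tree, is essential.
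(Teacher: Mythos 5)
Your proof is correct and is exactly the standard triangle-inequality/geometric-sum argument that the paper implicitly relies on: the paper states Lemma~\ref{lemma:cross_edge2} without proof, treating it as a routine consequence of the cross-edges framework together with the surrogate displacement bound $\delta\bigl((a,i),s(a,i)\bigr)=O(5^i)$. Your accounting — geometric decay of the per-level surrogate errors giving a total additive error of $O(5^j)=O\!\left(\frac{1}{\gamma}\right)\delta(p,q)$, which is absorbed into the stretch and shrunk by scaling $\gamma$ up — is precisely the intended justification.
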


\section{Fault-Tolerant Spanners with Optimal Degree}
In this section we devise a construction of $k$-FT spanners with optimal degree $O(k)$.
The description of the construction is given in Section 3.1,
whereas its analysis is given in Section 3.2 (degree analysis) and Section 3.3 (fault-tolerance and stretch analysis).

\subsection{The Construction}
\vspace{0.07in}
\noindent
{\bf 3.1.1 ~The Basic Sceheme.~}
Recall that the weight of $i$-level cross edges in the basic spanner construction $H$ is at most $\gamma 5^i$, where $\gamma = O(\frac{1}{\eps})$,
and define $\tau = \lceil \log_5 \gamma \rceil + 1$.
Also, let $\xi = \xi(\eps,dim) = \eps^{-O(d)}$
be an upper bound for the maximum degree of any tree node in the basic spanner $H$.


The general approach is to try appointing $k+1$ surrogates $s_1(p,i),\ldots,s_{k+1}(p,i)$ for each tree node $(p,i)$, 
so that for any $k$ node failures in the network, at least one surrogate of $(p,i)$ will be functioning.
The set $S(p,i) = \{s_1(p,i),\ldots,s_{k+1}(p,i)\}$ will be called the \emph{surrogate set} of $(p,i)$.
A surrogate $s \in S(p,i)$ of an $i$-level node $(p,i)$ is a point $q$ at distance
$\delta(p,q) = O(5^i)$ from $p$,
whose degree due to cross edges at lower levels $j \in [0,i-1]$ is at most $D = (\tau+4) \xi^2 (2k+1)$;
we henceforth refer to $D$ as the \emph{degree threshold}.

The basic spanner  $H$ is given as the union of the tree and cross edges. 
The FT-spanner construction $\mathcal H$ is obtained from $H$ by replacing each edge of $H$ with a bipartite
clique. Specifically, every cross edge $((p,i),(q,i))$ is replaced with a bipartite
clique between the corresponding surrogate sets $S(p,i)$ and $S(q,i)$. Such a cross edge  $((p,i),(q,i))$ is referred to as an \emph{$i$-level cross edge}
(of the basic spanner $H$). 
Moreover, we may also refer to the edges of the corresponding bipartite clique as cross edges (of the FT spanner $\cH$).
Similarly, a tree edge $((p,i+1),(q,i))$ is replaced with a bipartite clique between $S(p,i+1)$ and $S(q,i)$;
such a tree edge $((p,i+1),(q,i))$ is referred to as an \emph{$i$-level tree edge} (of the basic spanner $H$), and we may also refer to the edges of the corresponding
bipartite clique as tree edges (of the FT spanner $\cH$).
Roughly speaking, each bipartite clique that replaces an $i$-level tree edge
is incarnated in a clique that replace some $i$-level or $(i+1)$-level cross edge; refer to Section 3.2.3 for a rigorous argument.
For the intuitive discussion of this section we will henceforth restrict our attention to cross edges.



Next, we describe how to compute the surrogate sets, which is the crux of the problem.

For a node $(p,i)$, denote by $D(p,i)$ its \emph{descendant set}, i.e., the set of points in its descendant leaves. 
Note that for each point $q \in D(p,i)$,
we have $\delta(p,q) \le 4 \cdot 5^{i}$. In particular, any point $q \in D(p,i)$ is at a small enough distance from $p$ to serve as $(p,i)$'s surrogate.
If there are less than $k+1$ points in $D(p,i)$ whose degree is at most $D$, we need to look for surrogates
outside $D(p,i)$; in particular,
we can look for nearby $i$-level nodes $(q,i)$, with $\delta(p,q) = O(5^i)$,
and try to use somehow their descendant  sets $D(q,i)$.
We remark that in order to achieve $k$-fault-tolerance,  it is sufficient
that the  surrogate set $S(p,i)$ of a node $(p,i)$ will contain its entire descendant set $D(p,i)$, 
even if $|D(p,i)| \ll k+1$. Nevertheless, we will see later that even in such cases it is still advantageous for
a node to have $k+1$ surrogates, if possible.

The surrogate sets are computed bottom-up, starting at the leaf nodes.
For a leaf $(p,0)$, it is enough to take just $p$ to the surrogate set $S(p,0)$, i.e., $S(p,0) = D(p,0) = \{p\}$. 
More generally, we can take the surrogate set $S(p,i)$ to be $D(p,i)$, for  small levels $i = O(1)$ in the tree.
However, as we climb up the tree, we might not be able to guarantee that $S(p,i)$ contains the entire descendant set $D(p,i)$ due to degree violations;
in this case it is critical to guarantee that $|S(p,i)| = k+1$.
For any point $a \in D(p,i)$ that cannot be taken to the surrogate set $S(p,i)$, its degree must exceed the degree threshold $D$,
which, in turn, implies that many cross edges at lower levels are incident on nodes of which $a$ is a surrogate; more specifically, since 
each node may be incident on at most $\xi$ cross edges at each level (in the basic spanner $H$)
and $D$ is sufficiently large, it follows that some cross edges
at levels smaller than $i-\tau$ must be incident on nodes of which $a$ is a surrogate.
Intuitively, any $j$-level cross edge between a $j$-level node $(\tilde a,j)$ for which $a \in S(\tilde a,j)$ and some other $j$-level node $(q,j)$, where $j \le i - \tau$, should give rise to additional
points that are close to $a$ and thus also to $p$, namely, those in $D(q,j)$. 
Indeed, first note that the distance between $\tilde a$ and its surrogate $a$ is $O(5^j)$.
Second, the weight of $j$-level cross edges is at most $\gamma 5^j$, and so $\delta(a,q) \le O(5^j) + \gamma 5^j \le O(5^j) + \gamma 5^{i-\tau} = O(5^{i})$.
Hence for any point $b \in D(q,j)$, $$\delta(p,b) ~\le~ \delta(p,a) + \delta(a,q) + \delta(q,b) ~\le~ 4 \cdot 5^{i} + O(5^{i}) + 4 \cdot 5^{j} ~=~ O(5^i).$$
This means that
we can take the points in $D(q,j)$ of small degree
to the surrogate set $S(p,i)$ of $(p,i)$; however, if there are points in $D(q,j)$ whose degree exceed  $D$, we will try to apply this argument again. 
Consequently, a surrogate of a node $(p,i)$ need not belong to $D(p,i)$, but rather to a (possibly much larger) set $F(p,i)$ 
which we refer to as the \emph{friend set} of $p$.
This set $F(p,i)$ will contain $O(k)$ points of small degree that are at distance $O(5^i)$ from $p$ (hence $F(p,i)$ is some subset
of $O(k)$ points from the \emph{ball set} $B(p,i)$ mentioned in Section 1.3). 
We need to show that whenever the degrees of surrogates of some node $(p,i)$ are about to exceed $D$, 
there are at least $k+1$ points in $F(p,i)$ of small degree which can be appointed as new surrogates instead of the old ones,
and so the degree bound will always be in check.

For now assume that the degree of a node $(p,i)$ increases only due to cross edges (the degree contribution due to tree edges can be easily controlled,
as we shall later see).
Consider a cross edge $((p,i),(q,i))$, which gives rise to a bipartite clique between $S(p,i)$ and $S(q,i)$,
and thus increases the degree of each point in $S(p,i)$ by $|S(q,i)|$ units. We will show that $|F(q,i)| \ge |S(q,i)|$, 
and the key idea is to let $(q,i)$ share its friend set with $(p,i)$.
In this way we compensate $(p,i)$ at a value that matches its cost due to cross edges.
However, note that $p$ and $q$ may be at distance $\gamma 5^i$ apart (recall that $\gamma 5^i$ is the upper bound on the weight
of $i$-level cross edges),
and so the points in $F(q,i)$ may be too far from $(p,i)$ to serve as its friends. Thus the points of $F(q,i)$ are not
moved immediately to $F(p,i)$, but rather to a temporary set $R(p,i)$ called the \emph{reserve set} of $(p,i)$.
All the points in $R(p,i)$ will be moved to $F(p,i)$ within $\tau$ levels.
During these $\tau$ levels in which some point in the reserve set is too far to become a friend, 
we are going to re-appoint the same surrogates over and over.
That is,  new surrogates of a node are appointed  for a long \emph{term} of $\Omega(\tau)$ levels, 
which means that no ancestor of such a node in the next $\Omega(\tau)$ levels will appoint new surrogates.
During this term of $\Omega(\tau)$ levels, the points of $R(p,i)$ become closer and closer to the respective ancestor of $(p,i)$, until they are close enough
to become its friends and thus appointed as its surrogates--at this stage new surrogates are appointed to that ancestor, also for a term of $\Omega(\tau)$ levels.

As mentioned in Section 1.3, one may get the \emph{wrong} impression that our FT spanner $\cH$ contains $O(k^2 n)$ edges rather than $O(k  n)$ edges.
Indeed, each edge $(x,y)$ of the basic spanner $H$ is translated into a bipartite clique in $\cH$ between the corresponding surrogate sets $S(x)$ and $S(y)$.
Since $S(x)$ and $S(y)$ may contain $\Theta(k)$ points each, the size of such a bipartite clique may be as large as $\Theta(k^2)$.
Recalling that the basic spanner has $\Theta(n)$ edges, it may seem
that our FT spanner $\cH$ contains $\Theta(k^2 n)$ edges, which is far too much!
There are two reasons why the number of edges in our FT spanner is in check.
First, it turns out that many nodes $x$ have a small surrogate set $S(x)$; such a node $x$ is called \emph{small} (see Section 3.1.2 for more details).
Consequently, many edges $(x,y)$ of the basic spanner satisfy $|S(x)| + |S(y)| \ll k$, which implies that many of the bipartite cliques are sparse.
Second, even though there are still many edges $(x,y)$ in the basic spanner which satisfy $|S(x)| + |S(y)| = \Theta(k)$,
we demonstrate that most of them are in fact \emph{redundant}, and such redundant edges need not be replaced by bipartite cliques. 
The issue of redundant edges is related to the notions of leeches and hosts, and is discussed in detail in Section 3.1.3.

\vspace{0.07in}
\noindent
{\bf 3.1.2 ~Complete Versus Incomplete, Small Versus Large, Clean Versus Dirty.~}
A node $(p,i)$ is called \emph{complete} if $|S(p,i)| \ge k+1$; 
otherwise $|S(p,i)| < k+1$, and it is   \emph{incomplete}. A node is called \emph{large} if $|F(p,i) \cup S(p,i)| \ge 2k+2$;
otherwise $|F(p,i) \cup S(p,i)| < 2k+2$, and it is   \emph{small}. 
Since any point of $F(p,i)$ can serve as a surrogate, our construction guarantees that any large node must be complete.
The friend set $F(p,i)$ of $(p,i)$ will contain $O(k)$ (more specifically, at most $3k+3$) clean \emph{10-friends} of $(p,i)$; we say that a point $q$ or a node $(q,i)$ is a $t$-friend of some 
$i$-level node $(p,i)$ if $\delta(p,q) \le t \cdot 5^i$.




Small and large nodes are handled differently. 
For a small node $(p,i)$, all points in $D(p,i)$ are appointed as surrogates, i.e., we assign $S(p,i) = D(p,i)$.
We will show later that $D(p,i) \subseteq F(p,i)$.\footnote{The equation $D(p,i) = S(p,i) \subseteq F(p,i)$ holds for a small \emph{non-leech} $(p,i)$,
but does not necessarily hold for a small \emph{leech}. The definition of a leech (and non-leech) is given in Section 3.1.3.}
This means that a small node must have less than $2k+2$ surrogates. 
Even though we may upper bound the number of surrogates by $k+1$, we do not attempt to do this;
it turns out that allowing more than $k+1$ (but up to $2k+2$) surrogates 
is useful for simplifying the analysis to some extent.
Assuming by induction that $S(x) = D(x)$ holds for every child $x$ of $(p,i)$,  we have
$S(p,i) = D(p,i) = \bigcup_{x \in Ch(p,i)} D(x) = \bigcup_{x \in Ch(p,i)} S(x)$. 
In other words, the old surrogates (from the surrogate sets of $(p,i)$'s children) are re-used.

Note that a small node re-uses the surrogates of its children.  
Similarly to small nodes, we would like a large node to re-use the surrogates of (one of) its children.
However, sometimes a large node must appoint new surrogates (this happens quite rarely).
Specifically, an appointment of new surrogates to a large node $x$ is made for a
long \emph{term} of at least $\tau + 3 = \Omega(\tau)$ levels; recall that $\tau = \lceil \log_5 \gamma \rceil + 1$.
During this term, each ancestor of $x'$ of $x$ would re-use the same surrogates of $x$ (in fact, $x'$ may choose to use the surrogates of another descendant of $x'$); we call $x$ an \emph{appointing node}. 
The surrogates $S(x)$ of $x$ are chosen arbitrarily from its friend set $F(x)$, and so $S(x) \subseteq F(x)$.
At the end of such a term, $k+1$ new surrogates are appointed to the corresponding large ancestor $y$ of $x$
from the friend set of $y$.  
We need to show that the friend set $F(y)$ of the appointing node $y$ at this stage 
contains at least $k+1$ 10-friends of $y$ of small degree, which can be appointed as its new surrogates.
In fact, we will show something stronger, namely, that $F(y)$ contains at least $2k+2$ (rather than $k+1$) such friends.
Hence $k+1$ arbitrary points out of them will be appointed as the new surrogates of $y$, and the other at least $k+1$ such friends may be used for compensating neighboring (due to cross edges) nodes.   


At the outset all points are marked as \emph{clean}. A point $p$ remains clean as long as it is appointed as a surrogate of small nodes.
However, at the first time $p$ is appointed as a \emph{new} surrogate of a large node
it becomes \emph{dirty}. 
Once $p$ becomes dirty, it will never be re-appointed  as a new surrogate again.
Thus, each point can be appointed as a new surrogate of at most one large node, and this appointment is for a term of at least $\tau + 3 = \Omega(\tau)$ levels.
We will make sure that during such a term the degree of any surrogate will increase by at most  $\eps^{-O(d)} \cdot (2k+1)$ units, and so the degree bound will be in check.

\ignore{
We also use another notion of \emph{semi-clean point}, which is on the border between being clean and being dirty.
though it is more clean than dirty]]
In other words, a semi-clean point may be viewed as either clean or dirty, depending on the circumstances.
As mentioned, a large node $(p,i)$ appoints to itself $k+1$ new surrogates from $F(p,i)$.
We said that the points from $F(p,i)$ that do not get appointed as new surrogates of $(p,i)$ may be used for compensating neighboring nodes;
more specifically, these points will be marked as \emph{semi-clean}.
We say that a clean (respectively, dirty)
point that is not marked as semi-clean is \emph{fully-clean} (resp., \emph{fully-dirty}).
Once a point $q$ becomes semi-clean, we try to prevent it from getting re-appointed as a surrogate of small nodes again;
more specifically, from that stage on $q$ may serve as a surrogate only for small nodes
$v$ for which $q$ is a surrogate of at least one child of $v$; more details will be provided in Section 3.1.4.
At a subsequent stage $q$ may get appointed as a new surrogate of a large node and then get fully-dirty.
}


Our construction will guarantee that either all surrogates of a node are clean, or they are all dirty.
We henceforth say that a node is \emph{clean} (respectively, \emph{dirty}) if  all its surrogates are clean (resp., dirty).
A dirty node will be complete, whereas a clean node may be complete or incomplete.
Also, a large node will be dirty, whereas a small node is usually (except for a leech) clean; see Section 3.1.3 for more details.




\vspace{0.07in}
\noindent
{\bf 3.1.3 ~A Symbiosis Between Leeches and Hosts.~} 
Appointing new surrogates to large nodes is a costly operation, and should be avoided whenever possible.
Specifically, before new surrogates are appointed to a large node $(p,i)$ (which happens at the beginning of each new term),
we look for a dirty node $(q,i)$ whose surrogates' term has not finished yet, 
which is a 24-friend of $(p,i)$, i.e., $\delta(p,q) \le 24 \cdot 5^i$. 
If no such node is found, we appoint $k+1$ new surrogates to $(p,i)$ as described above.
Otherwise, 
we re-use the surrogates of an arbitrary such 24-friend $(q,i)$ of $(p,i)$ as the surrogates of $(p,i)$,
i.e., we assign $S(p,i) = S(q,i)$. Observe that $(q,i)$ is dirty and thus also complete;
hence this assignment turns $(p,i)$ too into a dirty and thus complete node.  
We say that $(q,i)$ is a \emph{leech} and $(p,i)$ is its \emph{host}. 

As mentioned, it is advantageous for a node to have (at least) $k+1$ surrogates, or in other words, to be complete.
We use the leech-host idea described above to turn clean (and possibly incomplete) nodes into dirty (and necessarily complete) ones. Specifically, 
recall that a clean node $(p,i)$ re-uses all the surrogates of its children by appointing them as its surrogates.
Before this is done, we look for a potential host for $(p,i)$, i.e., a dirty non-leech 24-friend $(q,i)$ of $(p,i)$.
If no such  node is found, we assign $S(p,i) = D(p,i)$ as before. Otherwise, we re-use the surrogates
of $(q,i)$ as the surrogates of $(p,i)$, i.e., we assign $S(p,i) = S(q,i)$, which turns $(p,i)$ into a dirty and  complete node.

A node may be the host of many leeches, whereas a leech has only one host. Moreover, a node cannot be both a leech and a host, and it may be neither of them. 
Being a leech is an inherited trait: If $(p,i)$ is a leech of $(q,i)$, then $(p,i)$'s parent is close enough (i.e., a 24-friend) to become a leech
of $(q,i)$'s parent (this is irrelevant if the term of the host's surrogates is   over). 
However, in case $(p,i)$ has a dirty non-leech sibling whose surrogates' term is not over, we will prevent $(p,i)$'s parent from becoming a leech.
In this case $(p,i)$'s parent will re-use the surrogates of such a dirty non-leech sibling of $(p,i)$, and will become a dirty non-leech.
This is the only scenario when we prevent a node from becoming a leech. 

A clean node is necessarily small. 
On the other hand, a dirty node is not necessarily large.  
The only exception is when the dirty node is a leech. Indeed, a leech may be both small and dirty.
%
On the other hand, we will show that a non-leech dirty node must be large (see Lemma \ref{key}). 
We will also show that for a clean node $(p,i)$, we have $D(p,i) = S(p,i) \subseteq F(p,i)$. For a dirty node, this equation does not hold.
We remark that a clean node may be complete or incomplete, and it may contain up to $2k+1$ surrogates. On the other hand, a dirty node
must be complete, and it contains exactly $k+1$ surrogates.



There is an interesting interplay between leeches and hosts.
On the negative side, a leech exploits the host by using its surrogates and overloading their degree due to cross edges that are incident on that leech.
However, each host will have at most $O(1)^{O(d)}$ leeches at each level, and so the surrogates' degrees
due to leeches will be greater than the surrogates' degrees due to the host by a small factor.
\\On the bright side, there are important advantages of exploiting the host.
First, this allows us to ``ignore'' some cross edges of the basic spanner construction (and  reduce the surrogates' degrees), specifically, 
those between leeches and their host as well as between all leeches of the same host; indeed,
such cross edges connect nodes with the same surrogate set, and are thus \emph{redundant}. 
Second, since leeches do not appoint new surrogates, the clean friends in $F(y)$ of the corresponding host $y$
will remain clean until the term of its surrogates $S(y)$ is over. Consequently, this approach enables $y$ and its non-leech ancestors to accumulate more and more clean 
points in their friend sets.
Whenever a new term starts, new host and leeches are determined. 
At this stage we will have enough clean points in $F(\tilde y)$
to appoint as surrogates of the new host $\tilde y$.
Finally, we remark that an ancestor of a leech (respectively, host) can become a host (resp., leech).
To summarize, there is a symbiosis between leeches and hosts, from which everyone enjoy.  



\vspace{0.07in}
\noindent
{\bf 3.1.4 ~Putting it All Together.~}
Recall that once the surrogate sets $S(x)$ of all nodes $x$ in $T$ are computed, the FT spanner $\mathcal H$ is obtained
by replacing each edge of the basic spanner $H$ with a bipartite clique. We remark that
\emph{redundant edges} that connect nodes with the same surrogate set are disregarded.

We next show how to compute the surrogate sets of nodes,
by putting all the ingredients that were described in Sections 3.1.1-3.1.3 together.
The surrogate sets and the auxiliary sets (the descendant, friend and reserve sets) are computed bottom-up. 
That is, we first compute these sets for the 0-level nodes (i.e, the leaves of $T$), then for the 1-level nodes, and so forth.
More specifically, 
we employ Procedure $ComputeSets_{(i)}$ to compute these sets for the $i$-level nodes, for $i = 0,\ldots,\ell$.
Procedure $ComputeSets_{(i)}$ has two parts. The first part of this procedure computes the descendant, friend and reserve sets of the $i$-level nodes.
(We omit the  computation of the descendant sets of the $i$-level nodes in the first part of Procedure $ComputeSets_{(i)}$, $i \in [0,\ell]$, as it can be done in a straightforward manner.)
Equipped with these sets, the second part of Procedure $ComputeSets_{(i)}$ computes the surrogate sets of the $i$-level nodes.


There is a   difference between Procedure $ComputeSets_{(0)}$ (which corresponds to the case $i=0$) and Prcoedure $ComputeSets_{(i)}$, for $i \ge 1$.
We start with describing Procedure $ComputeSets_{(0)}$.

The first part of Procedure $ComputeSets_{(0)}$ goes over the $0$-level (leaf) nodes in an arbitrary order.
Consider an arbitrary leaf node $(p,0)$. We first put $p$ in both the reserve set $R(p,0)$ and the friend set $F(p,0)$ of $(p,0)$.
Next, for every cross edge $((p,0),(q,0))$ that is incident on $(p,0)$ in the basic spanner $H$  (including redundant edges),
we put $q$ in the reserve set $R(p,0)$ of $(p,0)$.
Note that $q$ is a $\gamma$-friend of $(p,0)$. If $q$ is also a 10-friend of $(p,0)$,
we add it to the friend set $F(p,0)$ of $(p,0)$; we stop once $|F(p,0)| = 3k+3$.

Having computed the friend and reserve sets $F(p,0)$ and $R(p,0)$ for all leaf nodes $(p,0)$,  the second part of Procedure $ComputeSets_{(0)}$ starts.
As opposed to the first part of the procedure, here it is important to handle nodes $(p,0)$ with $|F(p,0)| \ge 2k+2$ (if any) first, 
and only later handle the rest of the nodes. 
(See the first remark below to understand why this order is important.)
Consider a leaf node $(p,0)$.
We first look for a potential host for $(p,0)$, i.e., a dirty non-leech 24-friend $(q,0)$ of $(p,0)$. 
\begin{itemize}
\item If such a node $(q,0)$ is found, then we assign
$S(p,0) = S(q,0)$, and $(p,0)$ will be a leech of $(q,0)$, and thus dirty and complete.
\item Otherwise we check whether $|F(p,0)| < 2k+2$. 
\begin{itemize}
\item If so, $(p,0)$ will be small and clean, and we assign $S(p,0) = D(p,0) = \{p\}$.
(Notice that $S(p,0) \subseteq F(p,0)$, and so $|F(p,0)\cup S(p,0)| < 2k+2$, and $(p,0)$ is small by definition.)
\item In the complementary case (i.e., $|F(p,0)| \ge 2k+2$) $(p,0)$ will be large; 
in this case we appoint $k+1$ points from $F(p,0)$ as \emph{new surrogates} of $(p,0)$
for a term of $\Omega(\tau)$ levels (see the second remark below),
and these points become dirty. Also, $(p,0)$ itself will be dirty and complete.
\end{itemize}
\end{itemize}
We remind that at the outset all points are marked as clean.
A point $p$  becomes dirty when it is appointed as a new surrogate of a large node; 
$p$  will not be re-appointed as a new surrogate again. 
We say that a node is clean (respectively, dirty) if all its surrogates are clean (resp., dirty).

Next, we describe Procedure $ComputeSets_{(i)}$, for $i \ge 1$,
which computes the surrogate set $S(p,i)$, the descendant set $D(p,i)$, the friend set $F(p,i)$ and the reserve set $R(p,i)$, for all $i$-level nodes $(p,i)$.
We invoke Procedure $ComputeSets_{(i)}$   
only after all Procedures $ComputeSets_{(0)},\ldots,ComputeSets_{(i-1)}$ have terminated.
At this stage, the surrogate set $S(q,j)$, the descendant set $D(q,j)$, the friend set $F(q,j)$ and the reserve set $R(q,j)$ have been computed, for all nodes $(q,j)$ at levels $j \in [0,i-1]$.

The first part of Procedure $ComputeSets_{(i)}$ goes over the $i$-level  nodes in an arbitrary order.
Consider an arbitrary $i$-level node $(p,i)$. We  first put in $R(p,i)$ (respectively, $F(p,i)$) all \emph{clean} points
of $R(q,i-1)$ (resp., $F(q,i-1)$), for every child $(q,i-1)$ of $(p,i)$.
In addition, all (clean) points of $R(p,i)$ that are 10-friends of $(p,i)$ but do not belong to $F(p,i)$ are added to $F(p,i)$;
we stop once $|F(p,i)| = 3k+3$.   
This concludes the computation of $F(p,i)$,
but the computation of $R(p,i)$ is not over yet. We pause the computation of $R(p,i)$ for now,
and proceed to computing the friend sets $F(x)$ of all $i$-level nodes in this way.
Having computed the friend sets of all $i$-level nodes, we return to completing the computation of $R(p,i)$.
For every cross edge $((p,i),(q,i))$ that is incident on $(p,i)$ in the basic spanner $H$ (including redundant edges),
we add to the reserve set $R(p,i)$ of $(p,i)$ all (clean) points from the friend set $F(q,i)$ of $(q,i)$ 
as well as  all  (clean) 10-friends of $(q,i)$ from the reserve set $R(q,i)$ of $(q,i)$,
disregarding points that were already in $R(p,i)$.
This concludes the computation of $R(p,i)$. 
\\
Observe that any 10-friend of a child $(q,i-1)$ of $(p,i)$ is also 
a 10-friend of $(p,i)$ (see Claim \ref{anctrivial} in Section 3.2.1).
Hence, by construction, all points of $F(p,i)$ are 10-friends of $(p,i)$.
Similarly, any $(\gamma+10)$-friend of a child $(q,i-1)$ of $(p,i)$ is also 
a $(\gamma+10)$-friend (in fact, a $\gamma$-friend) of $(p,i)$.
Also, recall that the weight of $i$-level cross edges in the basic spanner $H$ is at most $\gamma \cdot 5^i$.
Hence, by construction, all points of $R(p,i)$ are $(\gamma+10)$-friends of $(p,i)$.


Having computed the friend and reserve sets $F(p,i)$ and $R(p,i)$ for all $i$-level nodes $(p,i)$,  the second part of Procedure $ComputeSets_{(i)}$ starts.
As opposed to the first part of the procedure, here it is important to handle nodes $(p,i)$ with $|F(p,i)| \ge 2k+2$ first, 
and only later handle the rest of the nodes (see the first remark below).
Consider a node $(p,i)$.
The execution of the procedure splits into  two   cases.
\\\emph{Case 1: All children of $(p,i)$ are clean.}
We first look for a potential host for $(p,i)$, i.e., a dirty non-leech 24-friend $(q,i)$ of $(p,i)$.  
\begin{itemize}
\item If such a node $(q,i)$ is found, then we assign
$S(p,i) = S(q,i)$, and $(p,i)$ will be a leech of $(q,i)$, and thus dirty and complete.
\item Otherwise we check whether $|F(p,i)| < 2k+2$. 
\begin{itemize}
\item If so, $(p,i)$ will be small and clean, and we assign $S(p,i) = D(p,i)$.
(We will show in Corollary \ref{bas2} that $S(p,i) \subseteq F(p,i)$, and so $|F(p,i)\cup S(p,i)| < 2k+2$, and $(p,i)$ is small by definition.)
\item In the complementary case (i.e., $|F(p,i)| \ge 2k+2$) $(p,i)$ will be large; in this case we appoint $k+1$ points from $F(p,i)$ as \emph{new surrogates} of $(p,i)$
for a term of $\Omega(\tau)$ levels (see the second remark below),
and these points become dirty. Also, $(p,i)$ itself will be dirty and complete.
\end{itemize}
\end{itemize}
\emph{Case 2: At least one child of $(p,i)$ is dirty.}
We first look for the  dirty non-leech child $(q,i-1)$ of $(p,i)$ (if any) whose surrogates' term started most recently.
(In fact, our analysis shows that any dirty non-leech child of $(p,i)$ whose surrogates' term is not over at level $i-1$ will do.
However, it is more instructive and natural to choose the one whose surrogates' term started most recently.) 
\begin{itemize}
\item If such a child $(q,i-1)$ is found and the term of its surrogates  is not over at level $i-1$, we assign $S(p,i) = S(q,i-1)$. In this case $(p,i)$ becomes a dirty and complete non-leech. 
\item Otherwise, the term of the surrogates of each dirty non-leech child of $(q,i-1)$ (if any) will be over at level $i-1$.
In this case we proceed similarly to case 1. Specifically, we first look for a potential host $(q,i)$ for $(p,i)$, i.e., a dirty non-leech 24-friend $(q,i)$ of $(p,i)$. 
\begin{itemize}
\item If such a node $(q,i)$ is found, then we assign $S(p,i) = S(q,i)$,
and $(p,i)$ will be a leech of $(q,i)$, and thus dirty and complete.  
\item Otherwise, we appoint $k+1$ points from $F(p,i)$ as \emph{new surrogates} of $(p,i)$ 
for a term of $\Omega(\tau)$ levels (see the second remark below), and these points become dirty. Also, $(p,i)$ itself will be dirty and complete.
Unlike case 1 where we may have $|F(p,i)| < 2k+2$ and $(p,i)$ will be clean,
in this case it must hold that $|F(p,i)| \ge 2k+2$, implying that $(p,i)$ is large, and thus dirty and complete.
To this end, we will show (see Corollary \ref{defined}) that whenever the need to appoint new surrogates for an ancestor $(p,i)$ of a dirty node arises, we have $|F(p,i)| \ge 2k+2$.
\end{itemize}
\end{itemize}

\vspace{0.07in}
\noindent
{\bf Remarks:} 
\begin{enumerate}
\item The second part of Procedure $ComputeSets_{(i)}$ (for any $i$, including $i=0$) 
handles   nodes $(p,i)$ with $|F(p,i)| \ge 2k+2$ (if any) first, and only later handles the rest of the nodes. 
The reason we handle nodes in this order is that when looking for a potential host $(q,i)$ for $(p,i)$, we are in fact
looking for a dirty non-leech 24-friend of $(p,i)$. However, we do not want $(q,i)$ to become dirty after handling $(p,i)$, 
assuming $(p,i)$ is clean, thus missing a potential host for $(p,i)$. Handling nodes in this order  guarantees that all the 24-friends of a dirty non-leech node
must be dirty as well. 
\item We mentioned that new surrogates of large nodes are appointed for a term of $\Omega(\tau)$ levels. 
The term of surrogates
consists of two phases. The first phase starts with their appointment, and ends when their degree due to non-redundant cross edges (i.e., disregarding
cross edges that connect nodes with the same surrogate set) \emph{since their appointment}
reaches $k+1$. Notice that the degree of a surrogate before its appointment at a large node may be positive, 
but we only consider its degree since this appointment. Moreover, observe that the degree of a surrogate may also increase
due to cross edges that are NOT incident on the relevant host but rather on the leeches of this host.  

The first phase may end within a single level, but may also last for many levels. When the second phase
starts, the degree of surrogates (since their appointment) is between $k+1$ and  $k + \xi^2(2k+1)$.
The second phase lasts precisely  $\tau + 2$ levels. Thus any term seems to last at least $\tau + 3$ levels.
However, this is true only if these surrogates are re-used over and over, which need not be the case in general. 
Recall that the computation of $S(p,i)$ in Case 2 above started by looking for the dirty non-leech child $(q,i-1)$ of $(p,i)$ 
whose surrogates' term started most recently, and setting $S(p,i) = S(q,i-1)$ if such a child is found.
Consider another non-leech child $(r,i-1)$ of $(p,i)$. Setting $S(p,i) = S(q,i-1)$ forces the term of points in $S(r,i-1)$ to be over
at level $i-1$. (We will show in Claim \ref{fofx} that the surrogate sets of dirty non-leech nodes at the same level are pairwise disjoint,
and so $S(q,i-1) \cap S(r,i-1) = \emptyset$.)
But this is fine, because the surrogates of $S(p,i) = S(q,i-1)$ that $(p,i)$ chose to re-use
are already sufficient for our needs, and so we can safely disregard the surrogates of all other dirty non-leech children of $(p,i)$.
In particular, this approach guarantees that once a large $i$-level node $x$ appoints to itself $k+1$ new surrogates, no ancestor of $x$ at level at most $i + \tau + 2$ 
will appoint to itself new surrogates. 
\end{enumerate}
\subsection{Degree Analysis}
In this section we analyze the degree of the spanner construction $\mathcal H$ described in Section 3.1.


\vspace{0.07in}
\noindent
{\bf 3.2.1 ~Getting Started.~}
We start with  some basic definitions and claims.
For a node $x = (p,i)$, recall that $p(x)$ denotes the corresponding net-point $p$.
The distance between two nodes $x$ and $y$ is defined as the distance between their  net-points $p(x)$ and $p(y)$,
i.e., $\delta(x,y) = \delta(p(x),p(y))$. The distance between a node $x$ and a point $p$ can be defined similarly, i.e., $\delta(x,p) = \delta(p(x),p)$.
Recall that an $i$-level node $x$ is called a \emph{$t$-friend} of another $i$-level node $y$ (respectively, of a  point $p$)
if $\delta(x,y) \le t \cdot 5^i$ (resp., $\delta(x,p) \le t \cdot 5^i$). 
For a point $p$ and any $i \in [0,\ell]$, the \emph{($i$-level) base bag} of $p$ is the unique $i$-level node $x$ such that $p \in D(x)$.

\begin{claim} [Once a Friend, Always a Friend] \label{anctrivial}
~(1) If a point $p$ is a 35-friend (let alone a 10-friend) of an $i$-level node $x$, then it is also a 10-friend of every ancestor of $x$.
~(2) If a point $p$ is a $(\gamma+68)$-friend (let alone  a $\gamma$-friend) of $x$, then it is also a $\gamma$-friend of every ancestor of $x$.
\end{claim}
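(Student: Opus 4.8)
The plan is to reduce both parts to a single triangle-inequality estimate between a node and its parent, and then to iterate it up the net-tree by a trivial induction. The only structural input is the net-tree property recalled in Section~\ref{sec:prelim}: every $i$-level node $x$ has a parent $x'$ at level $i+1$ with $\delta(x,x') = \delta(p(x),p(x')) \le 3\cdot 5^{i+1} = 15\cdot 5^i$.

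For part~(1), let $p$ be a $t$-friend of the $i$-level node $x$, i.e.\ $\delta(p,p(x)) \le t\cdot 5^i$, and let $x'$ be the parent of $x$. By the triangle inequality,
\[
\delta(p,p(x')) \;\le\; \delta(p,p(x)) + \delta(p(x),p(x')) \;\le\; (t+15)\cdot 5^i .
\]
Plugging in $t=35$ gives $\delta(p,p(x')) \le 50\cdot 5^i = 10\cdot 5^{i+1}$, so $p$ is a $10$-friend of $x'$. Since a $10$-friend is in particular a $35$-friend, the same step applied to $x'$ shows $p$ is a $10$-friend of the parent of $x'$, and so on; a straightforward induction on the level then yields that $p$ is a $10$-friend of every proper ancestor of $x$. (The parenthetical ``let alone a $10$-friend'' is immediate, as $10 \le 35$.)

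For part~(2) the computation is identical with $t=\gamma+68$, giving $\delta(p,p(x')) \le (\gamma+83)\cdot 5^i$; to conclude that $p$ is a $\gamma$-friend of $x'$ we need $(\gamma+83)\cdot 5^i \le \gamma\cdot 5^{i+1} = 5\gamma\cdot 5^i$, i.e.\ $\gamma \ge 21$. This is the only inequality that has to be checked, and it is harmless: since $\gamma = \Theta(1/\eps)$ we may assume throughout that $\gamma$ is a sufficiently large constant (in particular $\gamma\ge 21$), as one may always decrease $\eps$. Granting this, $p$ is a $\gamma$-friend, hence a $(\gamma+68)$-friend, of $x'$, and the induction proceeds exactly as before.

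I do not expect any genuine obstacle: this is a bookkeeping lemma, and the constants are chosen precisely so that the single parent-step closes, given the parent-distance bound $3\cdot 5^{i+1}$ and the factor-$5$ growth of the distance scales. The one point that needs a little care is to state the conclusions with the smaller constants ($10$ and $\gamma$) rather than the hypothesis constants ($35$ and $\gamma+68$); this is exactly what makes the inductive hypothesis reproduce itself at each level, since $10\le 35$ and $\gamma\le\gamma+68$ let one re-enter the single-step argument after climbing one level.
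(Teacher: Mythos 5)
Your proof is correct and follows essentially the same approach as the paper: a single triangle-inequality step to the parent using $\delta(x,\pi(x)) \le 3\cdot 5^{i+1}$, with the observation that the conclusion ($10$-friend, resp.\ $\gamma$-friend) is stronger than the hypothesis ($35$-friend, resp.\ $(\gamma+68)$-friend) so the step can be iterated up the tree. The paper likewise requires $\gamma \ge 21$ to close the second inequality, so the constants match exactly.
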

\begin{proof}
Suppose first that $p$ is a 35-friend of $x$. To prove the first assertion, it suffices to show that $p$ is a 10-friend of the parent $\pi(x)$ of $x$.
Indeed, we have $$\delta(\pi(x),p) ~\le~ \delta(\pi(x),x) + \delta(x,p) ~\le~ 3 \cdot 5^{i+1} + 35 \cdot 5^i ~=~ 10 \cdot 5^{i+1}.$$

To prove the second assertion, suppose that $p$ is a $(\gamma+68)$-friend of $x$.  
Assuming $\gamma \ge 21$, we have $$\delta(\pi(x),p) ~\le~ \delta(\pi(x),x) + \delta(x,p) ~\le~ 3 \cdot 5^{i+1} + (\gamma+68) \cdot 5^i ~<~ \gamma \cdot 5^{i+1}. \inQED$$
\end{proof}

\begin{claim} [Being a Leech is an Inherited Trait] \label{trait}
If an $i$-level node $y$ is a leech of some $i$-level node $x$, then $y$'s parent $\pi(y)$ is a 24-friend of $x$'s parent $\pi(x)$, and is thus close enough to become its leech.
\end{claim}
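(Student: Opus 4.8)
The plan is to prove this by a direct triangle-inequality estimate, in the same spirit as the proof of Claim~\ref{anctrivial}; I do not anticipate any real obstacle here, only the need to quote the right constants.

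First I would unwind the definitions. By definition, a leech of a node is a 24-friend of that node, so the hypothesis that the $i$-level node $y$ is a leech of the $i$-level node $x$ means $\delta(x,y) = \delta(p(y),p(x)) \le 24 \cdot 5^i$. Next I would recall from the net-tree construction (Section~\ref{sec:prelim}) that every $i$-level node is within distance $3 \cdot 5^{i+1}$ of its parent at level $i+1$; applying this to $y$ and to $x$ gives $\delta(\pi(y),y) \le 3 \cdot 5^{i+1}$ and $\delta(x,\pi(x)) \le 3 \cdot 5^{i+1}$.

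Then the claim follows by chaining these three bounds:
$$\delta(\pi(y),\pi(x)) ~\le~ \delta(\pi(y),y) + \delta(y,x) + \delta(x,\pi(x)) ~\le~ 3 \cdot 5^{i+1} + 24 \cdot 5^{i} + 3 \cdot 5^{i+1} ~=~ \frac{54}{5}\cdot 5^{i+1} ~<~ 24 \cdot 5^{i+1}.$$
Hence $\pi(y)$ is a 24-friend of $\pi(x)$ at level $i+1$, which is exactly the distance condition that makes $\pi(y)$ eligible to become a leech of $\pi(x)$. (Whether $\pi(y)$ actually does become a leech of $\pi(x)$ depends additionally on whether the surrogates' term of $\pi(x)$ is still active at level $i+1$, but that is outside the scope of the distance claim, as the statement's parenthetical already signals.)

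The only thing to be careful about is to use the per-level parent bound $3 \cdot 5^{i+1}$ rather than the cumulative climb bound $4 \cdot 5^{i+1}$, and to invoke the definition of a leech as a 24-friend of its host; with those in hand the inequality $\frac{54}{5} < 24$ closes the argument with room to spare.
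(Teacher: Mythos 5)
Your proof is correct and takes essentially the same approach as the paper: both are direct triangle-inequality computations combining the leech bound $\delta(x,y)\le 24\cdot 5^i$ with the parent-distance bound $3\cdot 5^{i+1}$ (the paper merely routes one leg through the first assertion of Claim~\ref{anctrivial}, obtaining $13\cdot 5^{i+1}$ where you obtain $\frac{54}{5}\cdot 5^{i+1}$, both comfortably below $24\cdot 5^{i+1}$). Your constants and the caveat about the term being active are both accurate.
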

\begin{proof}
If $y$ is a leech of $x$, then the net-point $p(y)$ of $y$ is a 24-friend of $x$. The first assertion of Claim \ref{anctrivial} implies that $p(y)$ is  a 10-friend of $\pi(x)$,
i.e., $\delta(p(y),\pi(x)) \le 10 \cdot 5^{i+1}$.
Since $\delta(\pi(y),y) = \delta(\pi(y),p(y)) \le 3 \cdot 5^{i+1}$, it follows that $$\delta(\pi(y),\pi(x)) ~\le~ \delta(\pi(y),p(y)) + \delta(p(y),\pi(x)) ~\le~ 3 \cdot 5^{i+1} + 10 \cdot 5^{i+1} ~<~ 24 \cdot 5^{i+1}.\inQED$$
\end{proof}


\begin{claim} [Distant Friends Soon Become Close Friends]  \label{friendship}
~(1) For any $i$-level cross edge $(x,y)$ of the basic spanner $H$ and any $p \in S(x), q \in S(y)$, $\delta(p,q) \le (\gamma + 68) 5^i$.
In other words, for any $i$-level cross edge $(p,q)$ in the FT spanner $\cH$, $\delta(p,q) \le (\gamma + 68) 5^i$.
~(2) Let $p$ be a 10-friend of some $i$-level node $x$.
Any point $q$ for which  $\delta(p,q) \le (\gamma+68) 5^i$ is a 10-friend of the $(i+\tau)$-level ancestor of $x$.
\end{claim}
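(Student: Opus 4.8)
The plan is to derive both parts from the triangle inequality, using two ingredients: (i) every surrogate of an $i$-level node $z$ lies within distance $34\cdot 5^i$ of its net-point $p(z)$, and (ii) the distance scale grows by a factor of at least $5\gamma$ when climbing $\tau$ levels --- which is exactly why $\tau=\lceil\log_5\gamma\rceil+1$ was chosen. Given these, the claim is pure bookkeeping, with the constant $68=2\cdot 34$ in the statement coming from the surrogate-distance bound $34=10+24$.

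First I would establish the surrogate-distance fact. By construction, $S(z)$ for an $i$-level node $z$ arises in one of four ways: (a) it is drawn from the friend set $F(z)$, whose points are $10$-friends of $z$; (b) $z$ is small and clean and $S(z)=D(z)$, whose points are $4$-friends of $z$ since every descendant reaches $z$ along a path of weight at most $4\cdot 5^i$; (c) $z$ is a dirty non-leech that re-used the surrogates of a dirty non-leech child (Case 2), which by induction on the level are $10$-friends of that child, hence (adding the parent--child distance $3\cdot 5^i$) are $5$-friends of $z$; or (d) $z$ is a leech whose surrogates coincide with those of its host $(q,i)$, a $24$-friend of $z$ whose surrogates are its own $10$-friends, so these are $(24+10)$-friends of $z$. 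In all cases a surrogate of $z$ is a $34$-friend of $z$. (If this has already been recorded as an earlier claim, I would simply cite it.)

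For part (1): if $(x,y)$ is an $i$-level cross edge of $H$, then by the definition of cross edges $\delta(p(x),p(y))\le\gamma\cdot 5^i$. For $p\in S(x)$ and $q\in S(y)$, the surrogate-distance fact gives $\delta(p,p(x))\le 34\cdot 5^i$ and $\delta(q,p(y))\le 34\cdot 5^i$, so
\[
\delta(p,q)\ \le\ \delta(p,p(x))+\delta(p(x),p(y))+\delta(p(y),q)\ \le\ (34+\gamma+34)\cdot 5^i\ =\ (\gamma+68)\cdot 5^i.
\]
The ``in other words'' restatement is immediate, since every $i$-level cross edge of $\cH$ is, by construction, an edge of the bipartite clique spanned by $S(x)$ and $S(y)$ for some $i$-level cross edge $(x,y)$ of $H$.

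For part (2): let $x'$ be the $(i+\tau)$-level ancestor of $x$. Climbing from level $i$ to level $i+\tau$, each step into a parent at level $j$ costs at most $3\cdot 5^j$, so $\delta(p(x),p(x'))\le\sum_{j=i+1}^{i+\tau}3\cdot 5^j<4\cdot 5^{i+\tau}$. If $p$ is a $10$-friend of $x$ and $\delta(p,q)\le(\gamma+68)\cdot 5^i$, then
\[
\delta(q,p(x'))\ \le\ \delta(q,p)+\delta(p,p(x))+\delta(p(x),p(x'))\ \le\ (\gamma+78)\cdot 5^i+4\cdot 5^{i+\tau}.
\]
Since $\tau=\lceil\log_5\gamma\rceil+1$ we have $5^\tau\ge 5\gamma\ge\gamma+78$ (using $\gamma\ge 21$), hence $(\gamma+78)\cdot 5^i\le 5^{i+\tau}$ and $\delta(q,p(x'))\le 5\cdot 5^{i+\tau}\le 10\cdot 5^{i+\tau}$, i.e.\ $q$ is a $10$-friend of $x'$. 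The only step that is not entirely mechanical is the surrogate-distance bound of the second paragraph; the main care there is tracking the constant $34=10+24$ consistently across all four ways a node can acquire its surrogates, so that it matches the $68$ appearing in the statement.
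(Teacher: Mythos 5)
Your proof is correct and follows essentially the same route as the paper: part (1) is the triangle inequality through the two net-points using the $34\cdot 5^i$ surrogate-distance bound (the paper's Observation \ref{farsur}, which you re-derive rather than cite), and part (2) combines the $4\cdot 5^{i+\tau}$ climbing bound with the choice $\tau=\lceil\log_5\gamma\rceil+1$; your final arithmetic is slightly looser than the paper's (needing $\gamma\ge 21$ rather than $\gamma\ge 3$) but this is already assumed elsewhere in the construction.
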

\begin{proof}
The weight of an $i$-level cross edge $(x,y)$ in $H$ is at most $\gamma 5^{i}$.  
For any $p \in S(x)$, we have $\delta(x,p) \le 34 \cdot 5^{i}$; 
refer to Observation \ref{farsur} and the  paragraph preceding it to see why this upper bound holds.
Similarly, we have $\delta(y,q) \le 34 \cdot 5^{i}$. 
Consequently, $\delta(p,q) \le \delta(p,x) + \delta(x,y) + \delta(y,q) \le (\gamma + 68)  5^{i}$.


Next, we prove the second assertion.
Let $x'$ be the $(i+\tau)$-level ancestor of $x$, and notice that $\delta(x', x) \le 4 \cdot 5^{i+\tau}$.
We have $\delta(x', q) ~\le~ \delta(x', x) + \delta(x,p) + \delta(p,q) ~\le~ 4 \cdot 5^{i+\tau} + 10 \cdot 5^i
+ (\gamma+68) 5^i ~\le~ 10 \cdot 5^{i+\tau},$
where the last inequality holds for $\gamma \ge 3$. It follows that $q$ is a 10-friend of $x'$, and we are done.
\QED
\end{proof}

By construction, a node with at least one dirty child must be dirty too. (See Case 2 in Procedure $ComputeSets_{(i)}$, $i \ge 1$.)
This means that all ancestors of a dirty node are dirty, or equivalently,
all descendants of a clean node are clean. 
Notice also that the construction guarantees that any dirty node must be complete. More specifically, a dirty node has exactly $k+1$ surrogates.
We summarize these observations in the following statement.
\ignore{
Also, we will show that all ancestors of a complete (respectively, large) node are complete (resp., large), 
all descendants of an incomplete (resp., small) node are incomplete (resp., small). 

We argue that \emph{all ancestors of a dirty node are in fact large}.
To see this, first note that any large node must be dirty.
Even though the converse statement holds for non-leech nodes by Lemma \ref{key}, it does not necessarily hold for leeches.
Thus, to conclude the assertion, it is left to show that the parent $\pi(x)$ of any $i$-level leech node $x$ must be large. 
Indeed, the host $z$ of $x$ must be large,
and so $|F(z)| \ge k+1$. 
If $term(z)$ is not over at level $i$, then $\pi(x)$ will be a 24-friend of $z$'s parent $\pi(z)$, and will thus be a leech.
Moreover, it is easy to see that all the 10-friends of $z$ will be 10-friends of $\pi(x)$, which means that $\pi(x)$
will be large. If $term(z)$ is over at level $i$, then Lemma \ref{key} yields $|F(z)| \ge 2k+2$. In this case $\pi(x)$ will have
at least $2k+2$ clean 10-friends at the beginning of level $i+1$. It is easy to see that at most $k+1$ of these points may be appointed as new surrogates
at level $i+1$ (possibly by $\pi(x)$ itself), while all the other at least $k+1$ points must remain clean at level $i+1$.
It follows that $\pi(x)$ will be large in this case as well. 
}
\begin{observation} [Once Dirty, Always Dirty and Complete] \label{ancestor}
All ancestors of a dirty node (including itself) are dirty.  
Also, each dirty node has exactly $k+1$ surrogates, and is thus complete.
\end{observation}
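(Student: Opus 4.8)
The plan is to establish the two assertions separately, each by a direct reading of Procedure $ComputeSets_{(i)}$.

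For the first assertion I would walk up the tree one level at a time. Since the surrogate sets and the clean/dirty labels are finalized level by level in a bottom-up sweep, it suffices to show that the parent of any dirty node is dirty and then iterate up to the root. So let $x$ be a dirty $i$-level node with parent $\pi(x)$. Then $\pi(x)$ has a dirty child, so Procedure $ComputeSets_{(i+1)}$ handles $\pi(x)$ through Case~2; inspecting the three sub-branches of Case~2 (reusing the surrogates of a dirty non-leech child, becoming a leech of a dirty non-leech $24$-friend, or appointing $k+1$ fresh surrogates) shows that in every one of them $\pi(x)$ is declared dirty. Iterating, all ancestors of $x$ are dirty.

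For the second assertion I would classify the ways a node $x=(p,i)$ can acquire a dirty label, together with the shape of $S(x)$ in each case: (A) $x$ appoints $k+1$ fresh surrogates, so $|S(x)|=k+1$; (B) $x$ is a leech, so $S(x)=S(y)$ for its host $y$, a dirty \emph{non-leech} node at level $i$; (C) $x$ inherits the surrogates of a dirty non-leech child, so $S(x)=S(y)$ for a dirty non-leech node $y$ at level $i-1$. (The assignment $S(x)=D(x)$ cannot occur, being reserved for clean nodes.) Both (B) and (C) point to a non-leech dirty node, and a non-leech dirty node is itself only of type (A) or (C). Hence, starting from an arbitrary dirty node and repeatedly following these ``inheritance'' pointers, we take at most one (B)-step (which stays at level $i$), after which every step is a (C)-step that strictly decreases the level; the chain therefore terminates after finitely many steps, necessarily at an (A)-node (at the latest at level $0$, where a dirty non-leech node must be of type (A) since leaves have no children). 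All nodes along the chain share one surrogate set, of size exactly $k+1$; since $k+1>k$, every dirty node is complete.

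The one point requiring care is that the inheritance pointers in (B) and (C) are well defined, i.e., that $S(y)$ is already computed when $x$ is handled: for (C) this is immediate since $y$ lies one level below, and for (B) it holds because a node acquires the dirty label only while being processed, so a host recognized as dirty during the processing of $x$ was itself processed earlier within $ComputeSets_{(i)}$ (consistent with the stated rule that nodes with $|F(\cdot)|\ge 2k+2$ are handled first). Beyond this bookkeeping, both assertions are a routine case analysis of the construction, and I do not expect a genuine obstacle — the substance lies in the setup of Section~3.1, not in this verification.
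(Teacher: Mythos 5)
Your proposal is correct and follows essentially the same route as the paper, which simply asserts both facts ``by construction'' (pointing to Case~2 of Procedure $ComputeSets_{(i)}$ for the first assertion); your level-by-level induction and the inheritance-chain argument tracing every dirty surrogate set back to an appointing node are just a careful spelling-out of that same verification.
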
 
{\bf Remark:} Note that any large node is dirty.
Even though the converse statement holds for non-leech nodes (see Lemma \ref{key} in Section 3.2.2), it does not necessarily hold for leeches.
Nevertheless, it can be proved that all ancestors of a (possibly leech) dirty node are in fact large. (This assertion is   stronger than Observation \ref{ancestor}
and requires proof; we do not provide the proof of this assertion since we do not use it in the sequel.)

The following observation is implied by Observation \ref{ancestor} and the construction.
%
%
\begin{observation}  [The Descendants of Clean Nodes are Surrogates] \label{bas}
For any clean node $x$, we have $S(x) = D(x)$.
Since any descendant $x'$ of $x$ is clean, it follows  that $S(x') \subseteq S(x)$.
\end{observation}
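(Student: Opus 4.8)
The plan is to prove the first assertion, $S(x) = D(x)$ for every clean node $x$, by induction on the level $i$ of $x = (p,i)$, simply reading off which branch of Procedure $ComputeSets_{(i)}$ a clean node must take; the second assertion then drops out. The main tool is Observation~\ref{ancestor}, which tells us (equivalently to its stated form) that every descendant of a clean node is clean — in particular, every child of a clean node is clean.

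\emph{Base case ($i=0$).} Inspecting Procedure $ComputeSets_{(0)}$, a leaf node $(p,0)$ ends up clean only in the branch where no potential host is found and $|F(p,0)| < 2k+2$, and in that branch the construction assigns $S(p,0) = D(p,0) = \{p\}$. In the other two branches $(p,0)$ becomes a leech or a large (appointing) node and is therefore dirty. Hence a clean leaf satisfies $S(p,0) = D(p,0)$.

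\emph{Inductive step.} Let $x = (p,i)$ be clean with $i \ge 1$, and assume $S(x') = D(x')$ for every clean node $x'$ at level $i-1$. By Observation~\ref{ancestor} every child $(q,i-1) \in Ch(p,i)$ is clean, so $S(q,i-1) = D(q,i-1)$ by the inductive hypothesis. Now run through Procedure $ComputeSets_{(i)}$: if some child of $(p,i)$ were dirty we would be in Case~2, and every sub-case of Case~2 makes $(p,i)$ dirty — impossible. So all children are clean and we are in Case~1. Within Case~1, finding a potential host makes $(p,i)$ a leech (hence dirty), and $|F(p,i)| \ge 2k+2$ makes $(p,i)$ large (hence dirty); both are excluded. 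Thus $(p,i)$ lands in the remaining Case~1 branch, $|F(p,i)| < 2k+2$, where the construction assigns $S(p,i) = D(p,i)$. (This is consistent with the chain $S(p,i) = D(p,i) = \bigcup_{x' \in Ch(p,i)} D(x') = \bigcup_{x' \in Ch(p,i)} S(x')$, since the descendant leaves of $(p,i)$ are exactly the union of those of its children, but the equality $S(p,i)=D(p,i)$ is literally what the procedure outputs.) This completes the induction.

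\emph{Second assertion and obstacle.} If $x'$ is any descendant of a clean node $x$, then $x'$ is clean by Observation~\ref{ancestor}, so the first part gives $S(x') = D(x')$ and $S(x) = D(x)$; since the descendant leaves of $x'$ form a subset of those of $x$, we conclude $S(x') = D(x') \subseteq D(x) = S(x)$. There is no real analytic difficulty here — the content is entirely a case check over the branches of $ComputeSets_{(i)}$ together with Observation~\ref{ancestor}. The one point deserving care is to confirm that \emph{every} way a node can obtain its surrogates other than the explicit assignment $S(p,i)=D(p,i)$ — becoming a leech, becoming a large appointing node, or (in Case~2) re-using the surrogate set of a dirty non-leech child — yields a dirty node, so that a clean node is forced into the $S(p,i)=D(p,i)$ branch.
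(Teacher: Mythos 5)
Your proof is correct and follows essentially the same route the paper intends: the paper states that Observation~\ref{bas} "is implied by Observation~\ref{ancestor} and the construction," and your induction on the level, combined with the case check showing that every branch of Procedure $ComputeSets_{(i)}$ other than the $S(p,i)=D(p,i)$ assignment makes the node dirty, is exactly that argument (it mirrors the chain $S(p,i)=D(p,i)=\bigcup_{x'\in Ch(p,i)} D(x')=\bigcup_{x'\in Ch(p,i)} S(x')$ given in Section 3.1.2). The derivation of the second assertion from the first via $D(x')\subseteq D(x)$ is likewise the intended one.
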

{\bf Remark:} Let $p$ be an arbitrary point, and let $i \in [0,\ell]$. Observation \ref{bas} shows that the only clean $i$-level node $x$ (if any) of which $p$ is a surrogate (i.e., $p \in S(x)$) is the $i$-level base bag of $p$. Consequently, if the $i$-level base bag of $p$ is dirty, then $p$ will not be a surrogate
of $j$-level clean nodes, for any $j \ge i$.


The next lemma shows that 
all descendants (and thus all surrogates) of a clean node must be clean. 
\begin{claim} [All Descendant Points of Clean Nodes are Clean] \label{friends}
For a clean $j$-level node $x$, all points of $D(x)$ are clean at level $j$. 
In other words, if a point $p$ is dirty at level $j$, then the $j$-level base bag of $p$ is dirty too.
\end{claim}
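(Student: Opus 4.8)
The plan is to prove the claim by induction on the level $j$, tracking how dirty points can be introduced at a given level. The key observation is that a point $p$ becomes dirty only at the moment it is appointed as a \emph{new} surrogate of a large node, and — as established by Observation \ref{ancestor} — such a large node is necessarily dirty, as are all of its ancestors. So the only way a previously-clean point $p$ can appear in $D(x)$ for a \emph{clean} node $x$ is if $p$ was never dirtied, i.e.\ it is clean at level $j$. The heart of the matter is to rule out the following scenario: $p$ sits in $D(x)$ for some clean $j$-level node $x$, but $p$ was dirtied at an \emph{earlier} level $j' < j$ by being appointed as a new surrogate of some large (hence dirty) node $y$ at level $j'$.

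First I would set up the induction. The base case $j = 0$ is immediate: all points are clean at the outset, and no node has yet appointed new surrogates, so every $0$-level node is clean and every point of its (singleton) descendant set is clean. For the inductive step, assume the claim holds at all levels below $j$ and consider a clean $j$-level node $x$ and a point $p \in D(x)$. Writing $x' \in Ch(x)$ for the unique child of $x$ with $p \in D(x')$ (the $(j-1)$-level base bag of $p$), I would invoke Observation \ref{ancestor}: since $x$ is clean, $x$ has no dirty child, so $x'$ is clean. By the inductive hypothesis applied to $x'$, all points of $D(x') \ni p$ are clean at level $j-1$. It then remains to argue that $p$ does not get dirtied \emph{at} level $j$ — but a point is dirtied at level $j$ only by being appointed as a new surrogate of a large $j$-level node, and by Observation \ref{ancestor} such a node is dirty; since $x$ is the unique $j$-level node containing $p$ in its descendant set (the $j$-level base bag of $p$, by the Remark following Observation \ref{bas}), and $x$ is clean, $p$ is not dirtied at level $j$. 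Hence $p$ is clean at level $j$, completing the induction.

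The equivalent contrapositive formulation — if $p$ is dirty at level $j$, then the $j$-level base bag of $p$ is dirty — then follows immediately, since the $j$-level base bag is the unique $j$-level node whose descendant set contains $p$, and if it were clean the claim just proved would force $p$ to be clean.

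The main obstacle I anticipate is making precise the bookkeeping of \emph{when} a point becomes dirty relative to \emph{when} it is a surrogate of a clean node, and in particular justifying the uniqueness statement that the $j$-level base bag of $p$ is the only $j$-level node of which $p$ could be a surrogate \emph{while that node is clean}. This is exactly the content of Observation \ref{bas} and its Remark (a clean node $x$ has $S(x) = D(x)$, so $p \in S(x)$ with $x$ clean forces $p \in D(x)$, i.e.\ $x$ is the base bag of $p$); once that is in hand, the argument is a clean induction with no genuine computation. One subtlety worth a sentence: a point $p$ that is dirtied at level $j'$ may, at a strictly higher level, reappear in $D(x)$ only via base bags that are ancestors of the level-$j'$ base bag of $p$, all of which are dirty by Observation \ref{ancestor} — so there is no "escape route" by which a dirtied point could later belong to a clean node's descendant set. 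Spelling this out is the only place where care is needed; everything else is routine.
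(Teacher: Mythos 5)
There is a genuine gap in your inductive step, at exactly the point where the claim has real content. You argue that $p$ cannot be dirtied at level $j$ because ``$x$ is the unique $j$-level node containing $p$ in its descendant set, and $x$ is clean.'' But a point is dirtied by being appointed as a new surrogate of a large node $y$, and surrogates are drawn from the friend set $F(y)$, which consists of 10-friends of $y$ --- points that need \emph{not} belong to $D(y)$. (This is the whole point of the hierarchical-cover construction: a point can serve nodes other than its base bag.) So $p$ could perfectly well be dirtied by some large $j$-level node $y \neq x$ with $p \in F(y)\setminus D(y)$; the cleanliness of $x$ does not by itself forbid this. The Remark after Observation \ref{bas} does not help you here, since it only says that among \emph{clean} nodes, $p$ can be a surrogate only of its base bag --- but the appointing node that dirties $p$ is large and hence dirty, so it is not covered by that Remark. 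Your closing ``subtlety'' paragraph has the same circularity: you assert that once $p$ is dirtied at level $j'$, its level-$j'$ base bag is dirty ``by Observation \ref{ancestor},'' but Observation \ref{ancestor} only propagates dirtiness upward from a node already known to be dirty, and the node known to be dirty is the \emph{appointing} node, not the base bag. That the base bag must also be dirty is precisely the statement being proved.

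The paper closes this gap with a short geometric argument that your proposal is missing. Suppose $p \in D(x)$ is dirty at level $j$, and let $z$ be the $i$-level appointing node that dirtied $p$, with $i \le j$ and $p \in S(z) \subseteq F(z)$, so $p$ is a 10-friend of $z$. Let $\tilde x$ be the $i$-level base bag of $p$; then $\delta(p,\tilde x) \le 4\cdot 5^i$, so $\tilde x$ and $z$ are 14-friends. Since $\tilde x$ is a descendant of the clean node $x$, it is clean by Observation \ref{ancestor}; but $z$ is a dirty non-leech within the 24-friend threshold, so the construction would have made $\tilde x$ a leech of $z$ and hence dirty --- a contradiction. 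In other words, the appointing node and the base bag cannot be the same node, but they are forced to be \emph{close}, and the leech mechanism then transfers dirtiness from one to the other. Without some version of this packing/proximity argument, the induction you set up cannot go through.
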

\begin{proof}
Suppose for contradiction that there is a point $p \in D(x)$ that is dirty at level $j$, and let $z$ be 
the appointing node which made $p$ dirty. Observe that $z$ is a dirty non-leech $i$-level  node, such that $p \in S(z)$
and $i \le j$.  
Let $\tilde x$ be the $i$-level base bag of $p$, i.e., $p \in D(\tilde x)$. 
Note that $x$ is the $j$-level base bag of $p$, and so 
$\tilde x$ is a descendant of $x$.
By Observation \ref{ancestor},
$\tilde x$ must be clean. Hence $\tilde x \ne z$.
Note also that $p$ is a 4-friend of $\tilde x$ and a 10-friend of $z$.   
It follows that $\tilde x$ is a 14-friend of $z$, and would become its leech and thus dirty, yielding a contradiction. \QED
\end{proof}

We have shown in Claim \ref{friends} that all surrogates of a clean node are clean.
Also, all surrogates of a dirty node are dirty by the construction. We summarize this property in the following statement.
\begin{corollary} \label{preanc}
All surrogates of a clean (respectively, dirty) node are clean (resp., dirty). 
\end{corollary}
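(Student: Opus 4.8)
The plan is to split the statement into its ``clean'' half and its ``dirty'' half: the former is a repackaging of two facts already established, and the latter is a short induction following the order in which the surrogate sets are computed.

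\emph{Clean half.} Suppose $x$ is a clean node. By Observation~\ref{bas} we have $S(x)=D(x)$, so it suffices to show that every point of $D(x)$ is clean at the level of $x$ — and this is exactly the content of Claim~\ref{friends}. Hence all surrogates of $x$ are clean, and no further work is needed.

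\emph{Dirty half.} Here I would argue by induction on the level $i$ that every surrogate of a dirty $i$-level node is dirty. Inspecting the second part of Procedure $ComputeSets_{(i)}$ for $i\ge 1$ (Cases~1 and~2) and its $i=0$ analogue, a node $x=(p,i)$ becomes dirty in exactly one of three ways: (i) $x$ is a large \emph{appointing} node and $S(x)$ consists of $k+1$ points taken from $F(x)$ that the construction immediately marks dirty; (ii) $x$ is a leech and $S(x)=S(q,i)$ for a dirty non-leech $24$-friend $(q,i)$ already handled at level $i$; or (iii) $x$ is a non-leech that re-uses the surrogates of a dirty child, i.e.\ $S(x)=S(q,i-1)$ for a dirty non-leech child $(q,i-1)$ whose term is not over. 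In case (i) the surrogates are dirty by the marking step itself. In cases (ii) and (iii) the surrogates of $x$ coincide with those of an already-processed dirty node (at level $i$ in (ii), at level $i-1$ in (iii)), which are dirty by the induction hypothesis; and since a dirty point is never re-appointed as a new surrogate and nothing in the construction reverts a point to clean, those points are still dirty when $x$ is formed. This exhausts the cases, completing the induction, and together with Observation~\ref{ancestor} it confirms the underlying dichotomy that a node's surrogates are either all clean or all dirty.

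The only real obstacle is bookkeeping: one must verify that the three cases above genuinely cover every route by which a node ends up dirty (the leech branches and the ``re-use a dirty child'' branch in Cases~1 and~2 of the procedure, the appointing branch, and the leaf-level version), and that the clean/dirty label of a point is monotone along the processing order (which is immediate from ``once dirty, always dirty''). Both checks are routine, so Corollary~\ref{preanc} is essentially a summary statement; its substance resides in Claim~\ref{friends} and Observation~\ref{bas}.
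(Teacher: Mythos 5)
Your proposal is correct and follows the paper's own route: the clean half is exactly the combination of Observation~\ref{bas} ($S(x)=D(x)$ for clean $x$) with Claim~\ref{friends}, and the dirty half is the ``by the construction'' argument that the paper leaves implicit, which you merely spell out as a routine induction over the three ways a node becomes dirty (fresh appointment, leeching, or re-using a dirty child's surrogates). No gaps.
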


The following corollary follows from Observation \ref{bas}, Claim \ref{friends} and the construction.
\begin{corollary}  \label{bas2}
For a clean node $x$, we have $S(x) = D(x) \subseteq F(x)$, and so $|S(x)|=|D(x)| \le |F(x)| = |F(x) \cup S(x)| < 2k+2$. 
\end{corollary}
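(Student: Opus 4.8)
The plan is to chain together the three cited results. First I would invoke Observation \ref{bas}, which asserts $S(x) = D(x)$ for any clean node $x$; this handles the equality $S(x) = D(x)$ immediately. Next I would argue $D(x) \subseteq F(x)$ by bottom-up induction on the level $i$ of the clean node $x = (p,i)$. For the base case, a clean leaf $(p,0)$ has $D(p,0) = \{p\}$, and by the first part of Procedure $ComputeSets_{(0)}$ the point $p$ is placed into $F(p,0)$ at the outset, so $D(p,0) \subseteq F(p,0)$. For the inductive step, let $x = (p,i)$ be clean with $i \ge 1$. By Observation \ref{ancestor} (or Claim \ref{friends}) every child $(q,i-1)$ of $x$ is clean, so the inductive hypothesis gives $D(q,i-1) \subseteq F(q,i-1)$, and since the children are clean all points of $D(q,i-1)$ are clean at level $i-1$ by Claim \ref{friends}. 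Now $D(p,i) = \bigcup_{(q,i-1) \in Ch(p,i)} D(q,i-1)$, and in the first part of Procedure $ComputeSets_{(i)}$ all clean points of $F(q,i-1)$ are copied into $F(p,i)$ — unless the process stops early because $|F(p,i)| = 3k+3$. But if $x$ is clean then $|F(x)| < 2k+2 < 3k+3$ (a clean node is never large, so the cutoff $3k+3$ is never reached for a clean node), hence no point is dropped and every clean point of every $F(q,i-1)$ — in particular every point of $D(q,i-1)$ — lands in $F(p,i)$. Thus $D(p,i) \subseteq F(p,i)$, completing the induction.

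Finally, the cardinality chain is routine bookkeeping: $|S(x)| = |D(x)|$ from Observation \ref{bas}; $|D(x)| \le |F(x)|$ from the containment just established; and since $S(x) = D(x) \subseteq F(x)$ we have $F(x) \cup S(x) = F(x)$, so $|F(x) \cup S(x)| = |F(x)|$. The bound $|F(x)| < 2k+2$ follows because $x$ is clean, hence small (a clean node is necessarily small, as stated in Section 3.1.3), and a small node satisfies $|F(x) \cup S(x)| < 2k+2$ by definition of small.

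The main obstacle is making the induction for $D(x) \subseteq F(x)$ airtight: one must be sure that the $|F(p,i)| = 3k+3$ stopping rule in Procedure $ComputeSets_{(i)}$ does not truncate the copying of descendant points before they all get in, and that the points of $D(q,i-1)$ are indeed \emph{clean} at level $i-1$ (so that they are eligible to be copied, since the procedure only forwards clean points). Both facts hinge on $x$ being clean — cleanliness of $x$ forces cleanliness of all its descendants (Observation \ref{ancestor} / Claim \ref{friends}), and forces $|F(x)|$ to stay below $2k+2$, safely under the $3k+3$ cap. Once these two points are nailed down the rest is immediate.
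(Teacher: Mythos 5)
Your proof is correct and follows exactly the route the paper intends: the paper gives no explicit argument for this corollary beyond citing Observation \ref{bas}, Claim \ref{friends} and the construction, and your write-up is precisely a careful unpacking of those three ingredients (with the induction on levels making the friend-set propagation explicit, and the observation that cleanliness forces $|F(p,i)|<2k+2$ so the $3k+3$ cap is never hit). Nothing to add.
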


Observation \ref{bas} implies that all surrogates of a clean node are 4-friends of this node.
The surrogates of a dirty non-leech node are 10-friends of it, but the surrogates of a (dirty) leech are not necessarily 10-friends of it.
Since a leech is a 24-friend of its host, it follows that the surrogates of a leech are 34-friends of it.
We summarize these observations in the following statement.
\begin{observation} \label{farsur}
All surrogates  of either a clean or dirty node are 34-friends of it.
\end{observation}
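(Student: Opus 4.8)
The plan is to establish Observation \ref{farsur} by a straightforward case analysis on the three possible types of a node: clean, dirty non-leech, and dirty leech. By Corollary \ref{preanc}, every node is either entirely clean or entirely dirty, so these cases are exhaustive (with the dirty case split into leech and non-leech subcases).

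\textbf{Case 1: $x$ is clean.} By Observation \ref{bas}, $S(x) = D(x)$, so every surrogate $p$ of $x$ lies in $D(x)$. Recall from Section 2 (the net-tree discussion) that any point in $D(x)$ for an $i$-level node $x$ satisfies $\delta(x,p) \le 4 \cdot 5^i$, so $p$ is a $4$-friend, hence a $34$-friend, of $x$.

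\textbf{Case 2: $x$ is a dirty non-leech.} Here $x$ is either an appointing node that chose its $k+1$ surrogates from $F(x)$, or it inherited its surrogates from a dirty non-leech child (Case 2 of Procedure $ComputeSets_{(i)}$). In the first subcase, $S(x) \subseteq F(x)$, and by construction all points of $F(x)$ are $10$-friends of $x$ (as noted right after the computation of $R(p,i)$ and $F(p,i)$, and via Claim \ref{anctrivial}), so the surrogates are $10$-friends, hence $34$-friends. In the second subcase, one uses the first assertion of Claim \ref{anctrivial}: if $(q,i-1)$ is the child whose surrogates are reused, those surrogates are $10$-friends (indeed $34$-friends suffices) of $(q,i-1)$, and therefore $10$-friends of its parent $x$; again this gives $34$-friends. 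One can phrase this uniformly: the surrogates of any dirty non-leech node are $10$-friends of it, by induction on the level using Claim \ref{anctrivial}(1) to pass from a dirty non-leech child to its parent, with the base of the induction being an appointing node whose surrogates come from its friend set.

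\textbf{Case 3: $x$ is a dirty leech.} Then $x$ is a leech of some host $y$ at the same level with $S(x) = S(y)$, and $y$ is a dirty non-leech, so by Case 2 the surrogates in $S(y)$ are $10$-friends of $y$. Since $x$ is a $24$-friend of its host $y$ by definition, every $p \in S(x) = S(y)$ satisfies $\delta(x,p) \le \delta(x,y) + \delta(y,p) \le 24 \cdot 5^i + 10 \cdot 5^i = 34 \cdot 5^i$, so $p$ is a $34$-friend of $x$. Combining the three cases proves the observation. The only mild subtlety — the "main obstacle," though it is light — is making the induction in Case 2 rigorous, since the claim that surrogates of dirty non-leech nodes are $10$-friends is being used to bootstrap the bound for leeches; but this induction is clean because an appointing node always draws its surrogates directly from its friend set, which consists of $10$-friends by construction, and the inherited case only ever passes surrogates from a child to its parent, where Claim \ref{anctrivial}(1) applies.
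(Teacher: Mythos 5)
Your proof is correct and follows essentially the same route as the paper: the paper justifies Observation \ref{farsur} in the paragraph immediately preceding it by exactly this three-way case split (clean nodes have surrogates in $D(x)$, hence $4$-friends; dirty non-leeches draw surrogates from their friend sets, hence $10$-friends; leeches add the $24 \cdot 5^i$ distance to their host, giving $34$). Your additional care in making the "surrogates of a dirty non-leech are $10$-friends" step into a proper induction over copies (appointing node as base case, Claim \ref{anctrivial}(1) for the inductive step) is a valid elaboration of what the paper leaves implicit.
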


Recall that our construction makes a persistent effort to re-use old surrogates. In particular, if an $i$-level node $x$ has a dirty non-leech  child $y$,
such that the term of the $k+1$ surrogates of $S(y)$ is not over at level $i-1$, then we set $S(x) = S(y)$.
(If $x$ has other such children, then it may choose to use surrogates from another child.)
In this case $x$ will be a dirty non-leech node as well, and we say that $x$ and $y$ are \emph{copies}.
Note that $y$ itself may be a copy of one of its dirty non-leech children, and so forth. In general, there is a path of dirty non-leech copies in the tree
which leads from a dirty non-leech node $x$ down to its \emph{appointing copy} $a(x)$, which is a dirty large node which appoints $k+1$ new surrogates.
By construction, all these non-leech copies are dirty and complete; 
in fact, we will show in Claim \ref{auxlem} that they are large.

Recall that surrogates of large nodes are appointed for a term of at least $\tau + 3$ levels. This does not mean that a specific surrogate
will necessarily be re-used for $\Omega(\tau)$ levels due to an abundance of surrogates; see the second remark at the end of Section 3.1.4.
This only means that no $j$-level ancestor of an $i$-level appointing node, for any $j \le i + \tau + 2$,
 will appoint to itself new surrogates.  
We define $term(x)$ as the term between the level of the appointing copy $a(x)$ of $x$ and the 
highest level of any copy of $x$ (which is either $x$ or an ancestor of $x$). By construction, $term(x)$ lasts at least $\tau+3$ levels.

The ``disjointness'' properties of Claims \ref{fofx} and \ref{disj} will be heavily used in the sequel.
\begin{claim} [Surrogates and Old Friends are Disjoint] \label{fofx}
For any pair $x,y$ of (distinct) dirty non-leeches at the same level, $S(x) \cap S(y) = F(a(x)) \cap F(a(y))= \emptyset$.
More generally, let $F^*(a(x))$ and $F^*(a(y))$ denote the set of all 10-friends of $a(x)$ and $a(y)$, respectively.
Then $F^*(a(x)) \cap F^*(a(y)) = \emptyset$.
\end{claim}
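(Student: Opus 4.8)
The plan is to argue that the $10$-friend sets of two distinct dirty non-leeches at the same level are forced to be disjoint by the ordering in which the second part of each Procedure $ComputeSets_{(i)}$ processes nodes, together with the length of a term. Since $S(a(x)) \subseteq F(a(x)) \subseteq F^*(a(x))$ (appointed surrogates are chosen from the friend set, and friends are $10$-friends), and likewise $S(x) \subseteq S(a(x))$-worth of copies preserve membership, it suffices to prove the strongest of the three assertions: $F^*(a(x)) \cap F^*(a(y)) = \emptyset$ whenever $x$ and $y$ are distinct dirty non-leeches at the same level $i$. Let $a(x)$ be at level $i_x$ and $a(y)$ at level $i_y$, both at most $i$; without loss of generality assume $a(x)$'s term started no later than $a(y)$'s, i.e. $i_x \le i_y$. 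Since $x$ and $y$ are distinct non-leeches whose terms both extend up to level $i$, and since the copy paths from $x$ down to $a(x)$ and from $y$ down to $a(y)$ consist of dirty non-leech nodes, the two appointing copies $a(x)$ and $a(y)$ are themselves distinct.

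The key step is to bound the distance between $a(x)$ and $a(y)$ from below. Suppose for contradiction that a point $p$ lies in $F^*(a(x)) \cap F^*(a(y))$; then $p$ is a $10$-friend of both, so by the triangle inequality $a(x)$ and $a(y)$ are within $10 \cdot 5^{i_x} + 10 \cdot 5^{i_y} \le 20 \cdot 5^{i_y}$ of each other (after lifting $a(x)$ to its level-$i_y$ ancestor, using that the level-$i_y$ ancestor of $a(x)$ is within $4 \cdot 5^{i_y}$ of $a(x)$ and that a $10$-friend of a node is a $10$-friend of every ancestor by Claim \ref{anctrivial}; this makes the level-$i_y$ ancestor of $a(x)$ a $24$-friend of $a(y)$, say). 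Now recall the ordering rule (the first Remark at the end of Section 3.1.4): when the second part of $ComputeSets_{(i_y)}$ is about to appoint new surrogates to $a(y)$, all nodes with $|F(\cdot)| \ge 2k+2$ have already been handled, and all $24$-friends of any dirty non-leech node must themselves already be dirty. So when $a(y)$ is processed, either the level-$i_y$ ancestor of $a(x)$ has already been made dirty and is a non-leech $24$-friend of $a(y)$ — in which case $a(y)$ would take it as a host and become a leech, contradicting that $a(y)$ is a non-leech appointing node — or that ancestor is a leech, in which case (tracing the host chain and using that the host of a leech is again a nearby dirty non-leech) $a(y)$ still finds a dirty non-leech $24$-friend and becomes a leech. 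Either way we contradict the fact that $a(y)$ appoints new surrogates. Hence no such $p$ exists.

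I would organize the write-up as: (i) reduce the three equalities to the single disjointness statement $F^*(a(x)) \cap F^*(a(y)) = \emptyset$ using the inclusions among $S(\cdot)$, $F(\cdot)$, $F^*(\cdot)$ and the copy structure; (ii) assume a common $10$-friend $p$ and derive, via Claim \ref{anctrivial} and the triangle inequality, that the higher appointing copy's level-ancestor of the lower one is a $24$-friend; (iii) invoke the processing order so that this $24$-friend is already dirty when $a(y)$ is handled, hence $a(y)$ would have become a leech rather than an appointing node; (iv) conclude. The main obstacle I anticipate is step (iii): making rigorous the claim that the $24$-friend is \emph{dirty at the moment $a(y)$ is processed}, which requires carefully combining the within-level ordering (nodes with large friend sets first) with the cross-level fact that being dirty is inherited upward (Observation \ref{ancestor}), and ruling out the leech sub-case by following the host chain — exactly the subtle bookkeeping the construction was designed to support, but which needs to be spelled out. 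A secondary technical point is pinning down the precise constant ($24$ versus $34$ etc.) when lifting $a(x)$ to level $i_y$; I would check that the slack in Claim \ref{anctrivial}(1) (a $35$-friend of a node is a $10$-friend of its parent) is comfortably enough to absorb the $4 \cdot 5^{i_y}$ ancestor displacement and the two $10 \cdot 5^{(\cdot)}$ terms.
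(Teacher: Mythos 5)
Your argument is essentially the paper's proof: order the two appointments, lift the earlier appointing copy to the later one's level, use the shared 10-friend together with Claim \ref{anctrivial} to make that lifted node a 20-friend (hence a potential host) of $a(y)$, and contradict the fact that $a(y)$ appoints new surrogates. The one simplification you miss is that the lifted node is precisely the $i_y$-level \emph{copy} of $x$, which by Claim \ref{auxlem} lies on the path of dirty non-leech copies from $a(x)$ up to $x$; it is therefore automatically a dirty non-leech, so your leech sub-case (tracing the host chain) never arises.
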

\begin{proof}
Write $a(x) = (p,i)$ and $a(y) = (q,j)$,  
and suppose without loss of generality that the appointment of $S(a(x)) = S(x)$ takes place before the appointment of $S(a(y)) = S(y)$.
It must hold that $i \le j$. Let $x'$ be the $j$-level copy of $x$, with $S(x') = S(x)$. By construction, $x'$ is a dirty non-leech.

Suppose for contradiction that there is a point $s$ in $F^*(a(x)) \cap F^*(a(y))$.
Hence $s$ is a 10-friend of both $a(x)$ and $a(y)$.
By Claim \ref{anctrivial}, $s$ is also a 10-friend of $x'$, and so
$x'$ and $a(y)$ are 20-friends. It follows that $a(y)$ would become a leech of $x'$,
and would not be an appointing node, a contradiction.
\QED
\end{proof}


\begin{claim} [Dirty Surrogate Sets are Either Equal or Disjoint] \label{disj}
Let $x$ be a dirty $i$-level node. For each level $j \in [i,\ell]$ and any $j$-level node $y$,
either $S(x) = S(y)$ or $S(x) \cap S(y) = \emptyset$ must hold. 
In the former case $S(x) = S(y)$, $y$ must be dirty. 
\end{claim}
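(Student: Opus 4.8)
The plan is to prove Claim~\ref{disj} by induction on the level $j \geq i$, exploiting the bottom-up nature of the surrogate-set computation and the ``disjointness'' of surrogate sets of dirty non-leeches from Claim~\ref{fofx}. For the base case $j=i$: if $S(x) \cap S(y) \neq \emptyset$ we must show $S(x)=S(y)$ and that $y$ is dirty. Since $x$ is dirty, all its surrogates are dirty by Corollary~\ref{preanc}; a shared point $p \in S(x) \cap S(y)$ is therefore dirty, so $y$ cannot be clean (a clean node has only clean surrogates, again by Corollary~\ref{preanc}), hence $y$ is dirty. Now both $x$ and $y$ are dirty $i$-level nodes. Each dirty node is either a leech (and then shares the surrogate set of its host, a dirty non-leech) or a non-leech copy of a unique dirty appointing node $a(x)$ with $S(x) = S(a(x))$. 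Thus $S(x)$ and $S(y)$ each coincide with $S(a(x))$, $S(a(y))$ respectively for dirty non-leech appointing nodes $a(x), a(y)$ at levels $\leq i$. If $a(x) \neq a(y)$, then since $S(a(x)) \cap S(a(y)) \supseteq S(x) \cap S(y) \neq \emptyset$, we would contradict Claim~\ref{fofx} (surrogate sets of distinct dirty non-leeches are disjoint — one first lifts $a(x)$ or $a(y)$ to the same level via copies, which remain dirty non-leeches, then applies the claim). Hence $a(x) = a(y)$, so $S(x) = S(a(x)) = S(a(y)) = S(y)$.

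For the inductive step, suppose the statement holds at level $j-1$ and let $y$ be a $j$-level node with $S(x) \cap S(y) \neq \emptyset$, where $j-1 \geq i$. As in the base case, the shared surrogate is dirty, so $y$ is dirty. I would trace how $S(y)$ was assigned in Procedure $ComputeSets_{(j)}$: either (i) $y$ is a non-leech copy, so $S(y) = S(q,j-1)$ for a dirty non-leech child $(q,j-1)$ of $y$; or (ii) $y$ is a leech of some dirty non-leech $j$-level host $(q,j)$, so $S(y) = S(q,j)$; or (iii) $y$ appoints $k+1$ fresh surrogates from $F(y)$. Case (iii) is impossible: the fresh surrogates become dirty \emph{for the first time} at level $j$, so they cannot lie in $S(x)$, whose members are already dirty at level $i \leq j-1$ (more precisely, by Observation~\ref{bas} and its remark, once a point is dirty it stays so, and an appointing node picks surrogates that were clean just before — formally, newly appointed surrogates are disjoint from any previously dirty surrogate set). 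In case (i), the child $(q,j-1)$ is a dirty $(j-1)$-level node with $S(q,j-1) = S(y) \supseteq$ the shared point, so $S(x) \cap S(q,j-1) \neq \emptyset$; by the induction hypothesis $S(x) = S(q,j-1) = S(y)$, done. In case (ii), $(q,j)$ is a dirty non-leech $j$-level node with $S(q,j) = S(y)$. Now I apply the base-case argument (or Claim~\ref{fofx}) to the two dirty $j$-level nodes: lift $x$ to its unique $j$-level copy $x'$ (a dirty node, by Observation~\ref{ancestor}, with $S(x') = S(x)$ since copies preserve surrogate sets along $term(x)$ — here one needs that $term(x)$ is still active at level $j$, which holds because otherwise $S(x)$ would have been replaced, contradicting that its points are still the relevant dirty surrogates; alternatively argue directly that any $j$-level node sharing a surrogate with $S(x)$ has surrogate set equal to that of $a(x)$'s $j$-level copy). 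Then $x'$ and $(q,j)$ are dirty $j$-level nodes sharing a surrogate, and tracing both back to their appointing nodes and invoking Claim~\ref{fofx} forces those appointing nodes to coincide, giving $S(x) = S(x') = S(q,j) = S(y)$.

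The main obstacle I anticipate is handling the ``copy'' bookkeeping cleanly: I need the fact that along a chain of copies the surrogate set is literally unchanged, and that a dirty $i$-level node $x$ has a well-defined $j$-level copy for every $j$ in $term(x)$ whose surrogate set equals $S(x)$ — but the induction is stated for an \emph{arbitrary} dirty $i$-level $x$, so I must be careful when $j$ exceeds the top of $term(x)$: in that case $x$ has no $j$-level copy, yet some $j$-level node $y$ might still share a surrogate with $S(x)$. Here the right move is: let $z$ be the highest-level copy of $x$, at level $j_0 < j$; then $S(z) = S(x)$ and $z$ is a dirty non-leech $j_0$-level node. Any ancestor of $z$ is dirty (Observation~\ref{ancestor}) but need not re-use $S(z)$; however, one shows inductively (this is essentially the content of the claim specialized to ancestors of $z$) that if an ancestor's surrogate set meets $S(z)$ it equals it. This is slightly circular-looking, so the cleanest organization is probably a single induction on $j$ proving: for every dirty node $x$ of any level $\leq j$ and every $j$-level node $y$, $S(x) \cap S(y) \neq \emptyset \Rightarrow S(x) = S(y)$ and $y$ dirty. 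Then at each step I only ever compare $S(y)$ (freshly assigned at level $j$) against $S(x)$ for $x$ at a \emph{strictly smaller} level via the child/host relation, reducing to level $j-1$, and the only genuinely new input is Claim~\ref{fofx} to rule out two distinct appointing nodes ever producing intersecting surrogate sets.

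Finally, I would double-check the degenerate subtleties: a leech shares its host's surrogate set, so two distinct leeches of the \emph{same} host have equal (not just intersecting) surrogate sets — consistent with the claim; a leech and its host likewise have equal surrogate sets; and the remark after Observation~\ref{ancestor} warns that a dirty node need not be large, but this does not affect the argument since we only use that dirty nodes have exactly $k+1$ surrogates and that dirty/clean status is inherited downward/upward. With Claim~\ref{fofx} in hand, the whole proof is short — a page at most — and the role of Claim~\ref{disj} downstream is exactly to let later arguments say ``two surrogate sets are either identical or disjoint,'' which is what makes counting edges in bipartite cliques tractable.
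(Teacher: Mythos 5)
Your proof is correct and rests on the same two pillars as the paper's: Claim~\ref{fofx} (distinct dirty non-leeches at the same level have disjoint surrogate sets) together with the observation that every $j$-level surrogate set is either freshly appointed from clean points (hence disjoint from the already-dirty $S(x)$), copied from a child, or inherited from a host. The paper avoids your induction on $j$ by arguing directly that $term(v)$ must still be active at level $j$ (else $S(x)\cap S(y)=\emptyset$ vacuously), taking the $j$-level copy $v'$ of $v$, and concluding via Claim~\ref{fofx} that $y$ is either $v'$ or one of its leeches --- but this is an organizational difference only, and your reorganized single induction correctly handles the subtlety you flagged about $j$ exceeding the top of $term(x)$.
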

\begin{proof}
Let $y$ be an arbitrary $j$-level node.
If $S(x) \cap S(y) = \emptyset$, then we are done. 

We henceforth assume that $S(x) \cap S(y) \ne \emptyset$. We will show that $S(x) = S(y)$, and that $y$ is dirty.

Define $v$ as $x$ if it is a non-leech, and as the host of $x$ otherwise. We have $S(x) = S(v)$.

We argue that $term(v)$ cannot be over until level $j-1$. 
Indeed, otherwise no point of $S(x)$ will be a surrogate of any $j$-level node, including $y$. 
It follows that $S(x) \cap S(y) = \emptyset$, a contradiction.

We henceforth assume that $term(v)$ is not over before level $j$. 
Consider the $j$-level copy $v'$ of $v$, where $S(v') = S(v) = S(x)$. Observe that $v'$ is a dirty non-leech.
Note also that $S(v') \cap S(y) \ne \emptyset$, and let $p$ be a point in $S(v') \cap S(y)$. 
Since $v'$ is dirty, $p$ must be dirty too. Claim \ref{friends} implies that $y$ is dirty as well.

If $y = v'$, then $S(x) = S(y)$ must hold, and we are done.
We henceforth assume that $y \ne v'$. In this case $y$ must be a leech of $v'$. 
Indeed, otherwise we have $S(v') \cap S(y) = \emptyset$ by Claim \ref{fofx}, which is a contradiction.
Since $y$ is a leech of $v'$, we have  $S(y) = S(v') = S(x)$.
The corollary follows. \QED
\end{proof}

\begin{claim} [Clean 10-Friends of Non-Leeches Remain Clean] \label{rmclean}
Let $x$ be a dirty non-leech $i$-level node. 
Then all points in $F(x) \setminus S(x)$ remain clean during the entire $term(x)$.
More generally, all the 10-friends of $x$ that are clean at the beginning of level $i$ will remain clean
during the entire $term(x)$, except for the surrogates $S(x)$ of $x$ which get dirty in the case when $x = a(x)$ is an appointing node.
\end{claim}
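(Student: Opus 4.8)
The plan is to track the status of a clean $10$-friend $p$ of $x$ through the levels of $term(x)$ and argue that the only event that can turn $p$ dirty is an appointment of new surrogates at some dirty non-leech ancestor, and that no such appointment can touch $p$ unless $p\in S(x)$ and $x=a(x)$. First I would recall the relevant facts: by Observation~\ref{ancestor} all ancestors of the dirty node $x$ are dirty, and by the second remark at the end of Section~3.1.4, once the appointing copy $a(x)$ of $x$ appoints $k+1$ new surrogates at level, say, $i_0$, no ancestor of $a(x)$ at level at most $i_0+\tau+2$ appoints new surrogates; combined with the definition of $term(x)$ this says there is no appointing node along the chain of copies of $x$ during $term(x)$ other than $a(x)$ itself (at its bottom level). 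A point becomes dirty only at the moment it is chosen as a \emph{new} surrogate of a large (hence dirty non-leech) node; so to prove $p$ stays clean it suffices to show that no dirty non-leech node $z$ whose term is active during $term(x)$, and which appoints new surrogates in that window, has $p$ among its chosen new surrogates.

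The key geometric step is disjointness. Suppose $p$ is a $10$-friend of $x$ (so by Claim~\ref{anctrivial} a $10$-friend of every copy of $x$ up to the top of $term(x)$), and suppose for contradiction that $p$ gets dirty during $term(x)$, appointed as a new surrogate by some appointing node $z=a(z)$ at a level $j$ inside $term(x)$. Then $p\in S(z)\subseteq F(z)$, so $p$ is a $10$-friend of $z$. Let $x'$ be the $j$-level copy of $x$; it is a dirty non-leech with $S(x')=S(x)$, and $p$ is a $10$-friend of $x'$. Now Claim~\ref{fofx} applied to the two distinct dirty non-leeches $x'$ and $z$ at level $j$ gives $F^*(a(x'))\cap F^*(a(z))=\emptyset$, i.e.\ $p$ cannot be a $10$-friend of both $a(x)$ and $z=a(z)$ — contradiction, \emph{unless} $x'=z$, i.e.\ unless $z$ is a copy of $x$. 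But the only copy of $x$ that is an appointing node is $a(x)$ itself, sitting at the very bottom level of $term(x)$; so the only way $p$ can be made dirty during $term(x)$ is if $z=a(x)$ and $p\in S(a(x))=S(x)$, and this happens exactly at the level of $a(x)$ when $x=a(x)$. This is precisely the stated exception. For the first (weaker) assertion, if $p\in F(x)\setminus S(x)$ then $p$ is a clean $10$-friend of $x$ at the beginning of level $i$ (cleanliness of the non-surrogate friends of $x$ at level $i$ being guaranteed by the construction and Corollary~\ref{preanc}), and $p\notin S(x)$ rules out even the exceptional case, so $p$ stays clean throughout.

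The one subtlety I would spell out carefully is why a clean $10$-friend of $x$ at the beginning of level $i$ \emph{remains} a $10$-friend of all the relevant copies of $x$ and stays in the running: a point is only ever re-examined for becoming a new surrogate at the level of an appointing node, so between such events its status cannot change, and Claim~\ref{anctrivial}(1) upgrades ``$10$-friend of $x$'' to ``$10$-friend of $\pi(x)$'', hence of every ancestor; applying this along the copy chain keeps the hypothesis of Claim~\ref{fofx} available at every level of $term(x)$. I expect the main obstacle to be bookkeeping rather than mathematics — precisely pinning down, via the second remark of Section~3.1.4 and the definition of $term(x)$, that the chain of copies of $x$ contains no appointing node strictly inside $term(x)$, so that every potential ``contaminating'' appointing node $z$ is genuinely distinct from the copy $x'$ of $x$ at its level and Claim~\ref{fofx} applies cleanly.
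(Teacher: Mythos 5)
Your proposal is correct in substance and follows essentially the same route as the paper: the paper argues level by level that any appointing node $y\ne x'$ that could claim a clean $10$-friend $p$ of the relevant copy $x'$ of $x$ would itself be a $20$-friend of $x'$ (since $p$ is a $10$-friend of both), hence would be forced to become a leech of $x'$ (or, in the within-level timing case, $x'$ would become a leech of $y$, contradicting that $x'$ is a non-leech), and then propagates the $10$-friend property upward via Claim~\ref{anctrivial} to repeat the argument at every level of $term(x)$. One step in your write-up is imprecise: you invoke the \emph{statement} of Claim~\ref{fofx}, which concerns $F^*(a(x'))\cap F^*(a(z))$, i.e.\ $10$-friends of the \emph{appointing copies}; but your hypothesis only makes $p$ a $10$-friend of $x$ (level $i$) and hence of the higher-level copies $x'$, and ``$10$-friend'' does not propagate \emph{downward} to $a(x)$ at its lower level, so $p\in F^*(a(x))$ is not established. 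The fix is immediate and is exactly what the paper does: run the argument inside the proof of Claim~\ref{fofx} directly at level $j$ ($p$ is a $10$-friend of both $x'$ and $z$, so they are $20$-friends, so $z$ would be a leech, not an appointing node), rather than citing its disjointness conclusion.
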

\begin{proof}
Let $p$ be an arbitrary 10-friend of $x$ that is clean at the beginning of level $i$.
Suppose for contradiction that $p$ became dirty at level $i$, and $p \nin S(x)$. This means that some other $i$-level appointing node $y$, $y \ne x$,
appointed $p$ as one of its $k+1$ new surrogates at level $i$. Since $p$ is a 10-friend of both $x$ and $y$, it follows that $x$ and $y$ are 20-friends.
If $x$ is either a non-appointing node (i.e., $x \ne a(x)$) or it is an appointing node which appoints its surrogates before $y$ does,
then $y$ would become $x$'s leech, and would not become an appointing node, a contradiction. Otherwise, $x$ is a an appointing node
which appoints its surrogates after $y$ does, but then $x$ would become $y$'s leech, yielding the same contradiction.

We have shown that all 10-friends of $x$ (except for those in $S(x)$) that are clean at the beginning of level $i$  will remain clean at the end of level $i$,
and so they will be clean at the beginning of level $i+1$.
By Claim \ref{friendship}, all these points are 10-friends of $\pi(x)$, which is also a non-leech. 
Thus we can apply the same argument for level $i+1$, and carry on in this way also for subsequent levels. 
Consequently, we get that all 10-friends of $x$ that are clean at the beginning of level $i$ will remain clean during the entire $term(x)$,
except for the surrogates $S(x)$ of $x$ which get dirty in the case $x = a(x)$.
\QED
\end{proof}

\begin{claim} [If the Appointing Copy is Large, All Other Copies are Large Too] \label{auxlem}
Let $x$ be a dirty non-leech $i$-level node. Then there is a path of dirty non-leech copies of $x$ in the tree which leads down from $x$
to its appointing copy $a(x)$. Also, all points of $F(a(x)) \setminus S(x)$ belong to $F(x)$, unless $|F(x)| = 3k+3$.
In particular, if $a(x)$ is large, then $x$ will be large too. 
\end{claim}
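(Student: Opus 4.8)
The plan is to establish the three assertions of Claim~\ref{auxlem} in order, building the path of dirty non-leech copies first, then tracking friend sets along it, and finally deducing the largeness propagation. For the first assertion, I would argue by descending induction on the level. Since $x$ is a dirty non-leech $i$-level node that is not itself an appointing copy (otherwise $a(x)=x$ and there is nothing to prove), the only way $S(x)$ was assigned is via Case~2 of Procedure $ComputeSets_{(i)}$: $x$ re-used the surrogates of some dirty non-leech child $(q,i-1)$ with $S(x)=S(q,i-1)$ and $term$ of those surrogates not yet over at level $i-1$. Set the next copy to be this child $(q,i-1)$; it is again a dirty non-leech with the same surrogate set, so the induction hypothesis applies to it, and iterating produces a path of dirty non-leech copies descending from $x$ down to the unique node on this path that appointed $k+1$ new surrogates, which is by definition $a(x)$. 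Observation~\ref{ancestor} guarantees each node on the path is dirty and complete; the non-leech status is maintained because a copy inherits it from the child whose surrogates it re-uses.

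For the second assertion, I would follow the path of copies from $a(x)$ up to $x$ and track how the friend set evolves at each step. Suppose $(r,j)$ and $(r',j+1)$ are consecutive copies (so $r'$'s parent relation holds, $S(r',j+1)=S(r,j)$). In the first part of Procedure $ComputeSets_{(j+1)}$, the friend set $F(r',j+1)$ is seeded with all \emph{clean} points of $F(r,j)$ that survive (those still clean at level $j+1$), together with additional clean $10$-friends drawn from the reserve set, and the process stops only when $|F(r',j+1)|=3k+3$. By Claim~\ref{rmclean}, every point of $F(a(x))\setminus S(x)$ is a clean $10$-friend of $a(x)$ that remains clean during the entire $term(x)$, hence is clean at every level along the path; and by Claim~\ref{anctrivial} it stays a $10$-friend of every copy. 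Therefore each such point is eligible to be carried forward into the friend set of the next copy, and it will actually be carried forward unless the friend set already filled to its cap of $3k+3$. Pushing this up the whole path gives: every point of $F(a(x))\setminus S(x)$ lies in $F(x)$, unless $|F(x)|=3k+3$.

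The third assertion is then immediate. If $a(x)$ is large, then by definition $|F(a(x))\cup S(a(x))|\ge 2k+2$, and since $a(x)$ is dirty and complete we have $|S(a(x))|=k+1$ with $S(a(x))\subseteq F(a(x))$ (surrogates of an appointing node are chosen from its friend set); hence $|F(a(x))|\ge 2k+2$, so $|F(a(x))\setminus S(x)|\ge |F(a(x))|-|S(x)|\ge (2k+2)-(k+1)=k+1$. By the second assertion these at least $k+1$ points all lie in $F(x)$ (the exceptional case $|F(x)|=3k+3\ge 2k+2$ also forces $|F(x)\cup S(x)|\ge 2k+2$), and since $S(x)$ itself has $k+1$ points disjoint from $F(a(x))\setminus S(x)$ in the relevant counting, we get $|F(x)\cup S(x)|\ge 2k+2$, i.e.\ $x$ is large.

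The step I expect to be the main obstacle is the bookkeeping in the second assertion: one has to be careful that points of $F(a(x))\setminus S(x)$ are not lost when passing from one copy to the next --- both the cleanness-preservation (which is exactly what Claim~\ref{rmclean} buys us) and the friend-set cap of $3k+3$ must be handled explicitly, and one must check that the ``stop once $|F|=3k+3$'' rule is the \emph{only} reason a point could fail to propagate. Everything else is routine application of the distance/friendship claims (Claims~\ref{anctrivial}, \ref{friendship}) and Observation~\ref{ancestor}.
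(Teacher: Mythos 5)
Your proposal is correct and takes essentially the same approach as the paper: assertion one is read off from Case~2 of the construction, assertion two combines Claim~\ref{rmclean} (cleanliness of $F(a(x))\setminus S(a(x))$ persists through $term(x)$) with the friend-set inheritance rule and the $3k{+}3$ cap, and assertion three is the counting $|F(x)\cup S(x)|\ge(k{+}1)+(k{+}1)$. The only difference is that you unpack the step the paper dismisses as ``by construction'' into an explicit level-by-level propagation argument, which is a faithful elaboration rather than a different route.
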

\begin{proof}
First, the fact that there is a path of dirty non-leech copies of $x$ in the tree which leads down from $x$
to its appointing copy $a(x)$ follows easily from the construction.

By Claim \ref{rmclean}, all 10-friends of $a(x)$ that are clean after the appointment of $S(x) = S(a(x))$ (in particular, all points in $F(a(x)) \setminus S(a(x))$) remain clean during the entire $term(x)$. Hence they will be clean at level $i$.
By construction, all the points in $F(a(x)) \setminus S(x)$ will belong to $F(x)$, unless $|F(x)| = 3k+3$.

If $a(x)$ is large,
then we have $|F(a(x)) \setminus S(x)| \ge k+1$. 
Consequently, all the at least $k+1$ points in $F(a(x)) \setminus S(x)$ will belong to $F(x)$, unless $|F(x)| = 3k+3$.
It follows that $|F(x) \cup S(x)| \ge 2k+2$, and so $x$ is large.
\QED
\ignore{We prove the claim for the parent $\pi(a(x))$ of $a(x)$. (This argument can be applied inductively.)

By Observation \ref{ancestor} and the construction, $\pi(a(x))$ is a dirty non-leech.
Since $a(x)$ is large, there are at least $k+1$ 
clean 10-friends of $a(x)$ in $F(a(x)) \setminus S(a(x))$.   
Moreover, by Claim \ref{anctrivial}, all these points are 10-friends of $\pi(a(x))$.
Denote the level of $a(x)$ by $l$, 
and suppose for contradiction that some clean 10-friend $p$ of $a(x)$ becomes dirty after the appointment of $S(a(x))$
at level $l'$, where $l'$ is either $l$ or  $l+1$.  
Define $x'$ as $a(x)$ if $l' = l$, or as $\pi(a(x))$ if $l' = l+1$.
This means that there is an $l'$-level appointing node $y$, $y \ne x'$, which appoints $p$ as one of its $k+1$ new surrogates at level $l'$.
Observe that $p$ is a 10-friend of $x'$ and $y$, which implies that
$y$ is a 20-friend of $x'$. However, $x'$ is a dirty non-leech.
By construction, $y$ would become a leech of $x'$,  and would not be an appointing node,  a contradiction. 
}
\end{proof}


\vspace{0.07in}
\noindent
{\bf 3.2.2 ~A Key Lemma and Its Corollary.~}
The following lemma is central in our analysis.
In particular, it shows  that whenever the need to appoint new surrogates for some node $x$ arises, we have $|F(x)| \ge 2k+2$.
(See Corollary \ref{defined}.)
Thus Procedure $ComputeSets_{(i)}$ from Section 3.1.4 is well-defined. 
\begin{lemma} [All Non-Leeches are Large] \label{key}
Let $x$ be a dirty non-leech $i$-level node.  
Then: \\(a) The appointing copy $a(x)$ of $x$ is large, thus $x$ is large too by Claim \ref{auxlem}.
(It is possible that $x = a(x)$.)
\\(b) If $i$ is the last level of $term(x)$, then $|F(x)| \ge 2k+2$. (Note that $F(x) = F(x) \setminus S(x)$ in this case.)
 Moreover, if $|F(x)| < 3k+3$,
then at least $k+1$ points from $F(x)$ do not belong to $F(a(x))$. 
\end{lemma}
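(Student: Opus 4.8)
\medskip
\noindent\textbf{Proof plan.}
The plan is to prove (a) and (b) simultaneously by induction on the level $i$, exploiting the chain of dependencies: statement (b) at level $i-1$ feeds into statement (a) at level $i$, which in turn is used to prove statement (b) at level $i$. The base case is the lowest level $m_0$ at which any appointing node occurs. Below $m_0$ there are no appointing nodes, hence no dirty non-leeches (every dirty non-leech has an appointing copy), so (a) and (b) are vacuous there. At level $m_0$, any dirty non-leech $x$ satisfies $a(x)=x$ (its appointing copy cannot lie strictly below $m_0$), so $x$ appoints $k+1$ new surrogates; since all children of $x$ are clean, $x$ is processed in Case~1 of Procedure $ComputeSets_{(i)}$, where by construction an appointment is made only when $|F(x)| \ge 2k+2$, so $x$ is large and (a) holds; (b) is vacuous at level $m_0$ because $term(x)$ lasts at least $\tau+3 \ge 3$ levels, so it cannot end at the level where it starts. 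For the inductive step, fix a dirty non-leech $i$-level node $x$ and assume (a) and (b) at all lower levels, and (a) at level $i$ once we turn to (b).

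\medskip
\noindent\emph{Part (a).} If $a(x) \ne x$, then $a(x)$ is a dirty non-leech at a strictly lower level which is its own appointing copy; the inductive hypothesis (a) applied to $a(x)$ gives that $a(x)$ is large, and Claim~\ref{auxlem} transfers largeness to $x$. So we may assume $a(x)=x$, i.e.\ $x$ appoints $k+1$ new surrogates at level $i$. If this appointment is made in Case~1 of Procedure $ComputeSets_{(i)}$ (all children of $x$ clean) then by construction $|F(x)| \ge 2k+2$ and $x$ is large. The remaining possibility is the last sub-case of Case~2: $x$ has a dirty child, every dirty non-leech child of $x$ has its term over at level $i-1$, and no host was found for $x$. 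If $x$ has a dirty non-leech child $z$ whose term ends at level $i-1$, then $i-1$ is the last level of $term(z)$ and $z$ lies below $i$, so by the inductive hypothesis (b), $|F(z)| \ge 2k+2$ with $F(z)=F(z)\setminus S(z)$ consisting of clean points; by Claim~\ref{anctrivial} these are $10$-friends of $x=\pi(z)$, and by the $20$-friend argument of Claim~\ref{rmclean} no appointing node at level $i-1$ can have dirtied any of them (it would have become a leech of $z$, or $z$ its leech), using that large-friend-set nodes are processed first so that all $24$-friends of the dirty non-leech $z$ are dirty; hence these points are still clean at the start of level $i$ and all enter $F(x)$ (or $|F(x)|$ reaches its cap $3k+3$), so $|F(x)| \ge 2k+2$. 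The only case left is when every dirty child of $x$ is a leech; Claim~\ref{trait} then forces $x$ to become a leech of the parent of the common host (its dirty leech child has no dirty non-leech sibling to block this), contradicting that $x$ appoints --- unless that host's term already ended at level $i-1$, a boundary situation dispatched by applying the inductive hypothesis (b) to the host. In all cases $a(x)=x$ is large.

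\medskip
\noindent\emph{Part (b).} Now suppose $i$ is the last level of $term(x)$, so $F(x)=F(x)\setminus S(x)$; by Part~(a) together with Claim~\ref{auxlem} there is a chain of dirty non-leech copies from $x$ down to $a(x)$ at some level $l \le i-(\tau+2)$, and $a(x)$ is large. Since $S(x)=S(a(x)) \subseteq F(a(x))$ with $|S(a(x))|=k+1$, we have $|F(a(x))\setminus S(x)| \ge k+1$, and by Claim~\ref{auxlem} all these points lie in $F(x)$ unless $|F(x)|=3k+3$; in the latter case we are done, so assume $|F(x)| < 3k+3$. It then suffices to produce $k+1$ further points of $F(x)$ that do not belong to $F(a(x))$, since the two groups are disjoint. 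These ``new'' points come from the first phase of $term(x)$, which by definition ends exactly when the degree of $S(x)$ due to non-redundant cross edges incident on the copy chain (counted since appointment) reaches $k+1$. Each non-redundant $j$-level cross edge $(v,w)$ with $v$ a copy of $x$ contributes $|S(w)|$ to this degree and, by the construction of reserve sets, deposits into $R(v)$ at least $|S(w)|$ clean points: when $w$ is clean these are the points of $S(w)=D(w) \subseteq F(w)$, clean by Claim~\ref{friends}; when $w$ is dirty non-leech, the inductive hypothesis (a) makes $w$ large, so $F(w)\setminus S(w)=F(w)$ supplies $\ge k+1 \ge |S(w)|$ clean points; the leech case is routed through the host. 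Since $S(w)\cap S(x)=\emptyset$ on non-redundant edges (Claim~\ref{disj}), and since, by a packing/distance estimate of the type used in Claims~\ref{anctrivial} and~\ref{fofx}, these deposited points are too far from the much deeper node $a(x)$ to be $10$-friends of it, none of them lies in $F(a(x))$. Each deposited point is a $(\gamma+68)$-friend of $v$, hence by Claim~\ref{friendship} becomes a $10$-friend of the copy of $x$ that lies $\tau$ levels higher; because the second phase of $term(x)$ lasts $\tau+2 \ge \tau$ levels, all of them mature into friends of $x$ by level $i$, staying clean throughout by Claim~\ref{rmclean}. Thus $F(x)$ absorbs at least $k+1$ clean points outside $F(a(x))$, giving $|F(x)| \ge (k+1)+(k+1)=2k+2$.

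\medskip
\noindent\emph{Main obstacle.} The delicate step is the ``new points'' count in Part~(b): one must show that the $\ge k+1$ units of surrogate-degree accumulated during the first phase really translate into $\ge k+1$ \emph{distinct} clean points that (i) actually reach $F(x)$ before the term ends --- where the two-phase timing and Claim~\ref{friendship} are essential --- (ii) are not already $10$-friends of the far deeper appointing copy $a(x)$, which needs the distance separation of a copy of $x$ from $a(x)$ together with the disjointness of the $10$-friend neighbourhoods of distinct appointing copies (Claim~\ref{fofx}), and (iii) are not double-counted across different cross edges, which is exactly where discarding redundant edges and charging leeches to their hosts pays off. Reconciling these three invariants in the presence of leeches that borrow surrogate sets across the tree, and the processing-order subtleties that guarantee $24$-friends of a dirty non-leech are themselves dirty, is the technical heart of the argument.
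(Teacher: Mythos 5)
Your overall architecture matches the paper's: a simultaneous induction on the level in which (b) at lower levels feeds (a), and (a) feeds (b), with the base case being appointing nodes all of whose children are clean (where largeness is forced by the construction). Your Part (a) is essentially the paper's argument, including the reduction of the compound-dirty case to the induction hypothesis (b) applied to a dirty non-leech child whose term just ended, or to the host of a leech child.

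The gap is in Part (b), at precisely the point you flag as the ``main obstacle'' but then do not resolve. You assert that each point $q$ deposited into the reserve set of a $j$-level copy of $x$ stays clean ``throughout by Claim~\ref{rmclean}'' until it matures into a $10$-friend of the $(j+\tau)$-level copy. But Claim~\ref{rmclean} only protects points that are already $10$-friends of the dirty non-leech in question; a deposited point is initially only a $(\gamma+68)$-friend of the $j$-level copy, and during the $\tau$ intermediate levels it is unprotected: a distinct appointing node $u$, too far from the copy chain to be forced into a leech relation with it, can appoint $q$ as a new surrogate and make it dirty, after which $q$ is dropped from all reserve sets and never enters $F(x)$. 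This is not a corner case the paper dismisses --- it is the scenario occupying the bulk of the paper's proof of (b) (the case ``some $q_j$ became dirty until level $i-3$,'' split into Cases 1, 2, 2.a, 2.b). The paper's fix is a substitution argument: it identifies the appointing node $u$ (or the host $\tilde u_j$) responsible for dirtying $q_j$, shows by the induction hypothesis that this node is large, shows it must lie within cross-edge range of the copy chain of $x$ (so its friend set flows into $R(\cdot)$ and then $F(x)$), and shows via Claim~\ref{fofx} that its friend set is disjoint from $F(a(x))$ --- then harvests those $k+1$ points in place of the lost $q_j$. Without this substitution step your count of ``new'' points can fall short of $k+1$, so the proof as written does not go through. (A secondary, more minor imprecision: your claim that the deposited points avoid $F(a(x))$ because they are ``too far'' from $a(x)$ is not a distance-growth argument; in the paper it is a leech-formation contradiction, using that if $q_j$ were a $10$-friend of $a(x)$ then $y_j$ would be a $14$-friend of the copy $x_j$ and would have become its leech.)
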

\begin{proof}
The proof of the two assertions of the lemma is by induction on $i$.

A dirty node that has only clean children is called \emph{atomically-dirty}; otherwise it is     \emph{compound-dirty}.
For the basis of the induction we may consider nodes whose appointing copy is atomically-dirty. Such nodes must be large by the construction.
(An atomically-dirty non-leech  $y$ may get dirty only if $|F(y)| \ge 2k+2$,
in which case $y$ is large by definition.)
The first assertion follows immediately. 
\\
To prove the second assertion we use an argument that works for both the basis of the induction and the induction step.
We omit this argument here for conciseness,
but provide it in the induction step.
\\\emph{Induction Step: Assume that the  lemma holds for all smaller values of $i, i \ge 1$, and prove it for $i$.}

We start with the first assertion. We have shown that the case when $a(x)$ is atomically-dirty is trivial.

We henceforth assume that $a(x)$ is compound-dirty. By definition, $a(x)$ has at least one dirty child.
Let $i_a$ be the level of $a(x)$,  with $i_a \le i$.
Next, we argue that every dirty child of $a(x)$ is either a non-leech   whose term is over at level $i_a-1$, or a leech of some 
host whose term is over at level $i_a -1$.
Indeed, if $a(x)$ had a dirty non-leech child $y$ whose term is not over at level $i_a-1$, then Procedure $ComputeSets_{(i_a)}$ would assign $S(a(x)) = S(y)$,
and $a(x)$ would not be an appointing copy,  a contradiction.
Similarly, if $a(x)$ had a dirty child which is a leech of some host $y$ whose term is not over at level $i_a-1$, then $y$'s parent would be a non-leech by the construction.
However, it is easy to see that $a(x)$ would be close enough to $y$'s parent to become its leech by Claim \ref{trait}, and would not be an appointing copy, a contradiction.

Consider such a node $y$ of level $i_a -1 <  i$, which is either a dirty non-leech  child of $a(x)$
or a (dirty) host of a leech child of $a(x)$. We have shown that $i_a-1$ must be the last level of $term(y)$.  
By the induction hypothesis for $y$,
$|F(y)| \ge 2k+2$. 
By construction, all points of $F(y)$ are clean at the beginning of level $i_a -1$.
By Claim \ref{rmclean}, all points of $F(y)$ will remain clean at level $i_a-1$, and thus will be clean at the beginning of level $i_a$.
If $y$ is a child of $a(x)$, then for any point $s$ of $F(y)$,  
$\delta(a(x),s) ~\le~ \delta(a(x),y) + \delta(y,s) ~\le~ 3 \cdot 5^{i_a} + 10 \cdot 5^{i_a -1} ~\le~ 5 \cdot 5^{i_a}.$
Otherwise, $y$ is a host of a child $c$ of $a(x)$, in which case we have $$\delta(a(x),s) ~\le~ \delta(a(x),c) + \delta(c,y) +  \delta(y,s)
~\le~  3 \cdot 5^{i_a} + 24 \cdot 5^{i_a -1} + 10 \cdot 5^{i_a -1} ~\le~ 10 \cdot 5^{i_a}.$$
In both cases $a(x)$ has at least $2k+2$ clean 10-friends at the beginning of level $i_a$, namely, the points of $F(y)$. By construction, all these points will
belong to $F(a(x))$, unless $|F(a(x))| = 3k+3$.
In either case we have $|F(a(x))| \ge 2k+2$,
implying that $a(x)$ is large. The first assertion follows.

Next, we prove the second assertion.
We henceforth assume that $i$ is the last level of $term(x)$.
If $|F(x)| = 3k+3$, then we are done. We henceforth assume that $|F(x)| < 3k+3$.
The first assertion  implies that $a(x)$ is large, and so
there must be at least $k+1$ clean 10-friends of $a(x)$ in $F(a(x)) \setminus S(a(x))$.
By Claim \ref{auxlem}, all these points remain clean during the entire $term(x)$, and they will belong to $F(x)$.

Consider the last level of the first phase in $term(x)$, denoted $l$,
and let $q_1,\ldots,q_{k+1}$ be $k+1$ arbitrary points (among a total of at most $k + \xi^2 \cdot (2k+1)$ points) 
which increased the degree of points in $S(a(x)) = S(x)$ since the beginning of $term(x)$ (due to non-redundant cross edges). 
For each point $q_j$, let $l_j$ be the first level in $term(x)$, such that the basic spanner $H$ contains a cross edge between 
either the 
$l_j$-level copy $x_j$ of $x$ or one of its leeches and some $l_j$-level node $y_j$ for which both $q_j \in S(y_j)$ and $S(y_j) \ne S(x)$ hold; observe that $l_j \le l$.
(For such a \emph{non-redundant} edge, our FT spanner  $\cH$ contains a bipartite clique between $S(x)$ and $S(y_j)$.)
By Claim \ref{disj}, $S(y_j) \cap S(x) = \emptyset$. Moreover, we argue that $q_j$ cannot be a 10-friend of $a(x)$, which implies that it does not belong to $F(a(x))$.
If $y_j$ is a dirty non-leech, this assertion follows immediately  from Claim \ref{fofx}. 
In the case that $y_j$ is a dirty leech, we can apply Claim \ref{fofx} 
with the host of $y_j$ instead of $y_j$ itself, and get the same result.
We henceforth assume that $y_j$ is clean. Since $q _j \in S(y_j)$, Observation \ref{bas} implies that $q_j$ is a 4-friend of $y_j$.
If $q_j$ were a 10-friend of $a(x)$, then it must also be a 10-friend of the $l_j$-level copy $x_j$ of $x$.
Consequently, $y_j$ would be a 14-friend of $x_j$, and would thus become its leech and   dirty, a contradiction.

The second phase of $term(x)$ starts at level $l+1$, and ends at level $i = l+\tau + 2$. 
Suppose first that all $k+1$ points $q_1,\ldots,q_{k+1}$ are clean at the beginning of level $i -2 = l+\tau$.
In this case the $l_j$-level node $y_j$ for which $q_j \in S(y_j)$ must be clean by Corollary \ref{preanc} and the construction, for each $j \in [k+1]$.
Moreover, by Claim \ref{friendship}, all the $k+1$ points $q_1,\ldots,q_{k+1}$ are 10-friends of the $(i-2)$-level copy of $x$.
Claim \ref{anctrivial} implies that they will also be 10-friends of all the ancestors of that node.
Moreover, all these points will remain clean until level $i$ by Claim \ref{rmclean}. 
Finally, recall that all $k+1$ points $q_1,\ldots,q_{k+1}$ are not 10-friends of $a(x)$, and thus they do not belong $F(a(x))$. 
On the other hand, we argue that these points will belong to $F(x)$ by the construction. Indeed, 
since there is a cross edge between $y_j$ and either the $l_j$-level copy $x_j$ of $x$ or one of its leeches, for each $j \in [k+1]$,
it follows that $q_j$ will belong to the reserve set of either $x_j$ or one of its leeches. Consequently, $q_j$ will also belong
to the reserve set of the $(i-2)$-level copy $x_{i-2}$ of $x$. (The latter assertion is immediate if the cross edge is between $y_j$ and $x_j$. 
In the complementary case where the cross edge
is between $y_j$ and some leech of $x_j$,  $q_j$ will belong to the reserve set of the $(i-2)$-level ancestor of that leech, denoted $w_{i-2}$.
If $w_{i-2} = x_{i-2}$, then we are done. 
Otherwise, there is a cross edge between $w_{i-2}$ and $x_{i-2}$. 
Claim \ref{friendship} implies that $q_j$ is a 10-friend of $w_{i-2}$,
hence our construction guarantees that $q_j$ will be added to the reserve set of $x_{i-2}$.)
Since $q_j$ is a 10-friend of $x$ and $|F(x)| < 3k+3$,
it follows that $q_j$ will belong to $F(x)$.
We conclude that all points $q_1,\ldots,q_{k+1}$    belong to $F(x)$,
which completes the proof of the second assertion in this case.

We henceforth assume that some point $q_j$ became dirty until level $i-3$.
Recall that $l_j$ is the first level in $term(x)$,
such that $H$ contains a non-redundant edge between 
either the $l_j$-level copy $x_j$ of $x$ or one of its leeches
and some node $y_j$ for which both $q_j \in S(y_j)$
and $S(y_j) \cap S(x) = \emptyset$ hold. 
Let $l'_j$ be the first level after $l_j$ (i.e., $l'_j > l_j$) in which the $l'_j$-level copy $x'_j$ of $x$ and the $l'_j$-level ancestor $y'_j$
of $y_j$ are 24-friends. We   argue that $l'_j \le i-2$.
Indeed, as mentioned above, Claim \ref{friendship} implies that $q_j$ is a 10-friend of the $(i-2)$-level copy of $x$.
Also, by Observation \ref{farsur}, the fact that $q_j$ is in $S(y_j)$ implies that it is a 34-friend of $y_j$. By Claim \ref{anctrivial}, $q_j$ is a 10-friend of any ancestor of $y_j$.
It follows that the $(i-2)$-level copy of $x$ and the $(i-2)$-level ancestor of $y_j$ are 20-friends, and so $l'_j \le i-2$.

Consider now level $\tilde l_j = l'_j - 1$, where $l_j \le \tilde l_j \le i-3$, and let $\tilde x_j$ and $\tilde y_j$ be the $\tilde l_j$-level ancestors of $x_j$ and $y_j$, respectively.
We argue that $S(\tilde y_j) \cap S(x) = \emptyset$. Indeed, this assertion clearly holds if $\tilde l_j = l_j$. 
Consider the complementary case $\tilde l_j > l_j$, and suppose for contradiction that $S(\tilde y_j) \cap S(x) \ne \emptyset$.
Since $S(x) = S(\tilde x_j)$, Claim \ref{disj} implies that $S(\tilde y_j) =  S(\tilde x_j)$ must hold. 
However, in this case $\tilde y_j$ must be a leech of $\tilde x_j$ and thus a 24-friend of it, which stands
in contradiction to the above definitions. 

The analysis splits into two main cases.
\\\emph{Case 1: $\tilde y_j$ is a clean node.}
We argue that $q_j$ must become dirty at level $l'_j = \tilde l_j + 1$ (not before and not after).
To see this, first note that $q_j$ cannot become dirty at any level between $l_j$ and $\tilde l_j$. Indeed, otherwise the appointing node that made $q_j$ dirty
would be a 20-friend of the corresponding ancestor of $y_j$, which would become its leech and thus dirty. 
However, then $\tilde y_j$ would be dirty too by Observation \ref{ancestor}, a contradiction. On the other 
hand, if $q_j$ does not get dirty at level $l'_j$ or before,
then it must get dirty afterwards (because it gets dirty until level $i-3$). Suppose for contradiction that some node $z$
at level $h$ appoints $q_j$ as one of its $k+1$ new surrogates, where $l'_j + 1 \le h \le i-3$. 
This means that $\delta(z,q_j) \le 10 \cdot 5^{h}$.
Recall that $q_j$ is a 10-friend of any ancestor of $y_j$, and in particular of $y'_j$.
Also, $y'_j$ is a 24-friend of $x'_j$, and so 
$\delta(x'_j,q_j) ~\le~ \delta(x'_j,y'_j) + \delta(y'_j, q_j) ~\le~ 24 \cdot 5^{l'_j} + 10 \cdot 5^{l'_j} ~<~ 7 \cdot 5^{l'_j+1}
~\le~ 7 \cdot 5^h.$
(The last inequality holds since $h \ge l'_j + 1$.)
Let $x_h$ be the $h$-level copy of $x$. Observe that $\delta(x_h,q_j) \le \delta(x_h,x'_j) + \delta(x'_j,q_j) \le 4 \cdot 5^h + 7 \cdot 5^h = 11 \cdot 5^h$.
Consequently, $\delta(x_h,z) \le \delta(x_h,q_j) + \delta(q_j,z) \le 11 \cdot 5^h + 10 \cdot 5^h \le 21 \cdot 5^h$. Thus $x_h$ and $z$ are 21-friends,
and so $z$ would have to become a leech of $x_h$, and would not be an appointing node, a contradiction. 
It follows that $q_j$ becomes dirty at level $l'_j = \tilde l_j + 1$, as was argued above.

Denote  by $u$ the $l'_j$-level appointing node that appoints $q_j$ as one of its $k+1$ new surrogates at level $l'_j$.
By construction, $u$ is a dirty non-leech. By the induction hypothesis, $u$ must be large.
Thus there are at least $k+1$ clean 10-friends of $u$ in $F(u)\setminus S(u)$.
Claim \ref{fofx} implies that $F(u) \cap F(a(x)) = \emptyset$. By Claim \ref{auxlem}, all the at least $k+1$ points in $F(u) \setminus S(u)$
remain clean during the entire $term(u)$. Moreover, recall that $y'_j$ is a 24-friend of $x'_j$ and a 10-friend of $q_j$,
and so $x'_j$ and $q_j$ are 34-friends.
Since $u$ is a 10-friend of $q_j$, we get that $x'_j$ and $u$ are 44-friends.
Hence there is a cross edge between $x'_j$ and $u$ (assuming $\gamma > 44$).
Since $u$ is dirty, it follows that the second phase of $term(x)$ will start at level $l'_j+1$ (or before), which implies that $term(u)$ will last at least
until the last level $i$ of $term(x)$.
In particular, all the at least $k+1$ points in $F(u) \setminus S(u)$ will remain clean until level $i$. 
Since $x'_j$ and $u$ are 44-friends,
all points of $F(u) \setminus S(u)$ are 54-friends of $x'_j$. All these points are also 10-friends of the $(l'_j+2)$-level copy of $x$, and let alone of $x$ itself.
By construction, since  there is a cross edge between $x'_j$ and $u$ (and as $|F(x)| < 3k+3$),
all points of $F(u) \setminus S(u)$ will belong to $F(x)$.  
Hence,  at least $k+1$ points from $F(x)$ do not belong to $F(a(x))$, and we have $|F(x)| \ge |F(a(x)) \setminus S(a(x))| + |F(u) \setminus S(u)| \ge 2k+2$,
which provides the required result.
\\\emph{Case 2: $\tilde y_j$ is a dirty node.} 
It is possible that $\tilde y_j$ is a leech; in this case we consider the host of $\tilde y_j$.
Define $\tilde u_j$ as $\tilde y_j$ if it is a non-leech, and as the host of $\tilde y_j$ otherwise.
By the induction hypothesis, $\tilde u_j$ is large, and so there are at least $k+1$ clean 10-friends in $F(a(\tilde u_j)) \setminus S(\tilde u_j)$.
Moreover, all these points will also belong to $F(\tilde u_j)$ by Claim \ref{auxlem}, unless $|F(\tilde u_j)| = 3k+3$.
Recall that $x'_j$ and $y'_j$ are 24-friends. Hence 
$$\delta(\tilde x_j,\tilde y_j) ~\le~ \delta(\tilde x_j,x'_j) + \delta(x'_j,y'_j) + \delta(y'_j,\tilde y_j) ~\le~
3 \cdot 5^{\tilde l_j + 1} + 24 \cdot 5^{\tilde l_j + 1} + 3 \cdot 5^{\tilde l_j + 1} ~=~ 30 \cdot 5^{\tilde l_j + 1} ~=~ 150 \cdot 5^{\tilde l_j}.$$
Note also that $\delta(\tilde y_j,\tilde u_j) \le 24 \cdot 5^{\tilde l_j}$, and so  $\tilde x_j$ and $\tilde u_j$ are 174-friends. 
By construction, there is a cross edge between $\tilde x_j$ and $\tilde u_j$ (assuming $\gamma \ge 174$),
and so all points of $F(\tilde u_j)$ will belong to $F(x)$ by the construction.
Since $|F(x)| < 3k+3$, it must hold that $|F(\tilde u_j)| < 3k+3$. It follows that all the at least $k+1$ points of 
$F(a(\tilde u_j)) \setminus S(\tilde u_j)$ belong to $F(\tilde u_j)$ (and thus also to $F(x)$).
On the other hand, recall that $S(\tilde y_j) \cap S(x) = \emptyset$. Since $S(\tilde u_j) = S(\tilde y_j)$ and $S(\tilde x_j) = S(x)$,
we have $\tilde u_j \ne \tilde x_j$.
Claim \ref{fofx} implies that $F(a(x)) \cap F(a(\tilde u_j)) = \emptyset$,
and so all the at least $k+1$ points in $F(a(\tilde u_j)) \setminus S(\tilde u_j)$ belong to $F(\tilde u_j)$ but do not belong to $F(a(x))$.

Suppose first that $\tilde l_j$ is not the last level of $term(\tilde u_j)$. In this case all the at least $k+1$ points in 
$F(a(\tilde u_j)) \setminus S(\tilde u_j)$ remain clean at level $\tilde l_j + 1$ by Claim \ref{auxlem}. However, it is not difficult to see 
that all these points are 10-friends of the
$(\tilde l_j+2)$-level copy of $x$. Hence, by Claim \ref{rmclean}, these points will remain clean during the entire $term(x)$.
As mentioned, 
all points of $F(a(\tilde u_j)) \setminus S(\tilde u_j)$ belong to both $F(\tilde u_j)$ and $F(x)$, but not to $F(a(x))$. 
Consequently, at least $k+1$ points from $F(x)$ do not belong to $F(a(x))$, and we have $|F(x)| \ge |F(a(x)) \setminus S(a(x))| + 
|F(a(\tilde u_j)) \setminus S(\tilde u_j)| \ge 2k+2$. This completes the proof in this case.



In what follows we assume that $\tilde l_j$ is the last level of $term(\tilde u_j)$. 
Note that $F(\tilde u_j) = F(\tilde u_j) \setminus S(\tilde u_j)$. 
By the induction hypothesis, $|F(\tilde u_j)|  \ge 2k+2$.
By Claim \ref{rmclean}, all points of $F(\tilde u_j)$ remain clean at level $\tilde l_j$.
The analysis splits into two subcases.  
\\\emph{Case 2.a: $F(a(x)) \cap F(\tilde u_j) = \emptyset$}. In this case some of the points in $F(\tilde u_j)$ may get dirty at level $\tilde l_j+1$,
which occurs if they are appointed as new surrogates of some $(\tilde l_j + 1)$-level node. 
We argue that no more than $k+1$ of these points may get dirty at  level
$\tilde l_j+1$. 
Suppose for contradiction otherwise.
Since any appointing node appoints exactly $k+1$ points as new surrogates, 
this means that at least two distinct $(\tilde l_j + 1)$-level appointing nodes $v_1$ and $v_2$ must appoint points from $F(\tilde u_j)$ as their new surrogates.
Since all points of $F(\tilde u_j)$ are within distance $10 \cdot 5^{\tilde l_j} = 2 \cdot 5^{\tilde l_j+1}$ from $\tilde u_j$,
it follows that they are all within distance $4 \cdot 5^{\tilde l_j+1}$ from each other. 
Since $v_1$ (respectively, $v_2$) appoints at least one point $p_1$ (resp., $p_2$) from $F(\tilde u_j)$ as a new surrogate, 
we have $\delta(v_1,p_1),\delta(v_2,p_2) \le 10 \cdot 5^{\tilde l_j+1}$. Hence $$\delta(v_1,v_2) ~\le~ \delta(v_1,p_1) + \delta(p_1,p_2) + \delta(p_2,v_2)
~\le~ 10 \cdot 5^{\tilde l_j+1} + 4 \cdot 5^{\tilde l_j+1} + 10 \cdot 5^{\tilde l_j+1} ~=~ 24 \cdot 5^{\tilde l_j+1}.$$
Hence $v_1$ and $v_2$ are 24-friends, and so one would have to become a leech of the other and would not be an appointing node, a contradiction.
Since $|F(\tilde u_j)| \ge 2k+2$,
it follows that at least $k+1$ points from $F(\tilde u_j)$ must be clean at the beginning of level $\tilde l_j + 2$. Moreover,
it is not difficult to see that all these points are 10-friends of the $(\tilde l_j+2)$-level copy of $x$.
Hence, they will remain clean during the entire $term(x)$ by Claim \ref{rmclean}, and will belong to $F(x)$.  
We have proved that $F(x)$ contains at least $2k+2$ points, at least $k+1$ of which do not belong to $F(a(x))$,
which completes the proof in this case.
\\\emph{Case 2.b: $F(a(x)) \cap F(\tilde u_j)  \ne \emptyset$.} In this case $\tilde u_j$ must be a 20-friend of the $\tilde l_j$-level copy $\tilde x_j$ of $x$.
It is easy to see that all points of $F(\tilde u_j)$ will be 10-friends of the $(\tilde l_j+1)$-level copy of $x$,
and so they will remain clean during the entire $term(x)$ by Claim \ref{rmclean}. 
By construction, $F(x)$ will contain all points of $F(\tilde u_j)$.
We have also shown that all the at least $k+1$ points of $F(a(\tilde u_j)) \setminus S(\tilde u_j)$ belong to $F(\tilde u_j)$. 
Finally, Claim \ref{fofx} implies that $F(a(x)) \cap F(a(\tilde u_j)) = \emptyset$, and we are done. Lemma \ref{key} follows.\QED
\end{proof}

Lemma \ref{key} yields the following corollary, which, in turn, implies that our construction is well-defined.
\begin{corollary} [Procedure $ComputeSets_{(i)}$ is Well-Defined] \label{defined} 
For any appointing node $x$, we have $|F(x)| \ge 2k+2$. 
\end{corollary}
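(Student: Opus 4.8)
The plan is to read the corollary off directly from part~(a) of Lemma~\ref{key}, after noting that for an appointing node the predicate ``large'' is equivalent to the desired inequality $|F(x)|\ge 2k+2$. So the work is purely bookkeeping; all the genuine effort has already gone into Lemma~\ref{key}.

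First I would unpack what it means for $x$ to be an appointing node. From Procedure $ComputeSets_{(i)}$ (both Case~1 and Case~2), whenever a node $x$ at some level $i$ performs the appointment of $k+1$ new surrogates, no host was found for it, so $x$ is a \emph{non-leech}; the newly appointed points become dirty and $x$ itself is marked dirty; and since $x$ is the node doing the appointing, it coincides with its own appointing copy, $a(x)=x$. Hence $x$ is a dirty non-leech node, so Lemma~\ref{key} applies to it. (In Case~1 the branch that appoints new surrogates is literally guarded by the test $|F(x)|\ge 2k+2$, so there the conclusion is immediate; the content of the corollary is really the Case~2 appointments, which is precisely what Lemma~\ref{key} was designed to handle.)

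Next I would invoke Lemma~\ref{key}(a): the appointing copy $a(x)$ of a dirty non-leech node is large. Since $a(x)=x$, the node $x$ is large, which by definition means $|F(x)\cup S(x)|\ge 2k+2$. Finally, the construction stipulates that an appointing node chooses its $k+1$ new surrogates \emph{from} $F(x)$, so at this stage $S(x)\subseteq F(x)$ and therefore $F(x)\cup S(x)=F(x)$. Combining the two yields $|F(x)|\ge 2k+2$, which is the claim, and in particular certifies that the step ``appoint $k+1$ points from $F(x)$'' in Procedure $ComputeSets_{(i)}$ is always executable, i.e.\ the procedure is well-defined.

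There is no real obstacle here beyond Lemma~\ref{key} itself; the one place that needs a moment's care is the identification $F(x)\cup S(x)=F(x)$, i.e.\ checking that at the instant of appointment the surrogate set really is a subset of the friend set just computed (rather than, say, a set inherited from an earlier level), which is exactly what the ``chosen arbitrarily from $F(x)$'' clause in the description of appointments in Section~3.1 guarantees.
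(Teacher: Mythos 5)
Your derivation is correct, but it takes a different route from the paper's. You read the corollary off from Lemma~\ref{key}(a) applied to the appointing node $x$ itself (using $a(x)=x$ and the inclusion $S(x)\subseteq F(x)$ to convert ``large'' into $|F(x)|\ge 2k+2$), whereas the paper never invokes the lemma for $x$ at level $i$: after dispensing with the guarded Case~1 appointments exactly as you do, it descends to the relevant dirty child $y$ of $x$ (or to the host $z$ of a leech child), argues that $term(z)$ must be over at level $i-1$, applies Lemma~\ref{key}\emph{(b)} to $z$ to get $|F(z)|\ge 2k+2$, and then pushes these $2k+2$ clean points up into $F(x)$ via Claim~\ref{rmclean} and a 10-friend distance computation. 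The two arguments are closely related --- the paper's corollary proof is essentially a rerun of the compound-dirty case in the proof of Lemma~\ref{key}(a) --- so your observation that the corollary is already contained in part~(a) is a legitimate economy. The one point you should make explicit is a timing issue that the paper's detour avoids: the statement of Lemma~\ref{key}(a) is about a node that is already a \emph{dirty} non-leech, and your step ``$S(x)\subseteq F(x)$, hence $F(x)\cup S(x)=F(x)$'' presupposes that the appointment has already been carried out --- which is the very step whose feasibility the corollary is meant to certify. This is repairable (the proof of Lemma~\ref{key}(a) for a compound-dirty appointing copy in fact establishes $|F(a(x))|\ge 2k+2$ using only data from levels below $i$ together with the first part of Procedure $ComputeSets_{(i)}$, i.e.\ before the appointment is attempted), but as written your argument leans on the packaged statement ``large,'' whose definition involves $S(x)$; the paper's choice to argue through the already-processed node $z$ at level $i-1$ keeps the induction cleanly non-circular. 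With that caveat acknowledged, your proof is sound.
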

\begin{proof}
By construction, the statement clearly holds for both leaf nodes and nodes for which all children are clean.
We henceforth assume that $x$ is an $i$-level appointing node, for $i \ge 1$, with at least one dirty child $y$.
Define $z$ as $y$ if it is a non-leech, or as $y$'s host otherwise.
We argue that $term(z)$ must be over at level $i-1$. Indeed, otherwise $z$'s parent will be a dirty non-leech,
and so $x$ will not become an appointing node, a contradiction.
(This follows immediately from the construction if $y$ is a non-leech. If $y$ is a leech and $z$ is its host,
then $x$ would be a 24-friend of $z$'s parent by Claim \ref{trait}, and would become a leech of either $z$'s parent or some other $i$-level dirty non-leech 
24-friend of $x$, and not an appointing node.)

We have shown that $term(z)$ is over at level $i-1$. Hence, Lemma \ref{key} implies
that $|F(z)| \ge 2k+2$. Moreover, all points of $F(z)$ must remain clean at level $i-1$ by Claim \ref{rmclean}.
It is easy to see that all points of $F(z)$ 
are 10-friends of $x$. Consequently,
all these points must belong to $F(x)$ by the construction, unless $|F(x)| = 3k+3$.  
The corollary follows.
\QED
\end{proof}

\ignore{
[[S: remove]]
The next observation follows from the construction. We will use it in conjunction with Lemma \ref{key} to prove Corollary \ref{semi}.
\begin{observation} \label{ob:semi}
Suppose that a  point $p$ becomes semi-clean at level $i$, and let $x_1,\ldots,x_t$ be
all the $i$-level nodes of which $p$ is a surrogate, where $t = O(1)^{O(d)}$.
If $p$ is a surrogate of some clean $j$-level node $x$, for any  $j > i$, 
then $x$ must be an ancestor of one of the $t$ nodes $x_1,\ldots,x_t$.
\end{observation}

[[S: remove]]
The next statement shows that once a point becomes semi-clean, it will almost immediately stop serving as a surrogate
of clean nodes. More specifically, if a point becomes semi-dirty at level $i$, it may serve as a surrogate of clean nodes at the following level $i+1$, but not at the subsequent levels $i+2,\ldots,\ell$.
\begin{corollary} \label{semi}
If a point $p$ becomes semi-clean at level $i$,
it will not serve as a surrogate of clean $j$-level nodes, for any $j \ge i+2$.
\end{corollary}
\begin{proof}
Assume that $p$ becomes semi-clean at level $i$ due to some $i$-level node $u$ that became dirty, where $p \in F(u) \setminus S(u)$,
and consequently all clean points in $F(u) \setminus S(u)$ (including $p$) were marked as semi-clean. 

Suppose for contradiction that $p$ serves as a surrogate of some clean $j$-level node $x$, with $j \ge i+2$.
By Observation \ref{ob:semi}, $x$ is an ancestor of some $i$-level node $\tilde x$ such that $p \in S(\tilde x)$.
Since $x$ is clean,  all its descendants 
must be clean   by Observation \ref{ancestor}.
In particular, $\tilde x$, $\pi(\tilde x)$ and $\pi(\pi(\tilde x))$ are all clean.
The fact that $p$ is a 10-friend of both $\tilde x$ and $u$ implies that $\tilde x$ and $u$ are 20-friends,
and so $\delta(\tilde x,u) \le 20 \cdot 5^i$.
If $u$ is a non-leech, then $\tilde x$ will become its leech and thus dirty, a contradiction.
We henceforth assume that $u$ is a leech and $z$ is its host. Thus $u$ and $z$ are 24-friends, and so $\delta(u,z) \le 24 \cdot 5^i$.
We have two cases.
\\\emph{Case 1: $term(z)$ is not over at level $i$.} By construction, $\pi(z)$ will be a dirty non-leech.
Observe that $$\delta(\pi(\tilde x),\pi(z)) ~\le~ \delta(\pi(\tilde x),\tilde x) + \delta(\tilde x,u) + \delta(u,z) + \delta(z,\pi(z))
~\le~ 3 \cdot 5^{i+1} + 20 \cdot 5^i + 24 \cdot 5^i + 3 \cdot 5^{i+1} ~\le~ 15 \cdot 5^{i+1}.$$
Hence $\pi(\tilde x)$ is a 15-friend of $\pi(z)$, and will become its leech and thus dirty, a contradiction.
\\\emph{Case 2: $term(z)$ is over at level $i$.} By Lemma \ref{key}, we have $|F(z)| \ge 2k+2$.
Also, all the points in $F(z)$ are clean (possibly semi-clean) at the beginning of level $i$. For any such point $s \in F(z)$, we have
$$\delta(\pi(\pi(\tilde x)),s) ~\le~ \delta(\pi(\pi(\tilde x)),\tilde x) + \delta(\tilde x,z) + \delta(z,s) ~\le~ 4 \cdot 5^{i+2} + 44 \cdot 5^i + 10 \cdot 5^i ~<~ 7 \cdot 5^{i+2}.$$
In other words, all points in $F(z)$ are 7-friends of $\pi(\pi(\tilde x))$.
\\If all these points are clean at the beginning of level $i+2$, then they will belong to $F(\pi(\pi(\tilde x)))$ by the construction.
In this case either $\pi(\pi(\tilde x))$ will become a leech of some dirty $(i+2)$-level node or
it will become an appointing node. In either case $\pi(\pi(\tilde x))$ will become dirty, a contradiction.
\\Otherwise, this means that some other appointing node $v$ made at least one of these points $s$ dirty either at level $i$ or at level $i+1$.
Since $v$ is an appointing node, both $v$ and $\pi(v)$ are dirty non-leech nodes by the construction.
Assume first that $v$ is an $i$-level node. Since $s$ is a 10-friend of both $v$ and $z$, it follows that $v$ is a 20-friend of $z$, 
and would become a leech of $z$ and not an appointing node, a contradiction.
We henceforth assume that $v$ is an $(i+1)$-level node.  
Since $s \in S(v)$, it holds that 
$$\delta(\pi(v),s) ~\le~ \delta(\pi(v),v) + \delta(v,s) \le 3 \cdot 5^{i+2} + 10 \cdot 5^{i+1} = 5 \cdot 5^{i+2}.$$
Hence $\pi(v)$ is a dirty non-leech $(i+2)$-level node, which is a 5-friend of $s$. Since $s$ is a 7-friend of $\pi(\pi(\tilde x))$,
 $\pi(\pi(\tilde x))$ must be a 12-friend of $\pi(v)$, and would  become its leech and thus dirty, a contradiction.
\QED
\end{proof}
}



\vspace{0.07in}
\noindent
{\bf 3.2.3 ~Completing the Degree Analysis.~}
In this section we prove that the degree of our FT spanner $\cH$ is at most $\eps^{-O(d)} \cdot k$. 
We do this in two stages.  
In the first stage  we bound the degree of $\cH$ due to cross edges, and
in the second stage we show that the degree contribution due to tree edges is negligible.

\vspace{0.07in}
\noindent
{\bf First Stage.~}
Denote the degree of an arbitrary point $p$ due to cross edges until level $i$ by $\deg_i(p)$.
Recall that $\xi = \eps^{-O(d)}$ is an upper bound for the maximum degree of any tree node in the basic spanner $H$. 
Since the surrogates of any node are 34-friends of that node (see Observation \ref{farsur})
and the set of all $i$-level nodes constitutes a $5^i$-packing, for each $i \in [0,\ell]$,
Fact \ref{prop:small_net} implies that any point can serve as a surrogate of at most $O(1)^{O(d)}$ nodes at each level.  
By construction, the number of surrogates in any node is at most $2k+1$, and so a bipartite clique that replaces a cross edge may increase
the degree of each of the relevant surrogates by at most $2k+1$ units.
It follows that  the degree of any surrogate (due to cross edges) may increase by at most $\xi \cdot O(1)^{O(d)} \cdot (2k+1) \le \xi^2 (2k+1)$ units at each level.

Recall that $D = (\tau+4) \xi^2 (2k+1)$, and that $\ell$ is the last level in the net-tree $T$ (the level of the root).

In what follows we show that $\deg_\ell(p) \le 2D$, for any point $p \in X$.
\begin{lemma} \label{threshold}
For any surrogate $p \in S(x)$ of a $j$-level clean  node $x$, we have $\deg_{j}(p) \le D$.
\end{lemma}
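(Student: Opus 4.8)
The plan is to trace the degree of $p$ back to the cross edges incident on the base bags of $p$, to bound the per-level contribution, and then to show that such cross edges can contribute at only $O(\tau)$ many levels before they would force one of these base bags to become large — hence dirty — contradicting the cleanness of $x$.

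First the reduction. Since $x$ is clean, Observation~\ref{bas} gives $S(x)=D(x)$, so $p\in D(x)$; moreover a clean node is small, so $|D(x)|\le 2k+1$. By Claim~\ref{friends}, $p$ is clean at level $j$, and since once a point is dirty it stays dirty, $p$ is clean at every level $i\le j$. For $i\le j$ let $b_i(p)$ denote the $i$-level base bag of $p$. Since $x=b_j(p)$ and the base bags of $p$ form a descending chain, each $b_i(p)$ is a descendant of $x$ and is therefore clean. By Corollary~\ref{preanc} a dirty node has only dirty surrogates, so $p$ can be a surrogate only of clean $i$-level nodes; by the remark following Observation~\ref{bas} the only such node is $b_i(p)$, and indeed $p\in S(b_i(p))=D(b_i(p))$. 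Hence every cross edge of $\cH$ incident on $p$ arises from an $i$-level cross edge $(b_i(p),z)$ of $H$ with $i\le j$, and the bipartite clique replacing it adds at most $|S(z)|\le 2k+1$ new neighbours of $p$. Since $b_i(p)$ is incident on at most $\xi$ cross edges of $H$ at level $i$, the contribution to $\deg_j(p)$ from level $i$ is at most $\xi(2k+1)\le\xi^2(2k+1)$.

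The crux is to show that this contribution is nonzero for at most $\tau+4$ levels $i\le j$, which gives $\deg_j(p)\le(\tau+4)\xi^2(2k+1)=D$. Suppose $H$ contains an $i'$-level cross edge $(b_{i'}(p),z)$. When $ComputeSets_{(i')}$ processes it, it adds to $R(b_{i'}(p))$ all clean points of $F(z)$ and all clean $10$-friends of $z$ in $R(z)$; these are $(\gamma+10)$-friends of $b_{i'}(p)$, hence by Claim~\ref{friendship} $10$-friends of $b_{i'+\tau}(p)$, the $(i'+\tau)$-level ancestor of $b_{i'}(p)$. Since every reserve point is moved into the friend set within $\tau$ levels, these still-clean points enter $F(b_{i'+\tau}(p))$ unless that set already has size $3k+3$. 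One then shows — distinguishing whether $z$ is clean, a dirty non-leech (so $|F(z)|\ge 2k+2$ by Lemma~\ref{key}), or a dirty leech (passing to its host via Claims~\ref{trait} and \ref{friendship}) — that if cross edges were incident on base bags of $p$ at more than $\tau+4$ levels $\le j$, then $|F(b_{i''}(p))|$ would reach $2k+2$ for some $i''\le j$; that base bag would then be large and hence dirty, contradicting the fact that all descendants of the clean node $x$, in particular $b_{i''}(p)$, are clean. This contradiction yields $\deg_j(p)\le D$.

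The main obstacle is precisely this last counting step. The subtlety is that the degree $p$ accrues through a cross edge $(b_{i'}(p),z)$ is governed by $|S(z)|$, which can be as large as $k+1$ even when $z$ is a small dirty leech whose own friend set carries almost nothing, so one cannot simply equate ``degree gained'' with ``clean points pushed into a friend set''. One must instead follow the host of $z$ at a nearby level, use Observation~\ref{farsur} and the disjointness Claims~\ref{fofx} and \ref{disj} to see that the clean points fed into successive friend sets of the base-bag chain accumulate essentially without collision, and then invoke ``once a friend, always a friend'' (Claim~\ref{anctrivial}) together with the packing bound (Fact~\ref{prop:small_net}) to keep all the relevant points within distance $O(\gamma 5^i)$ of $p$. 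The distance bookkeeping itself is routine; the care lies entirely in the accounting.
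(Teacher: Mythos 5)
Your reduction (clean $p$ is a surrogate only of its base bags, each level contributes at most $\xi^2(2k+1)$) is sound, but the counting step you call the crux rests on a false claim. It is not true that the per-level contribution is nonzero at only $\tau+4$ levels $i\le j$: the base bag $b_i(p)$ typically has non-redundant cross edges at \emph{every} level, and each such edge can go to a node $z$ with $|S(z)|$ as small as $1$. With $k$ large, $p$ can therefore accumulate degree $1$ or $2$ at each of arbitrarily many levels while the friend sets of its base bags stay far below $2k+2$, so no base bag is ever forced to become large, and your intended contradiction never materializes. The quantity that triggers the base bag going dirty is not ``number of levels with a cross edge'' but the \emph{cumulative number of distinct cross-edge neighbours of $p$} reaching $2k+2$ — because it is precisely those $2k+2$ neighbours that are all $(\gamma+68)5^i$-close to $p$ (Claim~\ref{friendship}) and hence all become $10$-friends of the $(i+\tau)$-level base bag simultaneously.

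The paper's argument is therefore organized around the threshold rather than around levels: let $i$ be the first level with $\deg_i(p)\ge 2k+2$; then $\deg_i(p)\le 2k+1+\xi^2(2k+1)$ (threshold plus one level's increment), and one shows that the $(i+\tau+2)$-level base bag of $p$ must be dirty, forcing $j\le i+\tau+1$ and hence only $\tau+1$ further levels of increments, giving $\deg_j(p)\le 2k+1+\xi^2(2k+1)+(\tau+1)\xi^2(2k+1)\le D$. Establishing that the $(i+\tau+2)$-level base bag is dirty also requires a case analysis you have not accounted for: some of the $2k+2$ accumulated neighbours $q_l$ may become dirty \emph{before} level $i+\tau$, in which case one cannot argue via the friend set of $y=b_{i+\tau}(p)$ filling up; instead one locates the appointing node $w$ that made $q_l$ dirty, passes to its $(i+\tau)$-level ancestor $w'$ (and, if $w'$ is a leech, to its host and the status of that host's term, invoking Lemma~\ref{key} and Claim~\ref{rmclean}), and shows $\pi(\pi(y))$ is within leeching distance of a dirty non-leech. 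Your sketch gestures at the right ingredients but applies them to the wrong counted quantity, so as written the proof does not go through.
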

\begin{proof}
Let $p$ be the point in $S(x)$ of maximum degree, i.e., for every point $q \in S(x)$, $\deg_{j}(q) \le \deg_{j}(p)$.
By Observation \ref{bas}, $p \in D(x)$, implying that $x$ is the $j$-level base bag of $p$.
\\If $\deg_{j}(p) < 2k+2$, then we are done. 
We henceforth assume that $\deg_{j}(p) \ge 2k+2$.
Let $i$ be the minimum level such that $\deg_i(p) \ge 2k+2$,
and let $q_1,\ldots,q_{2k+2}$ be $2k+2$ arbitrary neighbors of $p$ due to cross edges until level $i$;
observe that $2k+2 \le \deg_i(p) \le 2k +1+ \xi^2(2k+1)$.
By Claim \ref{friendship}, $p$ is within distance $(\gamma+68) 5^i$ from each of the $2k+2$ points $q_1,\ldots,q_{2k+2}$.
Moreover, Claim \ref{friendship} implies that all $2k+2$ points $q_1,\ldots,q_{2k+2}$ will be 10-friends of the $(i+\tau)$-level base bag $y$ of $p$.
Write $i' = i + \tau$.
\\If all these $2k+2$ points are clean at the beginning of level $i'$, then they would all belong to $F(y)$ by the construction,
unless $|F(y)| = 3k+3$. (Indeed, Corollary \ref{preanc} implies that any node $y_l$ for which $q_l \in S(y_l)$ at level until $i'$ must be clean, for each $l \in [2k+1$], and so we have $q_l \in F(y_l)$
by Observation \ref{bas2}. Since there is a cross edge between $p$ and $q_l$ in our FT spanner $\cH$ at level at most $i \le i'$, it follows that $q_l$ will be added to the reserve set of the
corresponding descendant of $y$, and will eventually be added to $F(y)$, unless $|F(y)| = 3k+3$.)
In either case $|F(y)| \ge 2k+2$, and so $y$ will be dirty (in fact, it will be large). 
By Observation \ref{ancestor}, $\pi(y)$ and $\pi(\pi(y))$ will be dirty as well.
\\Otherwise, at least one point $q_l$ becomes dirty before level $i'$. Let $w$ be the (dirty) appointing node which made $q_l$ dirty, where $q_l \in S(w)$,
and let $w'$ be the $i'$-level ancestor of $w$. By Observation \ref{ancestor}, $w'$ must be dirty.
Note that $q_l$ is a 10-friend of $w$, and so it is also a 10-friend of $w'$ by Observation \ref{anctrivial}.
Since $q_l$ is a 10-friend of both $y$ and a $w'$, it follows that $y$ and $w'$ are 20-friends.
If $w'$ is a non-leech, then $y$ will become its leech and thus dirty. By Observation \ref{ancestor}, 
$\pi(y)$ and $\pi(\pi(y))$ will be dirty as well.
Otherwise $w'$ is a leech of some host $z$, and so $w'$ and $z$ are 24-friends.
There are two cases. 
\\\emph{Case 1: $term(z)$ is not over at level $i'$.} By construction, $\pi(z)$ will be a dirty non-leech. Observe that
$$\delta(\pi(y),\pi(z)) ~\le~ \delta(\pi(y),y) + \delta(y,w') + \delta(w',z) + \delta(z,\pi(z))
~\le~ 3 \cdot 5^{i'+1} + 20 \cdot 5^{i'} + 24 \cdot 5^{i'} + 3 \cdot 5^{i'+1} ~<~ 15 \cdot 5^{i'+1}.$$
Hence $\pi(y)$ is a 15-friend of $\pi(z)$. In other words, $\pi(y)$ is close enough to become a leech of $\pi(z)$,
and so it must be dirty.
By Observation \ref{ancestor}, $\pi(\pi(y))$ will be dirty too.
\\\emph{Case 2: $term(z)$ is over at level $i'$.} By Lemma \ref{key}, we have $|F(z)| \ge 2k+2$.
Also, all the points in $F(z) = F(z) \setminus S(z)$ are clean at the beginning of level $i'$, and will remain clean at level $i'$ by Claim \ref{rmclean}. For any such point $s$, we have
$$\delta(\pi(\pi(y)),s) ~\le~ \delta(\pi(\pi(y)),y) + \delta(y,z) + \delta(z,s) ~\le~ 4 \cdot 5^{i'+2} + 44 \cdot 5^{i'} + 10 \cdot 5^{i'} ~<~ 7 \cdot 5^{i'+2}.$$
In other words, all points in $F(z)$ are 7-friends of $\pi(\pi(y))$.
\\If all these points are clean at the beginning of level $i'+2$, then they will belong to $F(\pi(\pi(y)))$ by the construction, unless $|F(\pi(\pi(y)))| = 3k+3$.
(Indeed, there is a cross edge between $y$ and $z$ (assuming $\gamma \ge 44$), and so all points of $F(z)$ will be added to the reserve set of $y$,
and will eventually be added to $F(\pi(\pi(y)))$, unless  $|F(\pi(\pi(y)))| = 3k+3$.)
Hence $\pi(\pi(y))$ will either become a leech of some dirty $(i'+2)$-level node or
it will become an appointing node. In either case $\pi(\pi(y))$ will be dirty.
\\Otherwise, this means that some $(i'+1)$-level appointing node $v$ made at least one of these points $s \in F(z)$ dirty at level $i'+1$.
Since $v$ is an appointing node, both $v$ and $\pi(v)$ must be dirty non-leeches by the construction.
The fact that $v$ appoints $s$ as a new surrogate implies that $v$ and $s$ are 10-friends.
Hence
$$\delta(\pi(v),s) ~\le~ \delta(\pi(v),v) + \delta(v,s) \le 3 \cdot 5^{i'+2} + 10 \cdot 5^{i'+1} = 5 \cdot 5^{i'+2}.$$
It follows that $\pi(v)$ is a dirty non-leech $(i'+2)$-level node, which is a 5-friend of $s$. Since $s$ is a 7-friend of $\pi(\pi(y))$,
we conclude that $\pi(\pi(y))$ is a 12-friend of $\pi(v)$. In other words, $\pi(\pi(y))$ is close enough to become a leech of $\pi(v)$, and so it must be dirty.

Summarizing, we have shown that $\pi(\pi(y))$ is dirty.
Note that $\pi(\pi(y))$ is the   $(i'+2)$-level base bag of $p$ and recall that $x$ is the $j$-level base bag of $p$.
Since $x$ is clean, Observation \ref{ancestor} implies that $j \le i'+ 1 = i+\tau+1$.
Recall that $\deg_i(p) \le 2k +1+ \xi^2(2k+1)$.
Also, in each of the $j - i  \le \tau+1$ levels $i+1,\ldots,j$, the degree of $p$ may increase by at most
$\xi^2 (2k+1)$ units, and so $$\deg_{j}(p) ~\le~ \deg_i(p) + (\tau+1) \xi^2 (2k+1) ~\le~ 2k +1+ \xi^2(2k+1) + (\tau+1) \xi^2 (2k+1) ~\le~ D. \inQED$$
\end{proof}

Let $i$ be the first level in which $p$ becomes dirty. (If no such $i$ exists, we can define $i = \ell+1$.)

By construction, $p$ is appointed as a new surrogate of a large (and thus dirty) $i$-level appointing
node $x$. Corollary \ref{preanc} implies that $p$ cannot be a surrogate of dirty nodes before level $i$, and so $\deg_{i-1}(p) \le D$ must hold by Lemma \ref{threshold}. 
In the degenerate case when $p$ is not appointed as a surrogate of dirty nodes (i.e., $i = \ell+1$), we have $\deg_\ell(p) = \deg_{i-1}(p) \le D$, and we are done. We henceforth assume that $i \le \ell$.

Recall that $p$ is appointed at level $i$ for a term of at least $\tau + 3$ levels, which consists of two phases.
The first phase starts at level $i$, and ends when the degree of $p$ due to (non-redundant) cross edges since its appointment at level $i$ reaches $k+1$.
When the second phase starts, the degree of $p$ is between $k+1$ and  $k + \xi^2(2k+1)$.
The second phase lasts $\tau + 2$ levels. In each of these levels $p$ may incur $\xi^2 (2k+1)$ additional units
to its degree. Consequently, when the term of $p$ is over, the degree of $p$ due to cross edges since its appointment is bounded above by
$k + \xi^2(2k+1) + (\tau+2) \xi^2(2k+1) \le  D$. (In fact, it is possible that $p$ does not finish its term due to abundance of surrogates;
see the second remark at the end of Section 3.1.4. Thus we can refer to $term(x)$ instead of the term of $p$.)
When $p$ was appointed as a new surrogate of the large $i$-level node $x$ (at level $i$), it was marked as dirty.
Therefore, by construction, $p$ will not be re-appointed again as a surrogate (of either clean nodes or dirty ones) after $term(x)$ is over.

Summarizing, 
the degree of $p$ may increase by at most $D$ units 
while it is
a surrogate of clean nodes, and by at most $D$ additional units while it is a surrogate of dirty nodes.
It follows that $\deg_\ell(p) \le 2D$.

\vspace{0.07in}
\noindent
{\bf Second Stage.~}
Consider a tree edge $(x,\pi(x))$ between a node $x = (p,j-1)$ and its parent $\pi(x) = (q,j)$ in the tree.
This edge is translated into a bipartite clique $C_{x}$ between $S(x)$ and $S(\pi(x))$ in our FT spanner $\cH$.
If $S(x) = S(\pi(x))$, then this edge is \emph{redundant} and we do not replace it by a bipartite clique.

Next, we argue that the degree of a point $p$ in $\cH$ due to tree edges is small.
To this end we analyze the degree of $p$ over all cliques $C_x$ that correspond to tree edges $(x,\pi(x))$.
We say that $C_x$ is a \emph{$j$-level clique} if $\pi(x)$ is a $j$-level node (that is, the level of the clique $C_x$ is determined by the level of $\pi(x)$).

Let $i$  be the first level in which $p$ becomes dirty. 
If $p$ does not become dirty, then we define $i = \ell+1$.
By Corollary \ref{preanc}, $p$ may only be a surrogate of clean nodes in  levels $0,\ldots,i-1$,
whereas it may only be a surrogate of dirty nodes in  levels $i,\ldots,\ell$.
Note that the $j$-level base bag of $p$, for $j < i$, is not necessarily clean. Let $l$ be the first level 
in which the $l$-level base bag of $p$ is dirty, and assuming $l \ge 1$, let $z$ denote the $(l-1)$-level base bag of $p$.
Notice that all ancestors of $z$ are base bags of $p$, and they are dirty by Observation \ref{ancestor}. 
Since $z$ is clean, Observation \ref{ancestor} implies that all the descendants of $z$ are clean too.
Claim \ref{friends} implies that $l \le i$. Moreover, $p$ cannot be a surrogate
of any $j$-level node, for all $j \in [l,i-1]$.

We first analyze the degree of $p$ over all cliques until level $l-1$, i.e., over $j$-level cliques $C_x$ that correspond to tree edges $(x,\pi(x))$, 
where $\pi(x)$ is a $j$-level node and $j \le l-1$, and $p$ belongs to either $S(x)$ or $S(\pi(x))$.
Observation \ref{bas} implies that $\pi(x)$ is the $j$-level base bag of $p$.
Consequently, both $x$ and $\pi(x)$ must be clean. 
Notice that $\pi(x)$ must 
be a descendant of the $(l-1)$-level base bag $z$ of $p$.
Observation \ref{bas} implies that  $S(x) \subseteq  S(\pi(x)) \subseteq S(z)$.
Consequently, the clique $C_x$ that corresponds to the edge $(x,\pi(x))$ is contained in the \emph{internal clique} over $S(z)$,  which includes
an edge between any pair of points from $S(z)$.
It follows that all cliques $C_x$ that may increase $p$'s degree until level $l-1$  are contained in the internal clique  over $S(z)$.
Corollary \ref{bas2} implies that $|S(z)| < 2k+2$.  
Hence the degree of $p$ over all cliques $C_x$ until level $l-1$ is bounded above by $2k$.

For a fixed $j$, the degree of $p$ due to all $j$-level cliques $C_x$ may increase by at most $O(1)^{O(d)} \cdot (2k+1)$ units.
(This is because (i) $p$ may serve as a surrogate of at most $O(1)^{O(d)}$ nodes at each level,
(ii) each tree node has at most $O(1)^{O(d)}$ children, and (iii) each node has at most $2k+1$ surrogates.)
In particular, the degree of $p$ due to all $l$-level cliques (where $x$ and $\pi(x)$ are in levels $l-1$ and $l$, respectively) is at most $O(1)^{O(d)} \cdot (2k+1)$.
Similarly, the degree of $p$ due to all $i$-level cliques (where $x$ and $\pi(x)$ are in levels $i-1$ and $i$, respectively) is also in check.
Recall also that $p$ cannot be a surrogate of any $j$-level node, for all $j \in [l,i-1]$.
Consequently, the degree of $p$ over all cliques $C_x$ until level $i$ is at most $2k + O(1)^{O(d)} \cdot (2k+1)$.

In what follows we analyze the degree of $p$ over all cliques of level at least $i+1$, i.e., over $j$-level cliques $C_x$ that correspond to tree edges
$(x,\pi(x))$, where $\pi(x)$ is a $j$-level node, for $j \ge i+1$, and $p$ belongs to either $S(x)$ or $S(\pi(x))$. 

By definition, $p$ is appointed as a surrogate of a large $i$-level node, denoted $y$, and subsequently becomes dirty.
The degree of $p$ may increase due to tree edges during $term(y)$. 
However, our construction guarantees that $p$ will not become a surrogate of any node after $term(y)$ is over.
Let $i'$ denote the last level of $term(y)$, let $y'$ be the $i'$-level ancestor of $y$, and
let $\Pi = \Pi_{y,y'}$ be the path between $y$ and $y'$ in the net-tree $T$.
Observe that cliques of level larger than $i'+1$ cannot increase the degree of $p$,
and so it suffices to bound the degree of $p$ over all cliques of level between $i+1$ and $i'+1$.

In level $i'+1$ (which is the last level of interest to us) the degree of $p$ will increase by at most $O(1)^{O(d)} \cdot (2k+1)$ units.
Consider now a $j$-level clique $C_x$ that corresponds to a tree edge 
$(x,\pi(x))$, where $\pi(x)$ is a $j$-level node,  for $j \in [i+1,i']$,
and $p$ belongs to either $S(x)$ or $S(\pi(x))$. We also assume that the edge $(x,\pi(x))$ is non-redundant, i.e., $S(x) \ne S(\pi(x))$.
We argue that the clique $C_x$ is \emph{incarnated} in some  clique that corresponds to 
either a $(j-1)$-level cross edge or a $j$-level cross edge.
Since we have already shown that the degree of $p$ due to cross edges is in check,
proving this assertion would imply that the degree of $p$
due to tree edges is in check as well, thus completing the degree analysis.

Observe that all nodes along the path $\Pi$ are dirty non-leech copies of $y$. Moreover, by Claim \ref{disj},
for any $j$-level node $v$, where $j \in [i,i']$,  either $S(v) = S(y)$ or $S(v) \cap S(y) = \emptyset$ must hold,  
and in the former case Claim \ref{fofx} implies that $v$ is either the $j$-level copy of $y$ or one of its leeches. In particular, if $p$ belongs to $S(v)$,
then we have $S(v) = S(y)$.
Consequently, if $p$ belongs to both $S(x)$ and $S(\pi(x))$, then we have $S(x) = S(\pi(x)) = S(y)$, and so the edge 
$(x,\pi(x))$ is redundant. We henceforth assume that $p$ belongs to either $S(x)$ or $S(\pi(x))$, but not to both, and so $S(x) \cap S(\pi(x)) = \emptyset$.
This means that $x$ cannot belong to $\Pi$, as otherwise $\pi(x)$ would belong to $\Pi$ too, and the edge $(x,\pi(x))$ would be redundant.

Suppose next that $x$ does not belong to $\Pi$ but $\pi(x)$ belongs to $\Pi$. In this case $\pi(x)$ has a dirty non-leech child $u$ in $\Pi$, 
with $S(u) = S(\pi(x)) = S(y)$. Since $x$ and $u$ are siblings, they must be $\gamma$-friends (assuming $\gamma \ge 30$). Hence they are connected by a cross edge in the basic spanner $H$,
and so $C_x$ is incarnated in the clique $C_{x,u}$ that corresponds to the $(j-1)$-level cross edge $(x,u)$.

We may henceforth assume that neither $x$ nor $\pi(x)$ belong to $\Pi$. 
The first case is that $p$ belongs to $S(x)$ but does not belong to $S(\pi(x))$. In this case $x$ is a leech of the $(j-1)$-level copy $y_{j-1}$ of $y$,
and so $\delta(x,y_{j-1}) \le 24 \cdot 5^{j-1}$. Let $y_{j}$ denote the $j$-level copy of $y$, and note that $y_{j} = \pi(y_{j-1})$. We have 
$$\delta(y_{j},\pi(x)) ~\le~ \delta(y_{j},y_{j-1}) 
+ \delta(y_{j-1},x) + \delta(x,\pi(x)) ~\le~ 3 \cdot 5^{j} + 24 \cdot 5^{j-1} + 3 \cdot 5^{j} ~<~ 11 \cdot 5^{j}.$$ 
Hence
 $\pi(x)$ is an 11-friend of the dirty non-leech $y_{j}$, and so it must be a leech too by the construction.
Note that the host $h$ of $\pi(x)$ cannot be $y_{j}$, otherwise the edge $(x,\pi(x))$ would be redundant.
Observe also that $\delta(y_{j},h) \le \delta(y_{j},\pi(x)) + \delta(\pi(x),h) \le 11 \cdot 5^{j} + 24 \cdot 5^{j} = 35 \cdot 5^{j}$.
In other words, $y_{j}$ and $h$ are $\gamma$-friends (assuming $\gamma \ge 35)$. Thus they are connected by a cross edge in 
the basic spanner $H$, and so $C_x$ is incarnated in the clique $C_{y_{j},h}$ that corresponds to the $j$-level cross edge $(y_{j},h)$.

The remaining case is that $p$ does not belong to $S(x)$ but belongs to $S(\pi(x))$. Thus $\pi(x)$ is a leech of the $j$-level
copy $y_{j}$ of $y$, and so $\delta(y_{j},\pi(x)) \le 24 \cdot 5^{j}$ and $S(\pi(x)) = S(y_{j}) = S(y_{j-1})$. We have
$$\delta(y_{j-1},x) ~\le~ \delta(y_{j-1},y_{j}) + \delta(y_{j},\pi(x)) + \delta(\pi(x),x) ~\le~ 3 \cdot 5^{j} + 24 \cdot 5^{j} + 3 \cdot 5^{j} ~=~ 150 \cdot 5^{j-1}.$$ In other words, $y_{j-1}$ and $x$ are $\gamma$-friends (assuming $\gamma \ge 150$).
Thus they are connected by a cross edge in the basic spanner $H$, and so $C_x$ is incarnated in the clique $C_{y_{j-1},x}$ that corresponds to the $(j-1)$-level cross
edge $(y_{j-1},x)$. This concludes the proof of the above assertion and completes the degree analysis.

\subsection{Fault-Tolerance and Stretch}
In this section we show that    $\mathcal H$ 
is a $k$-FT $(1+\eps)$-spanner for $(X,\delta)$.  
To this end we show that for any pair $p,q \in X \setminus F$ of functioning  points (where $|F| \le k$), there is a $(1+\eps)$-spanner path
in $\mathcal H \setminus F$.   

Consider the $(1+\eps)$-spanner path $\Pi_{p,q}$ between $p$ and $q$ in the
basic spanner $H$ that is guaranteed by Lemma \ref{lemma:cross_edge}. The path $\Pi_{p,q}$ is obtained by climbing up from the leaf nodes $(p,0)$ and $(q,0)$
to some $j$-level ancestors $(p',j)$ and $(q',j)$, respectively, which are connected by a cross edge  $((p',j),(q',j))$.
By Observation \ref{bas}, $p$ is a surrogate of every clean node along the sub-path of $\Pi_{p,q}$ that climbs from $(p,0)$ to $(p',j)$. Also, every dirty node along this path is complete, hence it has $k+1$ surrogates, at least one of which must function. Since every edge along this
path is translated into a bipartite clique in $\cH$ between the corresponding surrogate sets, we obtain this way a functioning
path that starts at $p$ and ends at an arbitrarily chosen surrogate $\tilde p \in S(p',j) \setminus F$ of $(p',j)$. 
Similarly, we obtain a functioning path that starts at $q$ and ends at an arbitrarily chosen surrogate $\tilde q \in S(q',j) \setminus F$ of $(q',j)$. 
The cross edge between $\tilde p$ and $\tilde q$ in $\cH$ is functioning too, i.e., $(\tilde p, \tilde q) \in \cH \setminus F$. 
(In the degenerate case where $S(p',j) = S(q',j)$, the cross edge between $(p',j)$ and $(q',j)$ in $H$ is redundant, and we do not translate it into a bipartite clique in $\cH$. Hence,  the edge $(\tilde p,\tilde q)$ may not belong to $\cH$. However, in this case we can simply take $\tilde p = \tilde q$.)
In this way we obtain a functioning path $s(\Pi_{p,q})$ between $p$ and $q$ in our FT spanner $\cH$.
Observation \ref{farsur} implies that any node along the original path $\Pi_{p,q}$ in $H$ is a 34-friend of all its surrogates (and in particular, of all its functioning surrogates). 
Lemma \ref{lemma:cross_edge2} implies that $s(\Pi_{p,q})$ is a $(1+O(\eps))$-spanner path between $p$ and $q$, but we can reduce the stretch to $1+\eps$ by scaling $\gamma$ up by some constant.

\section{Lightness}
The radius $rad(x)$ of an $i$-level node is its distance scale; we   have $rad(x) = 5^i$, but
in most other papers (such as \cite{GGN04,CGMZ05}) $rad(x)$ is equal to $2^i$.
The standard analysis of \cite{CGMZ05} (see also \cite{CLN12,Sol12,CLNS13}) implies that
the sum of radii of all tree nodes is $O(\log n)\cdot \omega(MST(X))$.
Since the degree of any tree node in the basic spanner construction $H$ is $\eps^{-O(d)}$
and since all edges in $H$ that are incident on a node $x$ have weight at most $O(\frac{1}{\eps}) \cdot rad(x)$,
it follows that the lightness of $H$ is bounded above by $\eps^{-O(d)} \cdot \log n$. 
Our FT spanner construction $\cH$ replaces each edge $(x,y)$ in $H$ by a bipartite clique of size $O(k^2)$ between $S(x)$ and $S(y)$.
By Observation \ref{farsur}, the weight of each of the $O(k^2)$ edges of the clique exceeds the weight $\omega(x,y)$ of the original edge
by an additive factor of at most $34 \cdot (rad(x) + rad(y)) = O(rad(x) + rad(y))$. Hence the lightness of our
FT spanner $\cH$ is bounded by $\eps^{-O(d)} \cdot k^2 \log n$. (See \cite{CLNS13} for a   detailed argument.)

To obtain the desired lightness bound of $\eps^{-O(d)} (k^2+ k \log n)$, we  need to work  harder.
The key idea is to replace each bipartite clique by a bipartite matching. We can do this ``matching replacement'' only for edges $(x,y)$
where both nodes $x$ and $y$ are complete. Note that complete nodes are not necessarily dirty. 
For technical reasons it is more convenient to do this matching replacement for edges $(x,y)$ where both nodes $x$ and $y$ are dirty.
In particular, the surrogate set of a dirty node contains exactly $k+1$ points (see Observation \ref{ancestor}).
In this case $|S(x)| = |S(y)| = k+1$, and we will use a perfect bipartite matching (of size $k+1$) to connect $S(x)$ with $S(y)$.
If $x$ or $y$ (or both) are clean, we will connect $S(x)$ and $S(y)$ by a bipartite clique as before.
Using a bipartite matching instead of a bipartite clique shaves a factor of $k$ in the lightness bound;
more specifically, the total lightness of all the bipartite matchings is $\eps^{-O(d)} \cdot k \log n$. 

Next, we analyze the lightness of the bipartite cliques. Each such clique corresponds to an edge $(x,y)$ of $H$, in which
either $x$ or $y$ (or both) are clean. We charge the total weight of all the clique edges with the clean nodes among $x$ and $y$.
In what follows we bound the total charge $W$ over all clean nodes.

Recall that all descendants of a clean node are clean as well (see Observation \ref{ancestor}).
For a clean node $x$, let $W(x)$ denote the total charge over $x$ and 
all its descendants.
In other words, $W(x)$ stands for the total weight
of clique edges that correspond to edges of $H$ that are incident on either $x$ or its descendants.
\begin{lemma}  \label{rad}
For a clean $i$-level node $x$, we have $W(x) \le \eps^{-O(d)} \cdot k^2 \cdot rad(x)$. 
\end{lemma}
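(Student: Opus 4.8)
The plan is to prove the bound by a \emph{per-point charging argument} that exploits two special features of clean nodes: by Corollary \ref{bas2} a clean node $y$ satisfies $S(y)=D(y)$ and $|D(y)|\le 2k+1$, and by the remark following Observation \ref{bas}, at each level a point is the surrogate of \emph{at most one} clean node of that level, namely its base bag at that level. First I would record the routine per-edge estimate: if an edge $e$ of $H$ at distance scale $5^j$ (a $j$-level cross edge, or a tree edge one of whose endpoints is a $j$-level node) is replaced in $\cH$ by a bipartite clique between the corresponding surrogate sets, then by Observation \ref{farsur} each edge of that clique has weight at most $\eps^{-O(d)}\cdot 5^j$ (using $\gamma=\Theta(1/\eps)$), the clique has at most $(2k+1)^2$ edges, and at most $2(2k+1)=O(k)$ of them are incident on any single point.

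Next I would fix the clean $i$-level node $x$ together with a point $b\in D(x)$, and bound the total weight $w_x(b)$ of all clique edges counted in $W(x)$ that are incident on $b$. Every such clique edge arises from an edge $e=(y,z)$ of $H$ having a clean endpoint in the subtree of $x$, with $b\in S(y)\cup S(z)$. Since all descendants of $x$ are clean (Observation \ref{ancestor}) and since $b\in S(y)=D(y)$ forces $y$ to be a base bag of $b$, the node through which $b$ takes part in $e$ is --- apart from the single exceptional edge joining $x$ to its parent --- one of the base bags $v_0(b),v_1(b),\ldots,v_i(b)$ of $b$, all of which lie in the subtree of $x$. Each $v_j(b)$ is incident to at most $\xi=\eps^{-O(d)}$ edges of $H$, and each of those contributes $O(k)$ clique edges incident on $b$, each of weight $\eps^{-O(d)}\cdot 5^j$; summing the geometric series over $j=0,\ldots,i$ gives $w_x(b)\le\eps^{-O(d)}\cdot k\cdot\sum_{j=0}^{i}5^j\le\eps^{-O(d)}\cdot k\cdot rad(x)$.

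Finally I would note that every clique edge counted in $W(x)$ is incident on at least one point of $D(x)$: it comes from an $H$-edge with a clean endpoint $w$ in the subtree of $x$, and one of its two endpoints lies in $S(w)=D(w)\subseteq D(x)$. Hence $W(x)\le\sum_{b\in D(x)}w_x(b)\le|D(x)|\cdot\eps^{-O(d)}\,k\,rad(x)\le(2k+1)\cdot\eps^{-O(d)}\,k\,rad(x)=\eps^{-O(d)}\,k^2\,rad(x)$, as claimed. The hard part will be the bookkeeping hinted at above: carefully cataloguing which $H$-edges can contribute clique edges to $W(x)$ when only one of their endpoints is clean and lies inside the subtree (the cross edges leaving the subtree, and the tree edge from $x$ to a possibly dirty parent), and verifying that each such edge is still incident on $D(x)$ and is accounted for by the per-point sum. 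I would also stress why a plain tree recursion that bounds $W(x)$ by $\sum_{c\in Ch(x)}W(c)$ plus the clique weight charged to $x$ itself fails to close: a node may have $\eps^{-O(d)}\gg 5$ children, so such a recursion would only give $W(x)$ polynomial in $rad(x)$; it is the bound $|D(x)|\le 2k+1$ on clean nodes, not any recursion, that keeps $W(x)$ linear in $rad(x)$.
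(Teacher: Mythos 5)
Your proof is correct and follows essentially the same route as the paper: both arguments reduce $W(x)$ to a sum over the at most $2k+1$ points of $S(x)=D(x)$ (Corollary \ref{bas2}, Observation \ref{bas}), bound each point's incident clique edges by $\eps^{-O(d)}\cdot k$, and multiply by a weight of order $\eps^{-O(d)}\cdot rad(x)$. The only (cosmetic) difference is that the paper imports the per-point degree bounds already proved in Section 3.2.3 (Lemma \ref{threshold} for cross edges and the tree-edge analysis) and uses a uniform weight bound $(\gamma+68)5^i$ on all $m(x)$ edges, whereas you re-derive the per-point count level by level from the ``one clean base bag per level'' property and sum the resulting geometric series of weights.
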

\begin{proof}
Let $m(x)$ denote the number of  edges in $\cH$ that correspond to edges of $H$ that are incident on either $x$ or its descendants.
Observation \ref{bas} implies that for any descendant $x'$ of $x$, $S(x') \subseteq S(x)$.
It follows that $m(x) \le \sum_{p \in S(x)} \widehat{\deg}_i(p)$, where $\widehat{\deg}_i(p)$ stands for the total degree of a point $p$
due to cross edges until level $i$ and due to tree edges until level $i+1$ (i.e., including the edge $(x,\pi(x))$).
Recall that $\deg_i(p)$ stands for the degree of a point $p$ due to cross edges until level $i$.
By Lemma \ref{threshold}, $\deg_i(p) \le D$, for each $p \in S(x)$.
We have also shown that the degree of such a point $p$ due to tree edges until level $i+1$ (including the edge $(x,\pi(x))$) is 
bounded above by $2k + O(1)^{O(d)} \cdot (2k+1)$. 
We thus have $\widehat{\deg}_i(p) \le \deg_i(p) + 2k + O(1)^{O(d)} \cdot (2k+1)  \le \eps^{-O(d)} \cdot k$. 
By construction, $|S(x)| \le 2k+1$. It follows that
$$m(x) ~\le~ \sum_{p \in S(x)} \widehat{\deg}_i(p) ~\le~ \sum_{p \in S(x)} \eps^{-O(d)} \cdot k ~\le~ (2k+1)\cdot (\eps^{-O(d)} \cdot k) ~\le~ \eps^{-O(d)} \cdot k^2.$$
By Claim \ref{friendship},
each of the cross edges in $\cH$ that increase $p$'s degree until level $i$ has weight at most $(\gamma+68)5^i$. (In fact, since $x$ is clean, $S(x) = D(x)$. Hence the distance between $x$ and its surrogates is at most $4 \cdot 5^i$, and so
the weight of these edges is at most 
$(\gamma+4 + 34) 5^i = (\gamma + 38) 5^i$.)
Also, since the distance between any two surrogates of $S(x)$ is at most $8 \cdot 5^i$, each of the tree edges in $\cH$ that increase $p$'s degree until level $i$ has weight at most $8 \cdot 5^i$. By Observation \ref{farsur}, the surrogates of $\pi(x)$ are 34-friends of it,
hence the edges of $\cH$ that correspond to the tree edge $(x,\pi(x))$ in $H$ have weight at most
$4 \cdot 5^i + 3 \cdot 5^{i+1} + 34 \cdot 5^{i+1} = 189 \cdot 5^i \le (\gamma+68)5^i$; this inequality holds for $\gamma \ge 121$.
Recall also that $rad(x) = 5^i$.
We conclude  that the total weight  of all $m(x)$ edges is at most $m(x) (\gamma+68)5^i = \eps^{-O(d)} \cdot k^2 \cdot rad(x)$.
\QED
\end{proof}

A  clean node is called \emph{almost-dirty} if its parent in $T$ is dirty.
Let $A$ be the set of all almost-dirty nodes in $T$.
By Observation \ref{ancestor}, for any pair  $x,y \in A$ of distinct almost-dirty nodes, neither one of them can be an ancestor of the other.
Moreover, any clean node $z \nin A$ that is not almost-dirty is a descendant of a single almost-dirty node,
which implies that the total charge $W$ of all clean nodes is equal to  $\sum_{x \in A} W(x)$.
By Lemma \ref{rad}, we have $W = \sum_{x \in A} W(x) \le \eps^{-O(d)} \cdot k^2  \cdot \sum_{x \in A} rad(x)$.

Next, we show that $\sum_{x \in A} rad(x) = O(\omega(MST(X)))$, which implies that $W \le \eps^{-O(d)} \cdot k^2  \cdot \omega(MST(X))$.
Since the total weight of the bipartite cliques is at most $W$, the desired lightness bound would follow.

The following lemma implies that the net-points $p(x),p(y)$ that correspond to a pair $x,y \in A$ of distinct almost-dirty nodes
are far away from each other, and in particular distinct (i.e., $p(x) \ne p(y)$). 
\begin{lemma} [Almost-Dirty Nodes are Far Away From Each Other] \label{wtlem}
For any pair 
$(p,i),(q,j)$ of distinct nodes from $A$, with $i \le j$, we have $\delta(p,q) \ge 5^{j-1} = \frac{rad(q,j)}{5}$.
\end{lemma}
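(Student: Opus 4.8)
The plan is to prove the contrapositive, splitting on whether $i\ge j-1$ or $i\le j-2$. In the first range $(p,i)$ and $(q,i)$ both sit at level $i$: $(p,i)\in A$ is clean, and $(q,i)$ — which equals $(q,j)$ when $i=j$ and is a child of the clean $(q,j)$ when $i=j-1$ — is clean as well, since every descendant of a clean node is clean (Observation~\ref{ancestor}). Their net-points must differ, for otherwise the path $(p,i),(p,i+1),\dots,(p,j)$ would exhibit $(q,j)$ as an ancestor of $(p,i)$, which is impossible for two distinct nodes of $A$ (the observation quoted just before the lemma). Since $N_i$ is a $5^i$-packing, $\delta(p,q)>5^i\ge 5^{j-1}$, as required.

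So assume $i\le j-2$ and, for contradiction, $\delta(p,q)<5^{j-1}$. Let $(\tilde p,j)$ be the $j$-level ancestor of $(p,i)$; it is dirty by Observation~\ref{ancestor}, since $j\ge i+1$ and the parent of $(p,i)$ is dirty. The climbing bound gives $\delta(p,\tilde p)\le\sum_{l=i+1}^{j}3\cdot 5^{l}<4\cdot 5^{j}$, so $\delta(q,\tilde p)\le\delta(q,p)+\delta(p,\tilde p)<5^{j-1}+4\cdot 5^{j}<5\cdot 5^{j}$; hence $(q,j)$ is a $24$-friend of the dirty node $(\tilde p,j)$. If $(\tilde p,j)$ is a dirty \emph{non-leech}, we are done: because the second part of $ComputeSets_{(j)}$ handles nodes with $|F|\ge 2k+2$ first, every $24$-friend of a dirty non-leech is already dirty when it is processed (first remark after $ComputeSets$), so $(q,j)$ would be dirty, contradicting $(q,j)\in A$. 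The same computation at level $j-1$ (and at $j-2$, when $i\le j-3$) shows that the corresponding dirty ancestors of $(p,i)$ are $24$-friends of the clean nodes $(q,j-1)$, resp.\ $(q,j-2)$ — indeed $\delta(q,\hat p_l)<5^{j-1}+4\cdot 5^{l}\le 24\cdot 5^{l}$ for $l\ge j-2$ — so if any of these dirty ancestors of $(p,i)$ is a non-leech, the same argument again yields a contradiction.

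The remaining, and main, case is that all those dirty ancestors of $(p,i)$ (at levels $j$, $j-1$, and possibly $j-2$) are leeches. Here the naive fix of passing to the host of $(\tilde p,j)$ fails: the host $(h,j)$ is a dirty non-leech, but $\delta(\tilde p,h)\le 24\cdot 5^j$ only yields $\delta(q,h)<29\cdot 5^j$, just outside the $24\cdot 5^j$ leech radius. To close this gap I would exploit the term/copy structure: an appointing copy forces its next $\tau+2$ ancestors to be non-leech copies (second remark after $ComputeSets$), so a run of leeches sitting above the clean node $(p,i)$ cannot be anchored by a nearby appointing copy; combining this with Claim~\ref{trait} (being a leech is inherited), Claim~\ref{anctrivial} (``once a friend, always a friend''), and Claim~\ref{rmclean} should locate a dirty non-leech within distance $24\cdot 5^l$ of some clean descendant $(q,l)$ of $(q,j)$, whereupon $(q,l)$ is simultaneously clean (being a descendant of $(q,j)\in A$) and forced to become a leech — a contradiction. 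Making the constants in this leech/host-chain bookkeeping line up is the crux, and the main obstacle, of the argument.
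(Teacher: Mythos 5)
Your handling of the range $i\ge j-1$ and your setup for $i\le j-2$ (the contradiction hypothesis, the climbing bound to a dirty ancestor of $(p,i)$, and the observation that a dirty non-leech within the $24$-friend radius of a clean descendant of $(q,j)$ immediately forces a contradiction) all match the paper. But the case you yourself flag as ``the crux'' --- every relevant dirty ancestor of $(p,i)$ is a leech, and the host sits at distance up to $29\cdot 5^{j-1}$, just outside the leech radius --- is exactly where the paper's proof does its real work, and your proposed route for it is both unfinished and pointed in the wrong direction. You try to chase down a dirty \emph{non-leech} within the $24$-friend radius of some clean $(q,l)$ via term/copy bookkeeping; nothing in the construction guarantees such a node exists, and the paper does not produce one.

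The paper's resolution uses a different mechanism: it works only at level $j-1$, takes the host $(h,j-1)$ of the leech $(r,j-1)$, and splits on whether $\pi(h)$ is a non-leech (then $(q,j)$ is a $9$-friend of $\pi(h)$ and becomes its leech --- your ``easy'' subcase, one level up) or a leech. In the latter subcase the construction forces $term(h,j-1)$ to be over at level $j-1$, which unlocks Lemma~\ref{key}(b): $|F(h,j-1)|\ge 2k+2$, with all these points clean $10$-friends of $(h,j-1)$. Since $(h,j-1)$ and $(q,j-1)$ are $29$-friends, there is a cross edge between them, so all $2k+2$ points flow into $R(q,j-1)$ and then into $F(q,j)$ (unless $F(q,j)$ is already full), making $(q,j)$ \emph{large} and hence dirty --- contradiction. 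The decisive tool is largeness via the friend-transfer mechanism and Lemma~\ref{key}, not leech-hood; your sketch never invokes Lemma~\ref{key}, so the main case remains a genuine gap.
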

\begin{proof}
Notice that both $p$ and $q$ belong to the $i$-level net $N_i$. Since $N_i$ a $5^i$-packing, we have $\delta(p,q) \ge 5^i$.
If $i \ge j-1$, then we are done.
We henceforth assume that $i \le j-2$. 

Suppose for contradiction that $\delta(p,q) <  5^{j-1}$.
Let $(r,j-1)$ be the $(j-1)$-level ancestor of $(p,i)$. Since $j-1 \ge i+1$, Observation \ref{ancestor} implies that $(r,j-1)$ must be dirty. We have $\delta(r,p) \le 4 \cdot 5^{j-1}$,
and so $\delta(r,q) \le \delta(r,p) + \delta(p,q) \le 4 \cdot 5^{j-1} + 5^{j-1} = 5 \cdot 5^{j-1}$. 
Note also that $(q,j-1)$ is a clean child of $(q,j)$.

If $(r,j-1)$ is a non-leech, then $(q,j-1)$ would have to become its leech (since $(q,j-1)$ and $(r,j-1)$ are 5-friends) and thus dirty, a contradiction.
We may henceforth assume that $(r,j-1)$ is a leech of some $(j-1)$-level node $(h,j-1)$. By construction, we have $\delta(r,h) \le 24 \cdot 5^{j-1}$. There are two cases.
\\\emph{Case 1: The parent $(\pi(h),j)$ of $(h,j-1)$ is a non-leech.} Notice that $$\delta(\pi(h),q) ~\le~ \delta(\pi(h),h) + \delta(h,r) + \delta(r,q)
~\le~ 3 \cdot 5^j + 24 \cdot 5^{j-1} + 5 \cdot 5^{j-1} ~<~ 9 \cdot 5^j.$$ This means that $(q,j)$ is a 9-friend of $(\pi(h),j)$, and would become its leech
and thus dirty, a contradiction.
\\\emph{Case 2: The parent $(\pi(h),j)$ of $(h,j-1)$ is a leech.} By construction, the term of the surrogates of $(h,j-1)$ must be over at level $j-1$.
Lemma \ref{key} implies that $|F(h,j-1)| \ge 2k+2$. Also, all points of $F(h,j-1)$ remain clean at level $j-1$ by Claim \ref{rmclean}.
By construction, each point of $F(h,j-1)$ is a (clean) 10-friend of $(h,j-1)$.
Moreover, 
each point $s \in F(h,j-1)$ is an 8-friend of $(q,j)$; to see this, note that
$$\delta(s,q) ~\le~ \delta(s,h) + \delta(h,r) + \delta(r,q) ~\le~ 10 \cdot 5^{j-1} + 24 \cdot 5^{j-1} + 5 \cdot 5^{j-1} ~<~ 8 \cdot 5^j.$$
Observe also that $\delta(h,q) \le \delta(h,r) + \delta(r,q) \le 24 \cdot 5^{j-1} + 5 \cdot 5^{j-1} = 29 \cdot 5^{j-1}$. Hence  $(h,j-1)$ and $(q,j-1)$ are 29-friends, and so there is a cross edge between them in the basic spanner $H$ (assuming $\gamma \ge 29$).
By construction, all the at least $2k+2$ points of $F(h,j-1)$ will be added to the reserve set $R(q,j-1)$ of $(q,j-1)$.
Moreover, all these points will be subsequently added to $F(q,j)$, unless $|F(q,j)| = 3k+3$. In any case
$(q,j)$ will be large, and thus dirty, a contradiction. The lemma follows.
\QED
\end{proof}

Let $X_A \subseteq X$ be the set of net-points that correspond to nodes in $A$, i.e., $X_A = \{p(x) ~|~ x \in A\}$.
Consider the $MST(X_A)$ of $(X_A,\delta)$, and note that $\omega(MST(X_A))) \le 2 \cdot \omega(MST(X))$.
Let $rt$ be an arbitrary node in $A$. 
We root $MST(X_A)$ at $p(rt)$, and orient all the edges towards $p(rt)$.
In this way for each node $x \in A \setminus \{rt\}$, the corresponding net-point $p(x)$ has a single outgoing edge in $MST(X_A)$, denoted $e(x)$.
By Lemma \ref{wtlem}, $\omega(e(x)) \ge \frac{rad(x)}{5}$.
Also, for any pair $x,y \in A$ of distinct almost-dirty nodes, their corresponding net-points $p(x)$ and $p(y)$ are distinct,
which implies that
the corresponding outgoing edges $e(x)$ and $e(y)$  are distinct as well.
Hence $\omega(MST(X_A)) = \sum_{x \in A \setminus \{rt\}} \omega(e(x)) \ge \sum_{x \in A \setminus \{rt\}} \frac{rad(x)}{5}$.
Finally, note that $rad(rt) \le  \omega(MST(X))$. 
It follows that $\sum_{x \in A} rad(x) = O(\omega(MST(X)))$.

Denote by $\tilde \cH$ the spanner obtained from $\cH$ by replacing the bipartite cliques with bipartite matchings, whenever possible.
The above discussion shows that $\tilde \cH$ achieves the desired lightness bound $\eps^{-O(d)}  (k^2 + k \log n)$.
Also,  the degree of $\tilde \cH$ is no greater than the degree of $\cH$, and so it is in check too.
Finally, we can guarantee that  $\tilde \cH$ will be a $k$-FT $(1+\eps)$-spanner using a simple adjustment, namely,
connect the surrogate set $S(x)$ of each dirty leaf $x$ via an \emph{internal clique} (which includes an edge between any pair of points from $S(x)$). 
It is easy to see that this adjustment does not increase the degree and lightness by more than constants. Indeed, any point $p$ may be a
surrogate of at most $O(1)^{O(d)}$ leaves, and so its degree due to the internal cliques will increase by an additive factor of $O(1)^{O(d)} \cdot k$.
In particular, this means that the number of edges due to the internal cliques is bounded above by $O(1)^{O(d)} \cdot k n$. By Observation \ref{farsur}, any such edge
has weight at most $2 (34 \cdot 5^0) = O(1)$, and so the total weight due to the internal cliques is at most $O(1)^{O(d)} \cdot k n \le O(1)^{O(d)} \cdot k \cdot \omega(MST(X))$,
where the last inequality follows from the fact that the minimum inter-point distance of $X$ is equal to 1.
Hence, the lightness will be in check as well.


Next, we show that for any pair $p,q \in X \setminus F$ of functioning  points (where $|F| \le k$), there is a $(1+\eps)$-spanner path
in $\tilde \cH \setminus F$.   
Consider the $(1+\eps)$-spanner path $\Pi_{p,q}$ between $p$ and $q$ in the
basic spanner $H$ that is guaranteed by Lemma \ref{lemma:cross_edge}, obtained by climbing up from the leaf nodes $(p,0)$ and $(q,0)$
to some $j$-level ancestors $(p',j)$ and $(q',j)$, respectively, which are connected by a cross edge  $((p',j),(q',j))$.

Consider first the sub-path $\Pi_{p,p'}$ of $\Pi_{p,q}$ that climbs from $(p,0)$ to $(p',j)$. 
We argue that there is a short functioning path from $p$ to some point in $S(p',j) \setminus F$.
Note that $p$ is a surrogate of every clean node along this path.
The case when $(p',j)$ is clean is immediate.
Suppose that $(p',j)$ is dirty, and consider the highest clean ancestor $x^{(0)}$ of $(p,0)$ along $\Pi_{p,p'}$, and its dirty parent $x^{(1)} = \pi(x^{(0)})$.
Observation \ref{bas} implies that $p$ belongs to $S(x^{(0)})$. By Observation \ref{ancestor}, $|S(x^{(1)}| = k+1$.
There is a bipartite clique between $S(x^{(0)})$ and $S(x^{(1)})$ by the construction,
and so $p$ is connected to each of the $k+1$ points of $S(x^{(1)})$.
We have disregarded the case where $x^{(1)} = (p,0)$ is a dirty leaf. Recall that our adjustment added an internal clique
for each dirty leaf. Hence, in this case $p$ is connected to each of the other $k$ points of $S(x^{(1)})$.

By Observation \ref{ancestor}, all ancestors of $x^{(1)}$ are dirty.
Let $\tilde x^{(1)}$ be the highest (dirty) ancestor of $x^{(1)}$ along $\Pi_{p,p'}$ with the same surrogate set as $x^{(1)}$, i.e., 
$S(\tilde x^{(1)}) = S(x^{(1)})$. (It is possible that $\tilde x^{(1)} = x^{(1)}$.)
Let $x^{(2)} = \pi(\tilde x^{(1)})$ be the parent of $\tilde x^{(1)}$, 
and let $\tilde x^{(2)}$ be the highest (dirty) ancestor of $x^{(2)}$ along $\Pi_{p,p'}$ with the same surrogate set as $x^{(2)}$, i.e., 
$S(\tilde x^{(2)}) = S(x^{(2)})$. (It is possible that $\tilde x^{(2)} = x^{(2)}$.)
By definition, $S(\tilde x^{(2)}) = S(x^{(2)}) \ne S(\tilde x^{(1)}) = S(x^{(1)})$.
Moreover, Claim \ref{disj} implies that $S(x^{(1)}) \cap S(x^{(2)}) = \emptyset$. 
By construction,   $|S(x^{(1)})| = |S(x^{(2)})| = k+1$,
and there is a perfect bipartite matching (of size $k+1$) between $S(\tilde x^{(1)}) = S(x^{(1)})$ and $S(x^{(2)}) = S(\tilde x^{(2)})$.
Similarly, let $x^{(3)} = \pi(\tilde x^{(2)})$ be the parent of $\tilde x^{(2)}$,
and let $\tilde x^{(3)}$ be the highest (dirty) ancestor of $x^{(3)}$ along $\Pi_{p,p'}$ with the same surrogate set as $x^{(3)}$, i.e., 
$S(\tilde x^{(3)}) = S(x^{(3)})$.
By definition, $$S(\tilde x^{(3)}) ~=~ S(x^{(3)}) ~\ne~ S(\tilde x^{(2)}) ~=~ S(x^{(2)}) ~\ne~ S(\tilde x^{(1)}) ~=~ S(x^{(1)}) ~\ne~ S(x^{(3)}).$$
Moreover, Claim \ref{disj} implies that $$S(x^{(2)}) \cap S(x^{(3)}) ~=~ S(x^{(1)}) \cap S(x^{(3)}) ~=~  S(x^{(1)}) \cap S(x^{(2)}) ~=~ \emptyset.$$
By construction,  $|S(x^{(2)})| = |S(x^{(3)})| = k+1$,
and there is a perfect bipartite matching (of size $k+1$) between $S(\tilde x^{(2)}) = S(x^{(2)})$ and $S(x^{(3)}) = S(\tilde x^{(3)})$.
It follows that there are $k+1$ vertex-disjoint paths between $S(x^{(1)})$ and $S(\tilde x^{(3)})$.

We continue this way until we reach $\tilde x^{(l)} = (p',j)$. By applying this argument inductively, we get   $k+1$ vertex-disjoint
paths between $S(x^{(1)})$ and $S(\tilde x^{(l)}) = S(p',j)$. Since $|F| \le k$ (i.e., there are at most $k$ faulty points), at least
one of these paths must function. Recall also that $p$ is connected to all points of $S(x^{(1)})$, which means that there is a path
in $\tilde \cH \setminus F$ between $p$ and some functioning point $\tilde p$ of $S(p',j)$. Moreover, this functioning path is a $(1+O(\eps))$-spanner path by Lemma \ref{lemma:cross_edge2},
and we can reduce the stretch to $1+\eps$ by scaling $\gamma$ up by some constant.
This proves the above assertion, but we are not done yet. In exactly the same way as above, it can be shown that 
there is a functioning $(1+\eps)$-spanner path between $q$ and some functioning point $\tilde q$ of $S(q',j)$.
However, the cross edge $(\tilde p,\tilde q)$ need not be in our spanner $\tilde \cH$. (This is because we use bipartite matchings rather than bipartite cliques,
whenever possible.)
Consequently, it may be impossible to glue
these two functioning paths together via the edge $(\tilde p,\tilde q)$.
We can easily overcome this hurdle in the degenerate case where either $(p',j)$ or $(q',j)$ (or both) are clean.  
Indeed, suppose without loss of generality that $(p',j)$ is clean. In this case $p$ belongs to $S(p',j)$. Moreover, by construction, there is a bipartite clique between
$S(p',j)$ and $S(q',j)$, and so $p$ is connected via cross edges to all points of $S(q',j)$.
If $(q',j)$ is clean too, then there is a direct edge between $p$ and $q$. Otherwise,
we can use the above functioning $(1+\eps)$-spanner path between
$q$ and $\tilde q$, and then get from $\tilde q$ to $p$ via a direct edge.
We thus obtain a short functioning path between $p$ and $q$, as required. 
We will next show how to overcome the above hurdle in the more interesting case where both $(p',j)$ and $(q',j)$ are dirty.

The above argument 
handles the two sub-paths $\Pi_{p,p'}$ (between $p$ and $p'$) and $\Pi_{q,q'}$ (between $q$ and $q'$) of $\Pi_{p,q}$ separately.
More specifically, it first builds a short functioning path for each of these two sub-paths of $\Pi_{p,q}$, and then it tries to glue these sub-paths together into a single functioning path
using a possibly non-existing cross edge $(\tilde p,\tilde q)$ (i.e., $(\tilde p,\tilde q)$ may not belong to $\tilde \cH$).
Instead of breaking the path $\Pi_{p,q}$ into two sub-paths and handling them separately, we consider the entire path $\Pi_{p,q}$
and build for it a single functioning path in roughly the same way as we did for each of these two sub-paths.

As before, let $x^{(0)}$ be the highest clean ancestor of $(p,0)$ along $\Pi_{p,p'}$, and let $x^{(1)} = \pi(x^{(0)})$ be its dirty parent.
Similarly, let $y^{(0)}$ be the highest clean ancestor of $(q,0)$ along $\Pi_{q,q'}$, and let $y^{(1)} = \pi(y^{(0)})$ be its dirty parent.
Recall also that we defined $\tilde x^{(1)}$ as the highest (dirty) ancestor of $x^{(1)}$ along the sub-path $\Pi_{p,p'}$ with the same surrogate set as $x^{(1)}$. Now we define $\tilde x^{(1)}$ differently. More specifically, unlike the above argument, we would also like to consider nodes 
that belong to the other sub-path $\Pi_{q,q'}$ of $\Pi_{p,q}$. 
Let $\Pi_{x^{(1)},y^{(1)}} = (v_1 = x^{(1)},v_2 = \pi(x^{(1)}),\ldots,v_t = (p',j),v_{t+1} = (q',j),\ldots,v_h = y^{(1)})$ be the sub-path of $\Pi_{p,q}$ from $x^{(1)}$ to $y^{(1)}$, where $1 \le t < h \le 2j+2$.
Let $i$ be the maximum index in $[h]$, such that $S(x^{(1)}) = S(v_i)$; we define $\tilde x^{(1)}$ as $v_i$.
(Unlike before, now it is indeed possible that $\tilde x^{(1)}$ would belong to $\Pi_{q,q'}$.)

In the above argument, we defined $x^{(2)} = \pi(\tilde x^{(1)})$. This makes sense, as the objective was to get closer and closer to $(p,j')$.
Here the objective is to get closer and closer to $y^{(1)}$.
So we will define $x^{(2)} = \pi(\tilde x^{(1)})$ only in the case that $\tilde x^{(1)}$ belongs to $\Pi_{p,p'} \setminus (p',j)$. 
If $\tilde x^{(1)} = (p',j)$, then we define $x^{(2)} = (q',j)$.
Finally, in the case that $\tilde x^{(1)}$ belongs to $\Pi_{q,q'}$, 
we define $x^{(2)}$ to be the child of $\tilde x^{(1)}$ along $\Pi_{q,q'}$.

In the same way we define $\tilde x^{(2)}, x^{(3)}$, and so forth.
We continue this way until we reach $\tilde x^{(l)} = y^{(1)}$. In exactly the same way as before we get
$k+1$ vertex-disjoint paths between 
$S(x^{(1)})$ and $S(\tilde x^{(l)}) = S(y^{(1)})$. Since $|F| \le k$, at least
one of these paths must function. Observe also that $p$ and $q$ are connected to all points of $S(x^{(1)})$ and 
$S(y^{(1)})$, respectively. (Recall that our adjustment
added an internal clique for each dirty leaf. Hence, in the case where $x^{(1)} = (p,0)$ (respectively, $y^{(1)} = (q,0)$) is a dirty leaf, 
the point $p$ (resp., $q$) is connected to each of the other $k$ points of $S(x^{(1)})$ (resp., $S(y^{(1)})$).)
Consequently, we obtain a path
in $\tilde \cH \setminus F$ between $p$ and $q$. Moreover, this functioning path is a $(1+O(\eps))$-spanner path by Lemma \ref{lemma:cross_edge2},
and we can reduce the stretch to $1+\eps$ by scaling $\gamma$ up by some constant.

\ignore{
For any ancestor of $y'$ with the same surrogate set, we proceed smoothly. Let $y''$ be the first ancestor of $y'$
with a different surrogate set. We have $|S(y'')| = k+1$ and $S(y') \cap S(y'') = \emptyset$.
By construction, we have a perfect bipartite matching between $S(y')$ and $S(y'')$.
We continue this way until we reach $(p',j)$. Let $y,y',\ldots,y^{(l)}$ be the ancestors of $y$ with distinct surrogate sets.

Some of the edges of this matching may be incident on faulty points, and will thus get ruined.
However, since at most $k$ points are faulty and each point may serve as a new surrogate of at most two complete nodes, 
they may altogether ruin at most $2k$ edges of any matching, so that we can always proceed up the path with at least one functioning edge.
In other words, we will have a functioning
path that starts at $p$ and ends at some surrogate $\tilde p$ of $(p',j)$, as required. 
Similarly, we   have a functioning path that starts at $q$ and ends at some surrogate $\tilde q$ of $(q',j)$. 
The cross edge between $\tilde p$ and $\tilde q$ also belongs to $\cH \setminus F$. 
 In this way we get a functioning path  between $p$ and $q$ in  $\cH$.
By Lemma 2.3, this is a $(1+O(\eps))$-spanner path, but we can reduce the stretch to $1+\eps$ by scaling $\gamma$ up by some constant.
}


\section{Diameter}
In this section we show how to obtain diameter $O(\log n)$, without increasing any of the other parameters.

The basic idea was used before in \cite{CLN12,Sol12,CLNS13}, and involves shortcutting the \emph{light subtrees} of the net-tree,
i.e., those with distance scales less than $\frac{\Delta}{n}$, where $\Delta = \max_{u, v \in X}\delta(u, v)$ is the diameter of the metric $X$.  
This shortcutting is carried out using the 1-spanners of \cite{SE10} (cf.\ Theorem 3 therein \cite{SE10}).
\begin{theorem} [\cite{SE10}]  \label{SE}
Let $T$ be an arbitrary $n$-node tree.
One can build in 
$O(n \log n)$ time, 
a 1-spanner  for (the tree metric induced by) $T$ with $O(n)$ edges,
degree at most $\deg(T) + O(1)$ and diameter $O(\log n)$.
\end{theorem}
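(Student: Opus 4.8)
The plan is to reduce the tree case to the path case and then prove the path case via a balanced hierarchical structure; the path case is the heart of the matter, and the tree case follows by composing path spanners along a recursive path decomposition of $T$.

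\textbf{Path case.} First I would handle the special case where $T$ is a path $P=(v_1,\dots,v_n)$. Writing $d_i$ for the distance from $v_1$ to $v_i$ along $P$, the induced metric is $\delta(v_i,v_j)=|d_i-d_j|$, so a shortcut edge $(v_i,v_j)$ has weight $|d_i-d_j|$, and a spanner path from $v_a$ to $v_b$ realizes the distance exactly iff it visits vertices whose $d$-values are monotone between $d_a$ and $d_b$. I would build a balanced binary tree $B$ over the leaves $v_1,\dots,v_n$ in path order, and for each internal node $u$ of $B$ (covering a contiguous index interval) add $O(1)$ shortcut edges linking a designated \emph{representative} vertex of $u$'s interval to the representatives of $u$'s two children, keeping the original path edges. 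Three things must be checked: (i) only $O(n)$ edges are added, which is immediate from $|B|=O(n)$ and $O(1)$ edges per node; (ii) the diameter is $O(\log n)$, because for any $i<j$ the shortcut edges along the two root-to-leaf branches of $B$ that delimit $[i,j]$ concatenate into a monotone path with $O(\mathrm{height}(B))=O(\log n)$ hops; and (iii) the degree overhead is $O(1)$, which is the delicate point, since a naive choice such as ``left endpoint of the interval'' makes $v_1$ a representative of $\Theta(\log n)$ nodes. Choosing representatives via the balance point of $B$ and a charging argument bounding how many nodes a given vertex can represent gives the $O(1)$ bound. Building $B$ and these edges takes $O(n\log n)$ time (in fact $O(n)$ with care).

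\textbf{Tree case.} For a general tree $T$, I would use a recursive path decomposition: pick a ``centroid path'' $Q$ in $T$ such that every subtree hanging off $Q$ has at most $|T|/2$ vertices; install the path-$1$-spanner of the previous paragraph on $Q$; and recurse on each hanging subtree, keeping the tree edge that joins its root to its attachment vertex on $Q$. Since the subtree sizes halve, the recursion has depth $O(\log n)$, the total number of added edges is $\sum O(|Q_i|)=O(n)$, and the degree overhead accumulates to $\deg(T)+O(1)$ because each vertex lies on exactly one decomposition path and the only new edges incident to it are the $O(1)$ shortcut edges of that path's spanner. For a query pair $u,v$ with tree path $\pi(u,v)$, I would decompose $\pi(u,v)$ into the $O(\log n)$ maximal sub-segments lying inside the successive decomposition paths; each sub-segment of length $\ell$ is spanned monotonically in $O(\log\ell)$ hops by the corresponding path spanner, and consecutive sub-segments chain together at the attachment vertices.

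\textbf{Main obstacle.} The naive analysis of the tree case only yields diameter $O(\log^2 n)$: there are $O(\log n)$ decomposition levels, each contributing up to $O(\log n)$ hops. Bringing this down to $O(\log n)$ is the real content of the theorem and the step I expect to be hardest. The fix is to refine the decomposition so that the lengths of the path-segments encountered along any tree path decay geometrically down the recursion --- e.g.\ by a heavy-path/long-path variant of the centroid-path choice, which forces a subtree attached ``deep'' on $Q$ to be itself shallow --- so that the per-level hop counts form a geometric series summing to $O(\log n)$; alternatively one can build a single global balanced hierarchy directly on $T$ rather than composing per-path hierarchies, at the price of a more intricate representative-selection and monotonicity argument. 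Either way the bookkeeping needed to meet the $O(n)$-edges, $O(1)$-degree, and $O(\log n)$-diameter guarantees simultaneously is where the work lies; the $O(n\log n)$ running time then follows because every ingredient (computing the decomposition, the balanced binary trees, and the shortcut edges) is a standard $O(n\log n)$ (or linear) computation.
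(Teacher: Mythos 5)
First, a point of reference: the paper does not prove this statement. Theorem \ref{SE} is imported verbatim from \cite{SE10} and used as a black box (to shortcut the light subtrees of the pruned net-tree), so there is no in-paper argument to compare yours against; what follows evaluates your attempt as a stand-alone reproof of the cited result.

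Your setup is sensible and you correctly locate the crux, but the crux is not resolved, so the proposal has a genuine gap. Composing path $1$-spanners along a centroid-path decomposition partitions the tree path $\pi(u,v)$ into up to $\Theta(\log n)$ maximal segments, each spanned in $O(\log \ell_i)$ hops, giving $O(\log^2 n)$ diameter --- exactly as you say. But your main proposed repair, forcing the segment lengths to decay geometrically down the recursion, is quantitatively insufficient: even with $\ell_i \le n/2^i$ one gets $\sum_{i=1}^{\log n}\log \ell_i \ge \sum_{i=1}^{\log n}(\log n - i) = \Theta(\log^2 n)$. To make $\sum_i \log \ell_i = O(\log n)$ you would need roughly doubly-exponential decay ($\ell_i \approx n^{1/2^i}$) or a per-segment hop count that telescopes, and no heavy-path or long-path variant of the centroid-path choice delivers that by itself. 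The alternative you mention only in passing --- a single balanced hierarchy built directly on $T$, with representatives chosen so that every shortcut detour stays on the relevant tree path --- is the direction in which \cite{SE10} actually works, but it is precisely the ``intricate representative-selection and monotonicity argument'' you defer; it is also where the $\deg(T)+O(1)$ bound must be fought for, and even in your path case that bound rests on an unspecified charging scheme rather than a proof. So the hard half of the theorem --- achieving $O(\log n)$ diameter, $O(n)$ edges, and additive-$O(1)$ degree simultaneously --- remains open in your write-up, and the one concrete fix you offer for the diameter does not work as stated.
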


Next, we would like to add shortcut edges to the spanner $\tilde \cH$ of Section 4.
The na\"{\i}ve approach would be to build the shortcut spanner of Theorem \ref{SE} for each of the light subtrees of the net-tree $T$, 
and then replace each shortcut edge by a bipartite matching if possible (i.e., if both endpoints of that edge are dirty),
and by a bipartite clique otherwise. This would give rise to a $k$-FT $(1+\eps)$-spanner with logarithmic diameter.
However, the degree and lightness would grow by too much, and would exceed the desired thresholds of $\eps^{-O(d)} \cdot k$ and $\eps^{-O(d)} (k^2 + k \log  n)$,
respectively.

In order to control the degree and lightness due to shortcut edges, we first \emph{prune} some nodes from $T$,
and only later proceed to shortcutting the light subtrees of the resulting \emph{pruned tree} $T^*$. 
Then we proceed as before, replacing each shortcut edge by a bipartite matching (if possible) or a bipartite clique.

Our ultimate spanner $\cH^*$ is obtained by adding the  shortcut edges to the spanner $\tilde \cH$ of Section 4.

We turn to describing how the pruned tree $T^*$ is obtained.
\begin{itemize}
\item
First, we remove all clean nodes from the net-tree. 
Indeed, shortcutting clean nodes is redundant, as the first dirty node on any $(1+\eps)$-spanner path is connected 
to its clean child on that path via a bipartite clique (i.e., there is a bipartite clique between the corresponding surrogate sets). 
Since the surrogate set of any clean node contains its entire descendant set (see Observation \ref{bas}),
one can go from any leaf point to all the surrogates 
of the relevant dirty node via a direct edge. In the degenerate case where the leaf itself is dirty, we have an internal clique
over the surrogate set of that leaf (recall that the spanner $\tilde \cH$ of Section 4 contains an internal clique over the surrogate set of each dirty leaf), and so one can still go from any leaf point to all the surrogates of that dirty leaf via a direct edge.

\item
Second, we go over the resulting tree (which contains only dirty nodes) top-down, looking for nodes $x$ whose surrogate set is the same
as that of its parent $\pi(x)$ and as that of every one of its dirty siblings.
Such nodes are called \emph{redundant}, and they are removed from the tree.
When a node $x$ and its siblings are removed from the tree, we connect $\pi(x)$ with its grandchildren. 
These grandchildren become the new children of $\pi(x)$, and they will be siblings in $T^*$ (hereafter, \emph{$T^*$-siblings}). 
However, they too may be redundant (if their surrogate set is the same as $\pi(x)$), and thus removed from the tree.
This process may be repeated over and over, 
hence the children of $\pi(x)$ in the resulting pruned tree $T^*$ may be of significantly smaller level than that of $\pi(x)$.
That is, the parent $x'$ of an $i$-level node $x$
in the pruned tree $T^*$ may be of level $i'$, where $i \ll i'$.
Notice that $S(x') \ne S(x)$.
Moreover, for every $j$-level node $z$ on the path between $x$ and $x'$ in the original tree $T$, where $j \in [i+1,i'-1]$, we have $S(z) = S(x')$. (Indeed, otherwise
the node $z$ would not be redundant;  since $x$ is dirty, $z$ will be dirty too by Observation \ref{ancestor}, and so it would have to belong to $T^*$, which is a contradiction.) In other words, the edge $(x,x')$ in the pruned tree $T^*$ corresponds to a path $(x = v_i,v_{i+1},v_{i+2},\ldots,x' = v_{i'})$ in the original tree
$T$, where all vertices $v_{i+1},\ldots,v_{i'-1}$ are redundant 
and $S(v_{i+1}) = S(v_{i+2}) = \ldots = S(x') \ne S(x)$.
In particular, this means that for any pair $x,y$ of (dirty non-redundant) nodes in $T^*$, $x$ is an ancestor of $y$ in $T^*$ if and only if $x$ is an ancestor of $y$ in the original net-tree $T$.

Note also that the degree of the resulting pruned tree $T^*$ may be larger than the degree of the original tree $T$.
However, by Fact \ref{prop:small_net}, there can be only $O(1)^{O(d)}$ nodes with the same surrogate set at each level.
This means that the degree of $T^*$ will exceed the degree of $T$ by at most a factor of $O(1)^{O(d)}$.
Since the degree of the original tree $T$ is $O(1)^{O(d)}$, it follows that the degree of the pruned tree $T^*$ is at most
$O(1)^{O(d)} \cdot O(1)^{O(d)} = O(1)^{O(d)}$, i.e., the degree does not grow by too much.
\end{itemize}

For a node $x$ in $T$, denote by $\pi(x)$ (respectively, $\pi^*(x)$) its parent in the original tree $T$ (resp., pruned tree $T^*$).
If $x$ does not belong to $T^*$, then $\pi^*(x)$ denotes the first ancestor of $x$ in $T$ which belongs to $T^*$.

The pruned tree $T^*$ has some important advantages over the original net-tree $T$.

First, all nodes of $T^*$ are dirty, and so any shortcut edge for $T^*$ will translate into a bipartite matching
rather than a bipartite clique. This property, however, is not required for achieving the desired bounds.
That is, even if we are being wasteful and translate the shortcut edges into bipartite cliques, the desired bounds on the degree and lightness
will still be in check; we shall address this issue in the sequel.

We use the fact that there are no clean and redundant nodes in $T^*$ to prove the following lemma, which is critical for achieving
the desired bounds on both the degree and the lightness.
\begin{lemma} \label{anotherkey}
Every point $p \in X$ may serve as a surrogate of at most $\eps^{-O(d)}$ nodes of $T^*$. 
\end{lemma}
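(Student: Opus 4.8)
The plan is to reduce the claim to a counting argument over the levels of $T^{*}$, and then bound separately the number of relevant levels and the number of relevant nodes at each level.

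First I would note that since every node of $T^{*}$ is dirty, if $p\in S(y)\cap S(y')$ for $y,y'\in T^{*}$ then $S(y)\cap S(y')\neq\emptyset$, so $S(y)=S(y')$ by Claim~\ref{disj}. Hence all nodes of $T^{*}$ having $p$ as a surrogate share one surrogate set $S$. Moreover $p$ becomes dirty by exactly one appointing (large, dirty, non-leech) node $w$, at the first level $i_{0}$ at which $p$ is dirty (as in the degree analysis of Section~3.2.3), and $p\in S(w)$, so $S=S(w)$. It thus suffices to bound the number of nodes of $T^{*}$ with surrogate set $S$. For the per-level bound: if $y$ is an $i$-level node with $p\in S(y)$ then $p$ is a $34$-friend of $y$ by Observation~\ref{farsur}, i.e.\ $\delta(p(y),p)\le 34\cdot5^{i}$; distinct $i$-level nodes have distinct net-points, forming a $5^{i}$-packing, so by Fact~\ref{prop:small_net} at most $34^{2\dim}=O(1)^{O(d)}$ of them lie in the ball of radius $34\cdot5^{i}$ around $p$. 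Thus each level carries at most $O(1)^{O(d)}$ nodes of $T^{*}$ with surrogate set $S$.

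It remains to show that only $O(\tau)=O(\log\frac{1}{\eps})$ levels contain such a node; with the per-level bound and $\tau\le\eps^{-O(d)}$ this yields the lemma. Levels below $i_{0}$ are irrelevant: before $p$ is dirty it is a surrogate only of clean nodes (Observation~\ref{bas} and the following remark), and clean nodes are pruned out of $T^{*}$. Levels above $i'$, the last level of $term(w)$, are irrelevant by Claim~\ref{disj}. Finally I claim no level strictly inside the first phase of $term(w)$ is relevant. Let $y\in T^{*}$ have surrogate set $S$ at such a level $j$; by Claims~\ref{disj} and~\ref{fofx}, $y$ is the $j$-level copy of $w$ or a leech of such a copy. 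Since $y$ is not redundant, either $S(\pi^{*}(y))\neq S$ or $y$ has a dirty $T^{*}$-sibling with surrogate set different from $S$; in the first case the description of how a $T^{*}$-edge corresponds to a path of redundant nodes in $T$ shows even the $T$-parent of $y$ carries a different surrogate set. In all cases an incarnation-style computation exactly as in Section~3.2.3 shows $y$ is a $\gamma$-friend of some \emph{dirty} node $z$ with $S(z)\cap S=\emptyset$; hence (taking $\gamma\ge150$) $H$ contains the non-redundant $j$-level cross edge $(y,z)$, and since $z$ is dirty this single edge increases the degree of every point of $S$ since level $i_{0}$ by $|S(z)|=k+1$ --- which is exactly the threshold ending the first phase of $term(w)$, a contradiction. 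Therefore all relevant levels lie in the window from the last level of the first phase to $i'$, which spans only the $\tau+2$ levels of the second phase plus $O(1)$ boundary levels, giving at most $(\tau+O(1))\cdot O(1)^{O(d)}=\eps^{-O(d)}$ nodes.

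The main obstacle is the argument in the previous paragraph: one must verify that a surviving, non-redundant copy or leech of $w$ genuinely forces a non-redundant cross edge to a \emph{dirty} neighbour. This entails handling leeches with care (a leech's parent may fail to inherit $S$ because of a dirty non-leech sibling with an unfinished term) and controlling metric distances even though iterated pruning may place $T^{*}$-siblings many levels apart and far away in the metric; the incarnation argument of Section~3.2.3 is precisely what keeps those distances $O(5^{j})$ and hence within cross-edge range for a suitably large constant $\gamma$.
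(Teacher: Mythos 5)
Your proof is correct and follows essentially the same route as the paper's: a per-level packing bound of $O(1)^{O(d)}$, combined with the observation that a non-redundant $T^*$-node carrying $S$ forces a non-redundant cross edge to a dirty node with $k+1$ pairwise-disjoint surrogates, which caps the first phase of $term(w)$ and confines the relevant levels to a window of $O(\tau)$ levels. The paper phrases the localization as ``$term(x')$ is over within $\tau+3$ levels of the \emph{first} relevant level'' rather than ``no level strictly inside the first phase is relevant,'' but the underlying case analysis (parent failing to reuse $S$ versus a dirty $T^*$-sibling with a different surrogate set, with leeches handled through their hosts and the $24$/$68$/$370$-friend distance computations) is the same.
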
 
\begin{proof}
Consider an arbitrary point $p \in X$, 
and let $i$ be the first level in which $p$ is a surrogate of some (dirty non-redundant) $i$-level node $x$ in $T^*$.
If $x$ is the root of $T^*$, then we are done. We henceforth assume that $x$ is not the root of $T^*$.
Since $x$ is non-redundant,
its surrogate set must be different than either the surrogate set $S(\pi^*(x))$ of its parent $\pi^*(x)$ in $T^*$ or the surrogate set $S(y)$ of some dirty $T^*$-sibling $y$ of $x$. 
Define $x'$ as $x$ if it is a non-leech, or as the host of $x$ otherwise. Note that $x'$ is dirty and $S(x') = S(x)$.
It is possible that $x'$ is redundant, and thus does not belong to the pruned tree $T^*$.

We argue that $term(x')$ must be over until level $i+\tau + 3$. 
Note that the parent $\pi(x')$ of $x'$ in $T$ may be different than the parent $\pi^*(x')$ of $x'$ in $T^*$;
in this case $\pi(x')$ is a descendant of $\pi^*(x')$ in $T$.
However, in either case  $S(\pi^*(x')) = S(\pi(x'))$ holds by the construction. Similarly, we have $S(\pi^*(x)) = S(\pi(x))$.

In the case that $S(x') \ne S(\pi^*(x'))$, we have $S(x') \ne S(\pi(x'))$,
which means that $term(x')$ is over at level $i$. 
We henceforth assume that $S(x') = S(\pi^*(x'))$, and so $S(x) = S(x') = S(\pi(x'))$.
\\Suppose first that $S(x) \ne S(\pi(x))$, which means that  $S(\pi(x)) \ne S(\pi(x'))$, and so $\pi(x) \ne \pi(x'),x \ne x'$.
In this case $x$ is a leech of $x'$.
By Observation \ref{trait}, $\pi(x)$ and $\pi(x')$ are 24-friends,
and so they are connected by a cross edge.
Since $x$ is dirty, $\pi(x)$ is also dirty by Observation \ref{ancestor}, and we have $|S(\pi(x))| = k+1$.
The fact that $S(\pi(x)) \ne S(\pi(x'))$
implies that the degree of the surrogates of $S(\pi(x'))$ due to non-redundant cross edges since their appointment reaches $k+1$
at level $i+1$ or before. It follows that the second phase of $term(x')$ will start at level $i+2$ or before. The second phase lasts precisely $\tau+2$ levels,
and so $term(x')$ will be over until level $i + \tau + 3$.
\\Next, assume that $S(x) = S(\pi(x))$. By construction, $S(\pi(x)) = S(\pi^*(x))$, and so $S(x) = S(\pi^*(x))$.
In this case it must hold that $S(x) \ne S(y)$, for some dirty $T^*$-sibling $y$ of $x$. 
Since $x$ and $y$ are $T^*$-siblings, $S(\pi(x)) = S(\pi(y))$. (It is possible that $\pi(x) = \pi(y)$.)
Observation \ref{farsur} implies that $\pi(x)$ and $\pi(y)$ are 68-friends, and so $x$ and $y$ are $370$-friends.
Since $x$ is a leech of $x'$, it follows that $x'$ and $y$ are $394$-friends,
and so there is a cross edge between $x'$ and $y$ (assuming $\gamma \ge 394$).
Since $y$ is dirty, we have $|S(y)| = k+1$. Note also that $S(x') = S(x) \ne S(y)$. 
Hence the degree of the surrogates of $S(x')$ due to non-redundant cross edges since their appointment reaches $k+1$ at level $i$ or before.
Consequently, the second phase of $term(x')$ will start at level $i+1$ or before, and so
$term(x')$ will be over until level $i+ \tau + 2$.

We have shown  that $term(x')$  must be over until level $i+\tau+3$. Moreover, by Claim \ref{disj} and the construction, $p$ cannot be a surrogate of any $j$-level node,
for all $j \in [i+\tau+4,\ell]$.
In other words, $p$ may only serve as a surrogate of $j$-level nodes, where $j \in [i,i+\tau+3]$.
Since $p$ may serve as a surrogate of at most $O(1)^{O(d)}$ nodes at each level of the original tree $T$, it will also serve
as a surrogate of at most $O(1)^{O(d)}$ nodes at each level of the pruned tree $T^*$.
We conclude that $p$ may serve as a surrogate of at most $(\tau+4) \cdot O(1)^{O(d)} \le \eps^{-O(d)}$ nodes of the pruned tree $T^*$,
as required.
\QED
\end{proof}

Denote by $T^*_1,\ldots,T^*_m$ the light subtrees of the pruned tree $T^*$. Notice that the node sets of all the light subtrees are pairwise disjoint.
Consider now an arbitrary light subtree $T^*_i$ of   $T^*$. Its degree $\deg(T^*_i)$ is at most $\deg(T^*) = O(1)^{O(d)}$.
We use the 1-spanners of Theorem \ref{SE} to shortcut $T^*_i$, and then translate each non-redundant shortcut edge $(x,y)$ into a perfect bipartite matching between 
the corresponding surrogate sets $S(x)$ and $S(y)$. Notice that such a matching between $S(x)$ and $S(y)$ increases the degree of each point in $S(x) \cup S(y)$ by only one unit.
Any redundant shortcut edge $(x,y)$ (with $S(x) = S(y)$) is simply disregarded.
By Theorem \ref{SE}, the shortcut edges increase the degree of each tree node of $T^*_i$ by at most $\deg(T^*_i) + O(1) = O(1)^{O(d)}$ units.
Also, Lemma \ref{anotherkey} shows that every point may serve as a surrogate of at most $\eps^{-O(d)}$ nodes of $T^*$.
It follows that the degree of any point (due to the bipartite matchings that correspond to the non-redundant shortcut edges) is at most 
$\eps^{-O(d)}$, and so the degree of the spanner $\cH^*$ will be in check.
In fact, even if we are being wasteful and translate the non-redundant shortcut edges into bipartite cliques, 
the degree of  $\cH^*$  will still be bounded above by
the desired threshold of $\eps^{-O(d)} \cdot k$.

By Lemma \ref{anotherkey}, the number of nodes in the pruned tree $T^*$ is 
bounded above by $\eps^{-O(d)} \cdot n$.
As mentioned, the node sets of all the light subtrees of $T^*$ are pairwise disjoint.
Hence, by Theorem \ref{SE}, the total number of all shortcut edges is bounded above by $O(\eps^{-O(d)} \cdot n) = \eps^{-O(d)} \cdot n$.
Since the distance scale of each light subtree is at most $\frac{\Delta}{n}$ (and as the distance scales decrease geometrically with the level),  the weight of any shortcut edge is at most $O(\frac{\Delta}{n})$.
Moreover, each of the $k+1$ edges of the bipartite matching that replaces a shortcut edge is also bounded by $O(\frac{\Delta}{n})$.
Note also that $\Delta \le \omega(MST(X))$.
It follows that the total weight of the bipartite matchings that correspond to all the non-redundant shortcut edges is at most 
$\eps^{-O(d)} \cdot n \cdot O(\frac{\Delta}{n}) (k+1) \le \eps^{-O(d)} \cdot k \cdot \omega(MST(x))$, and so the lightness of the spanner $\cH^*$ will
be in check.
In fact, even if we are being wasteful and translate the non-redundant shortcut edges into bipartite cliques, 
 the lightness of   $\cH^*$ will still be bounded above by the desired threshold of $\eps^{-O(d)}  (k^2 + k \log n)$.

Finally, we show that $\cH^*$ is a $k$-FT $(1+\eps)$-spanner with diameter $O(\log n)$.
Consider an arbitrary pair $p,q \in X \setminus F$ of functioning points (where $|F| \le k$), and let $\Pi_{p,q}$
be the $(1+\eps)$-spanner path between $p$ and $q$ in the spanner $H$ that is guaranteed by Lemma \ref{lemma:cross_edge}. 
Recall that $\Pi_{p,q}$ is obtained as the union of the two sub-paths $\Pi(p,p')$ and $\Pi(q,q')$ between $(p,0)$ and $(p',j)$
and between $(q,0)$ and $(q',j)$, respectively, which are glued together
via the $j$-level cross edge between $(p',j)$ and $(q',j)$.
Note that some of the nodes along $\Pi_{p,q}$ may not belong to $T^*$. 
Let $\Pi^*_{p,p'}$ be the path obtained from $\Pi_{p,p'}$
by (i) replacing each node $x$ along $\Pi_{p,p'}$ which does not belong to $T^*$ with $\pi^*(x)$ (i.e., with the first ancestor of $x$ in $T$ that belongs to $T^*$),
and (ii) removing duplicates, i.e., leaving only the last occurrence of each node in that path.
Let $x^*$ be the last node in $\Pi^*_{p,p'}$, and
observe that $\Pi^*_{p,p'}$ is almost a sub-path of $\Pi_{p,p'}$. More specifically, if $(p',j)$ belongs to $T^*$,
then $\Pi^*_{p,p'}$ is a sub-path of $\Pi_{p,p'}$, with $x^* = (p',j)$.
Otherwise the last node $x^*$ along $\Pi^*_{p,p'}$ is $\pi^*(p',j)$, and the path $\Pi^*_{p,p'} \setminus \pi^*(p',j)$ is 
a sub-path of $\Pi_{p,p'}$. 
We define $\Pi^*_{q,q'}$ in the same way, and denote the last node in $\Pi^*_{q,q'}$ by $y^*$.
Let $\Pi^*_{p,q} = \Pi^*_{p,p'} \circ (x^*,y^*) \circ \Pi^*_{q,q'}$ be the union of the two paths $\Pi^*_{p,p'}$ and $\Pi^*_{q,q'}$, which are glued together via the edge $(x^*,y^*)$. Observe that $x^*$ and $y^*$ may be different than $(p',j)$ and $(q',j)$, respectively;
in particular, the edge $(x^*,y^*)$ may not belong to the spanner $H$. 
Nevertheless, we must have  $S(x^*)= S(p',j)$ and $S(y^*) = S(q',j)$ by the construction.
Consequently, the possible differences between $x^*$ and $(p',j)$ and between $y^*$ and $(q',j)$
are insignificant, as $x^*$ and $y^*$ can play the role of $(p',j)$ and $(q',j)$, respectively.
That is, the bipartite clique or matching that should replace the possibly non-existing edge $(x^*,y^*)$ 
is incarnated in the bipartite clique or matching that replaces the $j$-level cross edge between $(p',j)$ and $(q',j)$.

Observe that the path $\Pi^*_{p,q}$ may contain many edges. This is where the shortcut edges come in handy.
That is, we can use the shortcut edges to reduce the number of edges in the path $\Pi^*_{p,q}$ to $O(\log n)$, without increasing the weight of the path.
The proof for this assertion is simple, and follows similar lines as those in the works of \cite{CLN12,Sol12,CLNS13}.
(The main idea behind this proof is that there are only $O(\log n)$ levels with distance scales at least $\frac{\Delta}{n}$.
In other words, by removing the light subtrees $T^*_1,\ldots,T^*_m$ from the pruned tree $T^*$, we obtain
a tree $\bar T$ with only $O(\log n)$ levels. To maneuver over the (possibly many) remaining levels of $\bar T$ we use the shortcut edges.)
We will henceforth assume that $\Pi^*_{p,q}$ contains only $O(\log n)$ edges.

In Section 4 we used a sub-path $\Pi_{x^{(1)},y^{(1)}}$ of $\Pi_{p,q}$ to show that there are $k+1$ vertex-disjoint paths between
$S(x^{(1)})$ and $S(y^{(1)})$, where $x^{(1)}$ (respectively, $y^{(1)}$) stands for the first dirty node along $\Pi(p,p')$ (resp., $\Pi(q,q')$).
By using the path $\Pi^*_{p,q}$ instead of $\Pi_{x^{(1)},y^{(1)}}$,  
we can show in exactly the same way that there are $k+1$ vertex-disjoint paths between the surrogate sets $S(x_1)$ and $S(y_1)$ of the
first and last nodes $x_1$ and $y_1$ along $\Pi^*_{p,q}$, respectively. Moreover, since $\Pi^*_{p,q}$ contains only $O(\log n)$ edges, 
all these paths have at most $O(\log n)$ edges.
The fact that $|F| \le k$ implies that at least one of these paths must function.
Finally, observe that $p$ and $q$ are connected to all points of $S(x_1)$ and $S(y_1)$, respectively.
It follows that there is a $(1+\eps)$-spanner path in $\cH^* \setminus F$ between $p$ and $q$ that consists of only $O(\log n)$ edges.


\section{Running Time}
Our construction uses the net-tree spanner   of \cite{Rod07,GR082}
and the 1-spanners of \cite{SE10} as black-boxes. 
The construction of \cite{Rod07,GR082} (respectively, \cite{SE10})
can be implemented within time $\eps^{-O(d)}\cdot n \log n$ (resp., $O(n \log n))$.

A central ingredient in the net-tree spanner of \cite{Rod07,GR082} is the underlying net-tree,
which is based on a sophisticated hierarchical partition from \cite{CG06}.
To build the net-tree and the underlying spanner efficiently, it is important to be able to concentrate on only $\eps^{-O(d)} \cdot n$ \emph{useful} tree nodes.
In particular, a   node is called \emph{lonely} if it has exactly one child in $T$ (which corresponds to the same net-point as the parent);
otherwise it is \emph{non-lonely}.  
Following \cite{Rod07,GR082,CLNS13} and other works in this context, a long chain of lonely nodes
will be represented implicitly for efficiency reasons; implementation details can be found in \cite{GR082}. 

Our algorithm from Section 3.1 (see in particular Section 3.1.4) traverses the net-tree bottom-up.
The sets $S(\cdot),D(\cdot),F(\cdot),R(\cdot)$ are computed for the $i$-level (useful) nodes, only after we have computed these sets for 
the nodes at lower levels $j \in [0,i-1]$. The computation of these sets is carried out via Procedure $ComputeSets_{(i)}$,
which makes heavy use of the $i$-level cross edges that we are given 
from the net-tree spanner of \cite{Rod07,GR082}. Notice that each tree node $x$ is incident on only $\eps^{-O(d)}$ cross edges
and only $O(1)^{O(d)}$ tree edges. To compute the sets $S(x),D(x),F(x),R(x)$ for an $i$-level node $x$, we need to gather information from 
$x$'s neighbors $y$ either due to   cross edges or due to tree edges (more specifically, children),
where most of the information about such neighbors $y$ is held in the corresponding sets $S(y),D(y),F(y),R(y)$ of $y$.
With this information at hand, it is rather simple to compute the sets $S(x),D(x),F(x),R(x)$.
In particular, since a potential host for a node is a 24-friend of that node, there is a cross edge between a node and all
its potential hosts; hence finding a potential host can be carried out efficiently.

The descendant sets $D(\cdot)$ of all tree nodes can be computed in a straightforward way within $\eps^{-O(d)} \cdot n$ time.
It is more difficult to compute the other sets $S(\cdot),F(\cdot),R(\cdot)$ efficiently.
In particular, note that Procedure $ComputeSets_{(i)}$ upper bounds the sizes of the surrogate sets $S(\cdot)$ and the friend sets $F(\cdot)$ by $2k+1$ and $3k+3$, respectively.
Upper bounding the size of $S(\cdot)$ and $F(\cdot)$ by $\eps^{-O(d)} \cdot k$ is critical in order to perform efficiently.
Similarly, we also need to bound the size of the reserve sets $R(\cdot)$ by $\eps^{-O(d)} \cdot k$.
Consider an arbitrary node $x$.
We argue that it suffices to keep track of the clean points that were added to the reserve sets of either $x$ or its descendants in the last $\tau$ levels.
Indeed, a point $p$ that belongs to the reserve set $R(y)$ of some $i$-level node $y$ will be a 10-friend of the $(i+\tau)$-level ancestor $y'$
of $y$ by  Claim \ref{friendship}. 
Assuming $p$ belongs to $F(y')$, there is no reason to store it in $R(y')$ as well.
If some node $\tilde y$ on the path in $T$ between $y$ and $y'$ satisfies $|F(\tilde y)| = 3k+3$, 
then $p$ may not belong to $F(y')$. (This is the only case in which $p$ may not belong to $F(y')$.) However, in this case we should have enough points in $F(y')$. 
To be on the safe side, besides the points that were added to the reserve sets of either $x$ or its descendants in the last $\tau$ levels,
we may store in $R(x)$ additional $O(k)$ points that were added to the reserve sets of $x$'s descendants before the last $\tau$ levels;
as mentioned, such points are 10-friends of $x$. 
In this way we always keep track of enough clean 10-friends of a node.
By construction, in each level at most $\eps^{-O(d)} \cdot k$ new points are added to the reserve set of any node.
Since we only keep track of the last $\tau$ levels (and at most $O(k)$ additional points),
it follows that the size of the reserve sets $R(\cdot)$ can be upper bounded by $\tau \cdot \eps^{-O(d)} \cdot k + O(k) \le \eps^{-O(d)} \cdot k$.
Given that the size of all sets $S(\cdot),F(\cdot),R(\cdot)$ is bounded above by $\eps^{-O(d)} \cdot k$,
it is easy to see that
 the overall time needed to compute the sets $S(x),F(x),R(x)$ for all useful nodes $x$ in the tree will be $(\eps^{-O(d)} \cdot k) \cdot (\eps^{-O(d)} \cdot n)
\le \eps^{-O(d)} \cdot k n$.
Having computed the surrogate sets of all nodes, we replace each edge of the net-tree spanner $H$ of \cite{Rod07,GR082} by a bipartite clique between the corresponding
surrogate sets.
The time needed to replace all edges of the spanner $H$ by bipartite cliques is asymptotically the same as the number of edges in the resulting FT spanner $\cH$, which is $\eps^{-O(d)} \cdot k n$.
It follows that the overall time required to build the spanner $\cH$ from Section 3 
is $\eps^{-O(d)}\cdot (n \log n + kn)$.

The transformation described in Section 4, which translates each bipartite clique into a bipartite matching whenever possible (i.e., when
both endpoints of the corresponding edge in the spanner $H$ are dirty),
can be easily implemented within time $\eps^{-O(d)}\cdot kn$. Building the internal cliques for all dirty leaves requires
another amount of $\eps^{-O(d)}\cdot kn$ time. Consequently, the spanner $\tilde \cH$ from Section 4 can be implemented within  time $\eps^{-O(d)}\cdot (n \log n + kn)$.

Finally, the construction of the pruned tree $T^*$ in Section 5 can be carried out within time $\eps^{-O(d)}\cdot n$.
By Theorem \ref{SE}, shortcutting the light subtrees of $T^*$ requires $\eps^{-O(d)}\cdot n \log n$ time.
We conclude that the overall running time required to build the ultimate spanner $\cH^*$ from Section 5 (which  proves Theorem 1.1) 
is at most $\eps^{-O(d)}\cdot (n \log n + kn)$.

\ignore{
\noindent \emph{Running Time.} All the subroutines that our construction uses
can be built within $O(n \log n)$ time \cite{Rod07,GR082,SE10}.
Hence, we disregard the running time analysis,
except for places which require clarification.

Given the standard net-tree spanner construction and , our construction
can be implemented within $O(k n)$   time in a rather straightforward manner.
Since the standard net-tree spanner can be built within time $O(n \log n)$ \cite{Rod07,GR082},
and since the same amount of time suffices to build the 1-spanners of \cite{SE10}, the overall running time of our construction is $O(n \log n +kn)$.
We remark that for low-dimensional Euclidean metrics, some variants of the net-tree (such as the fair split tree of \cite{CK92}) can
be built within time $O(n \log n)$ in the algebraic computation tree model. Consequently, the basic spanner construction 
can also be built within time $O(n \log n)$ in this model.
The   time bound $O(n \log n)$ for the 1-spanners of \cite{SE10} also applies to this model.
Therefore, for low-dimensional Euclidean metrics, the time bound of our construction applies to this model as well.
For arbitrary doubling metrics, one should also need a rounding operation (to allow one to find the $i$ for which
$2^i < x \le 2^{i+1}$, for any $x$) \cite{Gottlieb13}.
}

\vspace{0.13in}
\noindent
{\bf To summarize:} The spanner $\cH^*$ of Section 5 proves Theorem 1.1.

\section*{Acknowledgments}
The author is grateful to Michael Elkin for helpful and timely discussions.


\begin{thebibliography}{10}\setlength{\itemsep}{-1ex}\small

\bibitem{ABN11}
I.~Abraham, Y.~Bartal, and O.~Neiman.
\newblock Advances in metric embedding theory.
\newblock {\em Advances in Mathematics}, 228(6):3026–3126, 2011.

\bibitem{AM04}
I.~Abraham and D.~Malkhi.
\newblock Compact routing on {E}uclidian metrics.
\newblock In {\em Proc. of 23rd Annual Symp. on Principles of Distributed
  Computing}, pages 141--149, 2004.

\bibitem{AWY05}
P.~K. Agarwal, Y.~Wang, and P.~Yin.
\newblock Lower bound for sparse {E}uclidean spanners.
\newblock In {\em Proc. of 16th SODA}, pages 670--671, 2005.

\bibitem{ADDJS93}
I.~Alth$\ddot{\mbox{o}}$fer, G.~Das, D.~P. Dobkin, D.~Joseph, and J.~Soares.
\newblock On sparse spanners of weighted graphs.
\newblock {\em Discrete \& Computational Geometry}, 9:81--100, 1993.

\bibitem{ADMSS95}
S.~Arya, G.~Das, D.~M. Mount, J.~S. Salowe, and M.~H.~M. Smid.
\newblock {E}uclidean spanners: short, thin, and lanky.
\newblock In {\em Proc. of 27th STOC}, pages 489--498, 1995.

\bibitem{AMS94}
S.~Arya, D.~M. Mount, and M.~H.~M. Smid.
\newblock Randomized and deterministic algorithms for geometric spanners of
  small diameter.
\newblock In {\em Proc. of 35th FOCS}, pages 703--712, 1994.

\bibitem{AS94}
S.~Arya and M.~H.~M. Smid.
\newblock Efficient construction of a bounded degree spanner with low weight.
\newblock In {\em Proc. of 2nd ESA}, pages 48--59, 1994.

\bibitem{Ass83}
P.~Assouad.
\newblock Plongements lipschitziens dans ${{\mathbb R}}^n$.
\newblock {\em Bull. Soc. Math. France}, 111(4):429–448, 1983.

\bibitem{BGK12}
Y.~Bartal, L.~Gottlieb, and R.~Krauthgamer.
\newblock The traveling salesman problem: low-dimensionality implies a
  polynomial time approximation scheme.
\newblock In {\em Proc. of 44th STOC}, pages 663--672, 2012.

\bibitem{CK92}
P.~B. Callahan and S.~R. Kosaraju.
\newblock A decomposition of multi-dimensional point-sets with applications to
  $k$-nearest-neighbors and $n$-body potential fields.
\newblock In {\em Proc. of 24th STOC}, pages 546--556, 1992.

\bibitem{CG06}
H.~T.-H. Chan and A.~Gupta.
\newblock Small hop-diameter sparse spanners for doubling metrics.
\newblock In {\em Proc. of 17th SODA}, pages 70--78, 2006.

\bibitem{CGMZ05}
H.~T.-H. Chan, A.~Gupta, B.~M. Maggs, and S.~Zhou.
\newblock On hierarchical routing in doubling metrics.
\newblock In {\em Proc. of 16th SODA}, pages 762--771, 2005.

\bibitem{CLN12a}
T.-H.~H. Chan, M.~Li, and L.~Ning.
\newblock Sparse fault-tolerant spanners for doubling metrics with bounded
  hop-diameter or degree.
\newblock In {\em ICALP (1)}, pages 182--193, 2012.

\bibitem{CLN12}
T.-H.~H. Chan, M.~Li, and L.~Ning.
\newblock Incubators vs zombies: Fault-tolerant, short, thin and lanky spanners
  for doubling metrics.
\newblock {\em {T}echnical Report, CoRR abs/1207.0892}, July, 2012.

\bibitem{CLNS13}
T.-H.~H. Chan, M.~Li, L.~Ning, and S.~Solomon.
\newblock New doubling spanners: Better and simpler.
\newblock In {\em Proc. of 40th ICALP}, 2013 (to appear).

\bibitem{CDNS92}
B.~Chandra, G.~Das, G.~Narasimhan, and J.~Soares.
\newblock New sparseness results on graph spanners.
\newblock In {\em Proc. of 8th SOCG}, pages 192--201, 1992.

\bibitem{CDS01}
D.~Z. Chen, G.~Das, and M.~H.~M. Smid.
\newblock Lower bounds for computing geometric spanners and approximate
  shortest paths.
\newblock {\em Discrete Applied Mathematics}, 110(2-3):151--167, 2001.

\bibitem{Chew86}
L.~P. Chew.
\newblock There is a planar graph almost as good as the complete graph.
\newblock In {\em Proc. of 2nd SOCG}, pages 169--177, 1986.

\bibitem{Clark87}
K.~L. Clarkson.
\newblock Approximation algorithms for shortest path motion planning.
\newblock In {\em Proc. of 19th STOC}, pages 56--65, 1987.

\bibitem{Clark99}
K.~L. Clarkson.
\newblock Nearest neighbor queries in metric spaces.
\newblock {\em Discrete Comput. Geom.}, 110(1):63–93, 1999.

\bibitem{CZ03}
A.~Czumaj and H.~Zhao.
\newblock Fault-tolerant geometric spanners.
\newblock In {\em Proc. of 19th SoCG}, pages 1--10, 2003.

\bibitem{DHN93}
G.~Das, P.~J. Heffernan, and G.~Narasimhan.
\newblock Optimally sparse spanners in 3-dimensional {E}uclidean space.
\newblock In {\em Proc. of 9th SOCG}, pages 53--62, 1993.

\bibitem{DN94}
G.~Das and G.~Narasimhan.
\newblock A fast algorithm for constructing sparse {E}uclidean spanners.
\newblock In {\em Proc. of 10th SOCG}, pages 132--139, 1994.

\bibitem{DES09tech}
Y.~Dinitz, M.~Elkin, and S.~Solomon.
\newblock Low-light trees, and tight lower bounds for {E}uclidean spanners.
\newblock {\em Discrete \& Computational Geometry}, 43(4):736--783, 2010.

\bibitem{ES13}
M.~Elkin and S.~Solomon.
\newblock Optimal {E}uclidean spanners: really short, thin and lanky.
\newblock In {\em STOC}, 2013 (to appear).

\bibitem{GGN04}
J.~Gao, L.~J. Guibas, and A.~Nguyen.
\newblock Deformable spanners and applications.
\newblock In {\em Proc. of 20th SoCG}, pages 190--199, 2004.

\bibitem{Gottlieb13}
L.~Gottlieb.
\newblock Private communication, March 2013.

\bibitem{GKK12}
L.~Gottlieb, A.~Kontorovich, and R.~Krauthgamer.
\newblock Efficient regression in metric spaces via approximate lipschitz
  extension.
\newblock {\em CoRR}, abs/1111.4470, 2011.

\bibitem{GR081}
L.~Gottlieb and L.~Roditty.
\newblock Improved algorithms for fully dynamic geometric spanners and
  geometric routing.
\newblock In {\em Proc. of 19th SODA}, pages 591--600, 2008.

\bibitem{GR082}
L.~Gottlieb and L.~Roditty.
\newblock An optimal dynamic spanner for doubling metric spaces.
\newblock In {\em Proc. of 16th ESA}, pages 478--489, 2008.
\newblock Another version of this paper is available via
  \url{http://cs.nyu.edu/~adi/spanner2.pdf}.

\bibitem{GLN02}
J.~Gudmundsson, C.~Levcopoulos, and G.~Narasimhan.
\newblock Fast greedy algorithms for constructing sparse geometric spanners.
\newblock {\em SIAM J. Comput.}, 31(5):1479--1500, 2002.

\bibitem{GKL03}
A.~Gupta, R.~Krauthgamer, and J.~R. Lee.
\newblock Bounded geometries, fractals, and low-distortion embeddings.
\newblock In {\em Proc. of 44th FOCS}, page 534–543, 2003.

\bibitem{HPM06}
S.~Har-Peled and M.~Mendel.
\newblock Fast construction of nets in low-dimensional metrics and their
  applications.
\newblock {\em SIAM J. Comput.}, 35(5):1148--1184, 2006.

\bibitem{KL13}
S.~Kapoor and X.~Li.
\newblock Efficient construction of spanners in $d$-dimensions.
\newblock {\em {T}echnical Report, CoRR abs/1303.7217}, March, 2013.
\newblock Corresponds to an unpublished manuscript (the first draft is from
  November 2007), which is currently under review.

\bibitem{KG92}
J.~M. Keil and C.~A. Gutwin.
\newblock Classes of graphs which approximate the complete {E}uclidean graph.
\newblock {\em Discrete \& Computational Geometry}, 7:13--28, 1992.

\bibitem{KL04}
R.~Krauthgamer and J.~R. Lee.
\newblock Navigating nets: Simple algorithms for proximity search.
\newblock In {\em Proc. of 15th SODA}, page 791–801, 2004.

\bibitem{LNS98}
C.~Levcopoulos, G.~Narasimhan, and M.~H.~M. Smid.
\newblock Efficient algorithms for constructing fault-tolerant geometric
  spanners.
\newblock In {\em Proc. of 30th STOC}, pages 186--195, 1998.

\bibitem{Luk299}
T.~Lukovszki.
\newblock New results of fault tolerant geometric spanners.
\newblock In {\em Proc. of 6th WADS}, pages 193--204, 1999.

\bibitem{NS07}
G.~Narasimhan and M.~Smid.
\newblock {\em Geometric Spanner Networks}.
\newblock Cambridge University Press, 2007.

\bibitem{Rod07}
L.~Roditty.
\newblock Fully dynamic geometric spanners.
\newblock In {\em Proc. of 23rd SoCG}, pages 373--380, 2007.

\bibitem{Sal91}
J.~S. Salowe.
\newblock Construction of multidimensional spanner graphs, with applications to
  minimum spanning trees.
\newblock In {\em Proc. of 7th SoCG}, pages 256--261, 1991.

\bibitem{Smid09}
M.~H.~M. Smid.
\newblock The weak gap property in metric spaces of bounded doubling dimension.
\newblock In {\em Proc. of Efficient Algorithms}, pages 275--289, 2009.

\bibitem{Sol13}
S.~Solomon.
\newblock From hierarchical partitions to hierarchical covers: Optimal
  fault-tolerant spanners for doubling metrics.
\newblock Available via \url{http://cs.bgu.ac.il/~shayso/optFTspanner.pdf}.

\bibitem{Sol12}
S.~Solomon.
\newblock Fault-tolerant spanners for doubling metrics: Better and simpler.
\newblock {\em {T}echnical Report, CoRR abs/1207.7040}, July, 2012.

\bibitem{SE10}
S.~Solomon and M.~Elkin.
\newblock Balancing degree, diameter and weight in {E}uclidean spanners.
\newblock In {\em Proc. of 18th ESA}, pages 48--59, 2010.

\bibitem{Tal04}
K.~Talwar.
\newblock Bypassing the embedding: algorithms for low dimensional metrics.
\newblock In {\em Proc. of 36th STOC}, page 281–290, 2004.

\bibitem{Vai91}
P.~M. Vaidya.
\newblock A sparse graph almost as good as the complete graph on points in $k$
  dimensions.
\newblock {\em Discrete \& Computational Geometry}, 6:369--381, 1991.

\end{thebibliography}







\end {document}